\documentclass{article}

\usepackage{hyperref}
\usepackage{amsmath}
\usepackage{mathtools}
\usepackage{amssymb}
\usepackage{amsthm}
\usepackage{dsfont}
\usepackage{stmaryrd}
\usepackage{xcolor}
\usepackage{listings}
\usepackage{booktabs}
\usepackage{microtype}
\usepackage{fullpage}
\usepackage{authblk}
\usepackage{aliascnt}
\usepackage{tikz}
\usepackage{needspace}

\newtheorem{theorem}{Theorem}[section]

\newaliascnt{lemma}{theorem}
\newtheorem{lemma}[lemma]{Lemma}
\aliascntresetthe{lemma}

\newaliascnt{corollary}{theorem}

\aliascntresetthe{corollary}

\newaliascnt{definition}{theorem}
\newtheorem{definition}[definition]{Definition}
\aliascntresetthe{definition}

\newaliascnt{example}{theorem}
\newtheorem{example}[example]{Example}
\aliascntresetthe{example}

\newaliascnt{remark}{theorem}
\newtheorem{remark}[remark]{Remark}
\aliascntresetthe{remark}

\usepackage[nameinlink]{cleveref}
\crefname{theorem}{Theorem}{Theorems}
\Crefname{theorem}{Theorem}{Theorems}
\crefname{lemma}{Lemma}{Lemmas}
\Crefname{lemma}{Lemma}{Lemmas}
\crefname{corollary}{Corollary}{Corollaries}
\Crefname{corollary}{Corollary}{Corollaries}
\crefname{definition}{Definition}{Definitions}
\Crefname{definition}{Definition}{Definitions}
\crefname{example}{Example}{Examples}
\Crefname{example}{Example}{Examples}
\crefname{remark}{Remark}{Remarks}
\Crefname{remark}{Remark}{Remarks}

\newcommand{\R}{\mathbb{R}}

\newcommand{\FlagAlg}[1]{\mathcal{A}^{#1}}
\newcommand{\poshom}[1]{\mathrm{Hom}^+(\mathcal{A}^{#1}, \mathbb{R})}
\newcommand{\den}[2]{p(#1;\,#2)}
\newcommand{\tdensity}[2]{\pi(#1;\,#2)}
\newcommand{\lean}[1]{\texttt{#1}}
\newcommand{\emptyt}{\emptyset_{\mathrm{t}}}
\newcommand{\leansmul}{\mathbin{\vcenter{\hbox{$\scriptstyle\bullet$}}}}
\newcommand{\Ext}{\operatorname{Ext}}

\newcommand{\commentout}[1]{}
\newcommand\DOI[1]{DOI: \href{https://doi.org/#1}{#1}}

\tikzset{
  graph vertex/.style={circle, draw, fill=white, inner sep=0pt, minimum size=4.5pt},
  graph root/.style={circle, draw, fill=black, inner sep=0pt, minimum size=4.5pt},
  graph edge/.style={line width=0.45pt},
  graph nonedge/.style={line width=0.45pt, dashed},
  inline/.style={scale=0.20, baseline=-0.5ex},
  fv/.style={circle, fill=white, draw, inner sep=0pt, minimum size=3pt},
  fr/.style={circle, fill=black, inner sep=0pt, minimum size=3pt}
}

\newcommand{\fKtwobull}{%
  \begin{tikzpicture}[inline]
    \draw (-1,0) -- (1,0);
    \node[fr] at (-1,0) {};
    \node[fv] at (1,0) {};
  \end{tikzpicture}%
}
\newcommand{\fbarKtwobull}{%
  \begin{tikzpicture}[inline]
    \node[fr] at (-1,0) {};
    \node[fv] at (1,0) {};
  \end{tikzpicture}%
}
\newcommand{\fbarKthree}{%
  \begin{tikzpicture}[inline]
    \node[fv] at (90:1) {};
    \node[fv] at (210:1) {};
    \node[fv] at (330:1) {};
  \end{tikzpicture}%
}
\newcommand{\fbarPthree}{%
  \begin{tikzpicture}[inline]
    \draw (210:1) -- (330:1);
    \node[fv] at (90:1) {};
    \node[fv] at (210:1) {};
    \node[fv] at (330:1) {};
  \end{tikzpicture}%
}
\newcommand{\fPthree}{%
  \begin{tikzpicture}[inline]
    \draw (210:1) -- (90:1) -- (330:1);
    \node[fv] at (90:1) {};
    \node[fv] at (210:1) {};
    \node[fv] at (330:1) {};
  \end{tikzpicture}%
}

\newcommand{\fbarKthreebull}{%
  \begin{tikzpicture}[inline]
    \node[fr] at (90:1) {};
    \node[fv] at (210:1) {};
    \node[fv] at (330:1) {};
  \end{tikzpicture}%
}
\newcommand{\fEthreebull}{%
  \begin{tikzpicture}[inline]
    \draw (210:1) -- (330:1);
    \node[fr] at (90:1) {};
    \node[fv] at (210:1) {};
    \node[fv] at (330:1) {};
  \end{tikzpicture}%
}
\newcommand{\fbarPthreebull}{%
  \begin{tikzpicture}[inline]
    \draw (90:1) -- (330:1);
    \node[fr] at (90:1) {};
    \node[fv] at (210:1) {};
    \node[fv] at (330:1) {};
  \end{tikzpicture}%
}
\newcommand{\fPthreebull}{%
  \begin{tikzpicture}[inline]
    \draw (90:1) -- (330:1) -- (210:1);
    \node[fr] at (90:1) {};
    \node[fv] at (210:1) {};
    \node[fv] at (330:1) {};
  \end{tikzpicture}%
}
\newcommand{\fKonetwobull}{%
  \begin{tikzpicture}[inline]
    \draw (90:1) -- (210:1);
    \draw (90:1) -- (330:1);
    \node[fr] at (90:1) {};
    \node[fv] at (210:1) {};
    \node[fv] at (330:1) {};
  \end{tikzpicture}%
}

\newcommand{\fCfivebull}{%
  \begin{tikzpicture}[inline]
    \draw (90:1) -- (18:1) -- (-54:1) -- (-126:1) -- (162:1) -- cycle;
    \node[fr] at (90:1) {};
    \node[fv] at (18:1) {};
    \node[fv] at (-54:1) {};
    \node[fv] at (-126:1) {};
    \node[fv] at (162:1) {};
  \end{tikzpicture}%
}

\lstdefinelanguage{Lean4}{
  keywords={def,theorem,lemma,instance,structure,class,import,open,namespace,
             end,where,fun,let,have,show,by,match,with,if,then,else,
             return,do,for,in,noncomputable,abbrev,variable,section,
             native_decide,decide,norm_num,simp,ring,linarith,omega,
             exact,apply,rw,calc,intro,constructor,use,refine,obtain,
             rcases,cases,induction,assumption,contradiction,trivial,rfl,
             macro,elab,notation,example,
             infix,infixl,infixr,prefix,postfix},
  keywordstyle=\color{blue!70!black}\bfseries,
  comment=[l]{--},
  morecomment=[l]{/--},
  commentstyle=\color{gray}\itshape,
  stringstyle=\color{orange!80!black},
  morestring=[b]",
  showstringspaces=false,
  sensitive=true,
  basicstyle=\ttfamily\small,
  breaklines=true,
  columns=fullflexible,
  escapeinside={(*@}{@*)},
  literate=
    {->}{{$\to$}}2
    {<-}{{$\leftarrow$}}2
    {<=}{{$\leq$}}1
    {>=}{{$\geq$}}1
    {alpha}{{$\alpha$}}1
    {forall}{{$\forall$}}1
    {exists}{{$\exists$}}1
}
\title{Formalizing Flag Algebras in Lean}
\date{\today}

\author[1,4]{Gyeongwon Jeong\thanks{These authors are joint first authors of this work.}}
\author[1,4]{Seonghun Park\protect\footnotemark[1]}
\author[1]{Jihoon Hyun}
\author[2,3]{Sang-il Oum}
\author[4]{Hongseok~Yang}
\affil[1]{\small School of Computing, KAIST, Daejeon, Korea}
\affil[2]{\small Discrete Mathematics Group, Institute for Basic Science (IBS), Daejeon, Korea}
\affil[3]{\small Department of Mathematical Sciences, KAIST, Daejeon, Korea}
\affil[4]{\small School of Computational Sciences, Korea Institute for Advanced Study (KIAS), Seoul,~Korea}
\affil[ ]{E-mail addresses: \texttt{jgyw0910@kaist.ac.kr},
\texttt{hun57@kaist.ac.kr},
\texttt{qawbecrdtey@kaist.ac.kr},
\texttt{sagil@ibs.re.kr},
\texttt{hongseokyang@kias.re.kr}}

\begin{document}
\maketitle

\begin{abstract}
Razborov's flag algebra method is a powerful tool for proving asymptotic
inequalities in extremal graph theory, often reducing the task to finding a
finite certificate by semidefinite programming.  The method is
particularly well suited to constrained problems, where one seeks bounds for
graphs that avoid specified finite patterns, such as triangles.  We present a
machine-checked formalization of the method for finite simple graphs, together
with a certificate-to-proof compiler that turns externally generated
certificate data into algebraic proofs checked by Lean.

The formalization covers the mathematical foundations of the method: partially
labeled graphs, their densities in large graphs, the quotient algebra of
density expressions, graph-limit semantics through positive homomorphisms, and
the downward operators used to average out labels.  Our compiler treats the
output of the external semidefinite programming search as candidate
certificate data rather than as trusted input: lists of flags,
target-density data, and rational
positive-semidefinite matrices.  It then directs Lean to compute the required
density and multiplication facts independently and check those computations
against the mathematical definitions, verify positive semidefiniteness exactly
over $\mathbb{Q}$, and carry out the algebraic normalization steps that occur in
flag-algebra proofs.  In our case studies, the compiler yields machine-checked
proofs of seven Tur\'an-type upper bounds.  These include the upper bounds in
Mantel's theorem and the Erd\H{o}s pentagon theorem; no proof of the latter
that avoids flag algebras is known.  They also include a $C_4$-density bound
for triangle-free graphs and edge-density bounds for $K_4$-free, $K_5$-free,
and $C_5$-free graphs.  We use the underlying formalization directly as well,
without the compiler, to give a separate proof of Mantel's upper bound,
formalize the matching constructions that complete both exact
Tur\'an-density results, and prove two inequalities of Goodman.

The design of our constrained semantics also prompted a separate
meta-theoretic investigation comparing two ways of imposing graph constraints:
building a hereditary constraint into the flag algebra from the start, or
testing inequalities afterward on constrained graph limits with labels chosen
at random.  We briefly state the resulting root-plantability criterion
that characterizes when
these approaches agree; a forthcoming paper will present the complete
mathematical and formal account.
\end{abstract}

\section{Introduction}
\label{sec:intro}

\paragraph{Why flag algebras matter.}
Many central questions in extremal combinatorics ask how often one finite
pattern can occur when another pattern is forbidden.  Let $F$ be a finite
simple graph and $\mathcal{H}$ a finite collection of finite simple graphs.  We write
\[
  \mathrm{ex}(n,\,F;\,\mathcal{H})
  :=
  \max\bigl\{
    \text{number of induced copies of }F\text{ in }G
    \mid |V(G)|=n,\; G\text{ is }\mathcal{H}\text{-free}
  \bigr\},
\]
where an \emph{induced copy} of $F$ in $G$ is a vertex subset
$S\subseteq V(G)$ with $|S|=|V(F)|$ and $G[S]\cong F$, and
\emph{\(\mathcal{H}\)-free} means that $G$ contains no copy of any
member of $\mathcal{H}$ as a (not necessarily induced) subgraph.  We obtain the
corresponding asymptotic quantity by normalizing
$\mathrm{ex}(n,\,F;\,\mathcal{H})$ by the total number
$\binom{n}{|V(F)|}$ of $|V(F)|$-vertex subsets and taking the limit; we call the
result the \emph{Tur\'an density} of $F$ with respect to $\mathcal{H}$:
\[
  \tdensity{F}{\mathcal{H}}
  :=
  \lim_{n\to\infty}
  \frac{\mathrm{ex}(n,\,F;\,\mathcal{H})}{\binom{n}{|V(F)|}}.
\]
For example, Mantel's theorem~\cite{mantel} states that a triangle-free graph on $n$
vertices has at most $\lfloor n^2/4\rfloor$ edges.  Equivalently,
$\mathrm{ex}(n,\,K_2;\,K_3)=\lfloor n^2/4\rfloor$, and the bound is achieved by the
complete bipartite graph $K_{\lfloor n/2\rfloor,\lceil n/2\rceil}$.  The
asymptotic value
\[
  \tdensity{K_2}{K_3}
  = \lim_{n\to\infty}\frac{\lfloor n^2/4\rfloor}{\binom{n}{2}}
  = \frac{1}{2}
\]
records that in any sufficiently large triangle-free graph, at most half of
all vertex pairs are edges.  (Whenever $\mathcal{H}=\{H\}$ is a singleton, we
write $H$ in place of $\mathcal{H}$; so $K_3$ above abbreviates $\{K_3\}$.)

Determining $\tdensity{F}{\mathcal{H}}$ is hard because it requires controlling
\emph{all} sufficiently large \(\mathcal{H}\)-free graphs: a valid upper
bound must guarantee that, asymptotically, the proportion of induced copies of
$F$ never exceeds the claimed value, regardless of the size of the host graph.  No
finite collection of examples suffices, and the extremal graphs achieving the
maximum may change unpredictably as $n$ grows.

Razborov's flag algebra method~\cite{razborov2007flag} provides a proof technique
for exactly this setting: it reduces the task of proving an asymptotic bound
for an infinite class of graphs to checking finitely many algebraic conditions.
A semidefinite programming (SDP) solver proposes a finite package of data
intended to satisfy these conditions; we call this package a candidate
certificate.  The flag algebra
framework explains why a candidate certificate that passes the checks
establishes the desired bound for the whole class.

This combination of mathematical abstraction and computer search has been
unusually effective.  It proved the Erd\H{o}s pentagon
theorem~\cite{hatami2012,grzesik2012}, concerning the maximum asymptotic
density of pentagons in triangle-free graphs, and determined the minimum
triangle density for a given edge density~\cite{razborov2008triangles}.  It
determined the maximum induced $C_5$ density~\cite{balogh2016induced5} and
settled the rainbow-triangle density problem~\cite{balogh2017rainbow}.  It has
also produced the best known upper bound on Tur\'an's tetrahedron
problem~\cite{razborov2010tetrahedron} and several exact
Tur\'an-density results for 3-uniform hypergraphs~\cite{baber2012turan}.

\paragraph{Formalizing the method.}
We first formalize in Lean the mathematical foundations of Razborov's flag
algebra method for finite simple graphs.  This development includes partially
labeled graphs and their densities in large graphs, an algebra of density
expressions modulo the identities that relate different graph sizes, positive
homomorphisms representing graph limits, and the downward operator that turns
a labeled density expression into an unlabeled one.  We call this part of the
development the \emph{specification layer} (\Cref{sec:abstract}).  Its
definitions follow the standard mathematics closely: flags are isomorphism
classes of labeled graphs, densities are sampling probabilities, and algebraic
expressions are identified when expansion identities give them the same
asymptotic meaning.  This layer provides the mathematical interface needed to
express flag-algebra proofs in Lean much as they appear on paper.

\paragraph{From certificates to formal proofs.}
Stating the method is not the same as checking a concrete application.  Each
application is supported by a candidate certificate that is compact to state
but potentially costly to audit: the Erd\H{o}s pentagon certificate relies
on thousands of exact density values and on rational matrices whose
positive semidefiniteness must be established exactly.  Such data are
impractical to verify by hand, yet a formal proof should neither trust them nor
require thousands of individually written proofs.  We close this gap with two
further layers.  The
\emph{reflection layer} (\Cref{sec:reflection}) supplies executable graph
representations, together with adequacy theorems proving that the computations
they support agree with the specification.  Generation commands then
specialize these theorems at elaboration time to produce the verified density,
downward, and multiplication identities required by a certificate.  The
\emph{automation layer} (\Cref{sec:flagmatic}) contains the
certificate-to-proof compiler.  It checks exact rational $LDL^\top$
factorizations to prove positive semidefiniteness over $\mathbb{R}$, expands
the target when necessary, and uses the reflected identities and custom
tactics to normalize the certificate calculation under the forbidden-graph
assumption and assemble the final bound.

\paragraph{Constrained semantics.}
Our formalization also makes a choice about how to impose a forbidden-graph
condition.  In the usual built-in approach, one constructs the flag algebra
using only \(H\)-free graphs.  Our implementation instead retains the ambient
algebra of all finite graphs and imposes \(H\)-freeness when an inequality is
interpreted.  For a typed inequality, \lean{forbidLE} considers \(H\)-free
empty-type positive homomorphisms, chooses an occurrence of the flag type
uniformly at random, uses its vertices as labels, and requires the inequality
to hold with probability one.  \Cref{sec:metatheory} calls this ensemble
semantics and compares it with the built-in (or quotient) semantics.  Every
quotient-valid inequality is ensemble-valid.  The two orders agree at the
empty flag type, but they can differ at a nonempty flag type; they agree for all
inequalities at a fixed flag type exactly when the constrained class is
root-plantable at that flag type.  We prove that blow-up closure guarantees this
agreement and give a \(C_4\)-free counterexample showing that heredity alone
does not.

\paragraph{Contributions.}
The paper makes four main contributions.  First, it formalizes in Lean the core
mathematical theory of flag algebras for finite simple graphs: flags and their
densities, the quotient algebra, positive homomorphisms and semantic order,
and the downward operator and random-extension machinery.  Second, it develops
an executable reflection layer, proves that its computations agree with the
mathematical specification, and uses it to build a certificate-to-proof compiler
that translates externally generated rational certificate data into algebraic
proofs checked by Lean.  Third, it evaluates the compiler on seven upper-bound
certificates, including the bounds needed for Mantel's theorem and the
Erd\H{o}s pentagon theorem, as well as a bound on the \(C_4\)-density of
triangle-free graphs.  A separate direct proof of Mantel's theorem shows how
to work with the flag-algebra formalization without the compiler.  Matching
lower bounds for Mantel's theorem and the Erd\H{o}s pentagon theorem complete
both Tur\'an-density results, while two inequalities of Goodman further
illustrate uses of the same infrastructure outside the certificate pipeline.
Fourth, it compares the
semantics obtained by building a hereditary constraint into the flag algebra
with the ensemble semantics used by our implementation.  This comparison
yields the root-plantability criterion, a sufficient condition based on
blow-up closure, and a \(C_4\)-free counterexample.  The meta-theory is
formalized in a separate, automatically generated Lean development that
extends the manually written flag-algebra library.  The complete Lean
development, including the certificate-to-proof compiler and this
autoformalized meta-theory, is publicly available at
\url{https://github.com/taeyool/lean-flag-algebras-release}.

\paragraph{Organization.}
\Cref{sec:background} reviews the mathematical background of flag algebras and
establishes the notation used throughout the paper.  \Cref{sec:abstract}
presents the specification layer, including its semantic treatment of
forbidden-graph assumptions.  \Cref{sec:reflection} introduces executable
graph representations, proves that their computations agree with the
specification, and explains how elaboration-time commands generate flags and
their theorems.  \Cref{sec:flagmatic} describes the
certificate-to-proof compiler, its exact checks of positive semidefiniteness
and proof automation, and its evaluation on seven certificates.
\Cref{sec:lowerbounds} illustrates direct uses of the underlying formalization
beyond certificate compilation: a direct proof of Mantel's theorem, matching
lower-bound constructions for Mantel's theorem and the Erd\H{o}s pentagon
theorem, and two inequalities of Goodman.  \Cref{sec:obstacles} discusses
the main engineering obstacles and a recurring bijective proof pattern for
finite counting identities.  \Cref{sec:metatheory} presents the core results
of our meta-theory of constrained semantics and describes its separate
autoformalization, while
\Cref{sec:unsure} records the remaining design questions.  Finally,
\Cref{sec:related} discusses related work, \Cref{sec:conclusion} presents our
conclusions, and \Cref{sec:compile-times} reports compilation times for the seven
compiler examples.

\section{Background: Flag Algebras}
\label{sec:background}

This section recalls the mathematical definitions underlying flag algebras,
following Razborov~\cite{razborov2007flag}, and establishes the notation used
later in the paper.  No background in flag algebras or extremal combinatorics
is assumed.  For a non-negative integer $n$, we write $[n]$ for the set of
positive integers at most $n$.  We consider only undirected, finite, and simple
graphs; for a graph $G$, its vertex and edge sets are denoted by $V(G)$ and $E(G)$,
respectively.
We use the standard graph names: $K_n$ is the complete graph on $n$ vertices,
$P_n$ the path and $C_n$ the cycle on $n$ vertices, $\overline{G}$ the
complement of $G$, and $K_{a,b}$ the complete bipartite graph with parts of
sizes $a$ and $b$.

We fix a (possibly empty) finite set $\mathcal{H}$ of forbidden graphs.  In this
paper, an unqualified \emph{subgraph} is an ordinary, not necessarily induced,
subgraph.  A graph is \emph{\(\mathcal{H}\)-free} if it contains no subgraph
isomorphic to any member of $\mathcal{H}$.  All flag types, flags, and flag
algebras defined below are relative to this fixed $\mathcal{H}$.

\subsection{Flag Types and Flags}

\paragraph{Flag Types.}
A flag type specifies the labeled part shared by all flags of that flag type.
Formally, a \emph{flag type} of size $k$ is an \(\mathcal{H}\)-free graph
\(\sigma\) with vertex set $[k]$.  The \emph{empty flag type} $\emptyset$, of
size zero, is the unique graph on the empty vertex set.
We use the term \emph{flag type} rather than Razborov's original \emph{type}
to avoid confusion with Lean's type system in later sections.

\paragraph{Flags.}
For a graph $G$ and a set $S\subseteq V(G)$, let $G[S]$ denote the
\emph{induced subgraph} of $G$ on $S$: its vertex set is $S$, and its edges are
exactly the edges of $G$ with both endpoints in $S$.
Fix a flag type $\sigma$ of size $k$.
A \emph{$\sigma$-flag} is a pair $F = (M, \theta_F)$ consisting of an
\(\mathcal{H}\)-free finite graph $M$ and an injective map
$\theta_F : [k] \hookrightarrow V(M)$ that induces a graph isomorphism from
$\sigma$ to $M[\operatorname{Im}(\theta_F)]$.  The \emph{size} of $F$ is
$|V(M)|$, and we write $V(F)$ for $V(M)$.
For each $i\in[k]$, the vertex $\theta_F(i)$ carries label $i$.  The vertices
in~$\operatorname{Im}(\theta_F)$ are therefore called the
\emph{labeled vertices} of $F$; they are also often called its \emph{roots}.
All remaining vertices are \emph{unlabeled}.
For a set $S \subseteq V(F)$ with $\operatorname{Im}(\theta_F) \subseteq S$,
we write $F[S]$ for the $\sigma$-flag $(M[S], \theta_F)$ and call it the
\emph{$\sigma$-flag induced by $F$ on $S$}.
Flags over the empty flag type $\emptyset$ are simply ordinary
\(\mathcal{H}\)-free graphs with no labeled vertices.
\Cref{fig:flag-examples} illustrates several flags of different flag types.

Two $\sigma$-flags $(M_1,\theta_1)$ and $(M_2,\theta_2)$ are
\emph{isomorphic}, denoted by $(M_1,\theta_1) \cong_\sigma (M_2,\theta_2)$,
if there is a graph isomorphism $\alpha: V(M_1)\to V(M_2)$
with $\alpha \circ \theta_1 = \theta_2$; this equation says that $\alpha$
preserves every label.  The set of isomorphism classes of
$\sigma$-flags of size $n$ is written $\mathcal{F}^\sigma_n$, and
$\mathcal{F}^\sigma = \bigcup_{n \geq k} \mathcal{F}^\sigma_n$.
Each $\mathcal{F}^\sigma_n$ is finite, although
$\mathcal{F}^\sigma$ itself is infinite.
When a flag type~$\sigma$ is fixed, we consistently call an individual flag
of that flag type a $\sigma$-flag.  We reserve the unqualified term \emph{flag} for
statements in which the flag type is unspecified, varies, or is irrelevant.

\begin{figure}[t]
  \centering
  \setlength{\tabcolsep}{1.2em}
  \renewcommand{\arraystretch}{1.3}
  \begin{tabular}{cccccc}
    \begin{tikzpicture}[scale=0.8, baseline=-0.5ex]
      \node[graph vertex] (a) at (0,0) {};
      \node[graph vertex] (b) at (0.8,0) {};
      \draw[graph edge] (a) -- (b);
    \end{tikzpicture}
    &
    \begin{tikzpicture}[scale=0.8, baseline=-0.5ex]
      \node[graph vertex] (a) at (90:0.65) {};
      \node[graph vertex] (b) at (210:0.65) {};
      \node[graph vertex] (c) at (330:0.65) {};
    \end{tikzpicture}
    &
    \begin{tikzpicture}[scale=0.8, baseline=-0.5ex]
      \node[graph vertex] (a) at (90:0.65) {};
      \node[graph vertex] (b) at (210:0.65) {};
      \node[graph vertex] (c) at (330:0.65) {};
      \draw[graph edge] (b) -- (c);
    \end{tikzpicture}
    &
    \begin{tikzpicture}[scale=0.8, baseline=-0.5ex]
      \node[graph vertex] (a) at (90:0.65) {};
      \node[graph vertex] (b) at (210:0.65) {};
      \node[graph vertex] (c) at (330:0.65) {};
      \draw[graph edge] (a) -- (b);
      \draw[graph edge] (a) -- (c);
    \end{tikzpicture}
    &
    \begin{tikzpicture}[scale=0.8, baseline=-0.5ex]
      \node[graph vertex] (a) at (90:0.65) {};
      \node[graph vertex] (b) at (210:0.65) {};
      \node[graph vertex] (c) at (330:0.65) {};
      \draw[graph edge] (a) -- (b);
      \draw[graph edge] (a) -- (c);
      \draw[graph edge] (b) -- (c);
    \end{tikzpicture}
    &
    \begin{tikzpicture}[scale=0.7, baseline=-0.5ex]
      \node[graph vertex] (v1) at (90:0.65) {};
      \node[graph vertex] (v2) at (18:0.65) {};
      \node[graph vertex] (v3) at (-54:0.65) {};
      \node[graph vertex] (v4) at (-126:0.65) {};
      \node[graph vertex] (v5) at (162:0.65) {};
      \draw[graph edge] (v1) -- (v2) -- (v3) -- (v4) -- (v5) -- (v1);
    \end{tikzpicture}
    \\
    $K_2$ & $\overline{K_3}$ & $\overline{P_3}$ & $P_3$ & $K_3$ & $C_5$ \\[0.8em]
    \begin{tikzpicture}[scale=0.8, baseline=-0.5ex]
      \node[graph root] (a) at (0,0) {};
      \node[graph vertex] (b) at (0.8,0) {};
      \draw[graph edge] (a) -- (b);
    \end{tikzpicture}
    &
    \begin{tikzpicture}[scale=0.8, baseline=-0.5ex]
      \node[graph root] (a) at (0,0) {};
      \node[graph vertex] (b) at (0.8,0) {};
    \end{tikzpicture}
    &
    \begin{tikzpicture}[scale=0.8, baseline=-0.5ex]
      \node[graph vertex] (a) at (90:0.65) {};
      \node[graph root] (b) at (210:0.65) {};
      \node[graph vertex] (c) at (330:0.65) {};
      \draw[graph edge] (b) -- (c);
    \end{tikzpicture}
    &
    \begin{tikzpicture}[scale=0.8, baseline=-0.5ex]
      \node[graph vertex] (a) at (90:0.65) {};
      \node[graph root] (b) at (210:0.65) {};
      \node[graph vertex] (c) at (330:0.65) {};
      \draw[graph edge] (b) -- (a);
      \draw[graph edge] (a) -- (c);
    \end{tikzpicture}
    &
    \begin{tikzpicture}[scale=0.8, baseline=-0.5ex]
      \node[graph vertex] (a) at (90:0.65) {};
      \node[graph root] (b) at (210:0.65) {};
      \node[graph vertex] (c) at (330:0.65) {};
      \draw[graph edge] (b) -- (a);
      \draw[graph edge] (b) -- (c);
    \end{tikzpicture}
    &
    \begin{tikzpicture}[scale=0.7, baseline=-0.5ex]
      \node[graph root] (v1) at (90:0.65) {};
      \node[graph vertex] (v2) at (18:0.65) {};
      \node[graph vertex] (v3) at (-54:0.65) {};
      \node[graph vertex] (v4) at (-126:0.65) {};
      \node[graph vertex] (v5) at (162:0.65) {};
      \draw[graph edge] (v1) -- (v2) -- (v3) -- (v4) -- (v5) -- (v1);
    \end{tikzpicture}
    \\
    $K_2^\bullet$ & $\overline{K_2}^\bullet$ & $\overline{P_3}^\bullet$ & $P_3^\bullet$
    & ${P_3'}^\bullet$ & $C_5^\bullet$ \\
  \end{tabular}
  \caption{Examples of flags when no graphs are forbidden
  ($\mathcal{H}=\varnothing$).  \emph{Top row}: $\emptyset$-flags, equivalently
  ordinary graphs, of sizes 2, 3, and 5.  \emph{Bottom row}: $\sigma$-flags of the
  unique one-vertex flag type $\sigma$.  The filled vertex carries label~1,
  and the open vertices are unlabeled.  The $\sigma$-flags $P_3^\bullet$ and
  ${P_3'}^\bullet$ have the same underlying path; $P_3^\bullet$ labels an
  endpoint, whereas ${P_3'}^\bullet$ labels the middle vertex.}
  \label{fig:flag-examples}
\end{figure}

\subsection{Subflag Densities}

In a density calculation, flags play two roles.  The \emph{pattern flag} is the
small local configuration whose occurrence is being measured, while the
\emph{host flag} is the larger flag in which we sample vertices.  These are
roles in the calculation, not different kinds of flags.  We use \(F\), or
\(F_i\), for pattern flags, \(G\) for host flags, and \(G'\) for an intermediate
host flag when a third flag variable is needed.

Let $\sigma$ be a flag type of size $k$.  

\paragraph{Single-flag density.}
For a $\sigma$-flag $F$ of size $m$
(the pattern $\sigma$-flag) and a $\sigma$-flag $G = (M, \theta_G)$ of size $n \geq m$
(the host $\sigma$-flag), the \emph{density of $F$ in $G$}, written $\den{F}{G}$, is
the probability that a uniformly random $(m-k)$-element subset of
$V(G) \setminus \operatorname{Im}(\theta_G)$, together with the $k$ labeled
vertices of $G$, induces a $\sigma$-flag \(F'\) with \(F' \cong_\sigma F\).
Explicitly:
\[
  \den{F}{G}
  \;:=\;
  \frac{1}{\binom{n-k}{m-k}}
    \cdot {\Bigl|\Big\{S \subseteq V(G)\setminus \operatorname{Im}(\theta_G)
                  ~\Big|~ |S|=m-k,\;
                  G[S \cup \operatorname{Im}(\theta_G)]
                  \cong_\sigma F \Big\}\Bigr|}.
\]
This value lies in $[0,1]$ and is invariant under
isomorphism of both $F$ and $G$.
When $n < m$, we set $\den{F}{G} := 0$ by convention.

\begin{example}[Subflag densities in $C_5^\bullet$]
  Among the four unlabeled vertices of
  $\fCfivebull$, exactly two are adjacent to the labeled one.  Hence
  \[
    \den{\fKtwobull}{\fCfivebull} = \frac{2}{4} = \frac12,
    \qquad
    \den{\fbarKtwobull}{\fCfivebull} = \frac{2}{4} = \frac12.
  \]
\end{example}

\paragraph{Chain rule.}
The density $\den{F}{G}$ can be computed in two equivalent ways: by sampling
$m-k$ unlabeled vertices directly from $G$, or by first sampling an
intermediate $\sigma$-flag $G'$ of some size $n'$ with $m \leq n' \leq n$ inside $G$ and
then sampling~$F$ inside $G'$.
These two procedures describe the same sampling experiment.  The resulting
identity is the single-flag case of Razborov's chain
rule~\cite[Lemma~2.2]{razborov2007flag}.

\begin{lemma}[Chain rule]\label{lem:chain-rule}
  For $F \in \mathcal{F}^\sigma_m$, $G \in \mathcal{F}^\sigma_n$, and $n'$
  with $m \leq n' \leq n$,
  \[
    \den{F}{G}
    \;=\;
    \sum_{G' \in \mathcal{F}^\sigma_{n'}} \den{F}{G'}\,\den{G'}{G}.
  \]
\end{lemma}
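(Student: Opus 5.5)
I would prove the identity by a double-counting argument on pairs of nested vertex sets, which is exactly the ``two sampling procedures describe the same experiment'' intuition made precise. Fix $G=(M,\theta_G)$ of size $n$, write $U := V(G)\setminus\operatorname{Im}(\theta_G)$ so $|U|=n-k$, and consider the set
\[
  \mathcal{P} := \bigl\{(S,T) \;\big|\; S\subseteq T\subseteq U,\ |S|=m-k,\ |T|=n'-k,\ G[S\cup\operatorname{Im}(\theta_G)]\cong_\sigma F\bigr\}.
\]
I will count $|\mathcal{P}|$ in two ways and divide by $\binom{n-k}{n'-k}\binom{n'-k}{m-k}$.

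\emph{First count (by $S$).} Each good $S$ (one with $G[S\cup\operatorname{Im}\theta_G]\cong_\sigma F$) extends to exactly $\binom{n-m}{n'-m}$ sets $T$, so $|\mathcal{P}| = \den{F}{G}\binom{n-k}{m-k}\binom{n-m}{n'-m}$. The standard identity $\binom{n-k}{m-k}\binom{n-m}{n'-m} = \binom{n-k}{n'-k}\binom{n'-k}{m-k}$ then gives $|\mathcal{P}| / \bigl(\binom{n-k}{n'-k}\binom{n'-k}{m-k}\bigr) = \den{F}{G}$.

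\emph{Second count (by $T$).} Group the $T$'s according to the isomorphism class $G'\in\mathcal{F}^\sigma_{n'}$ of the induced $\sigma$-flag $G[T\cup\operatorname{Im}\theta_G]$. For a fixed $T$ with $G[T\cup\operatorname{Im}\theta_G]\cong_\sigma G'$, the number of $S\subseteq T$ in $\mathcal{P}$ is $\den{F}{G'}\binom{n'-k}{m-k}$. Here one uses that induced subflags of induced subflags are induced subflags (i.e.\ $(G[T\cup\operatorname{Im}\theta])[S\cup\operatorname{Im}\theta]=G[S\cup\operatorname{Im}\theta]$, with the same labeling $\theta_G$) and that $\den{F}{\cdot}$ is invariant under $\sigma$-isomorphism of the host, which the paper records. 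The number of $T$ landing in class $G'$ is $\den{G'}{G}\binom{n-k}{n'-k}$ by definition. Summing over $G'$ (every $T$ lies in exactly one class, since $\mathcal{F}^\sigma_{n'}$ partitions the $n'$-vertex induced $\sigma$-flags) gives $|\mathcal{P}| = \binom{n-k}{n'-k}\binom{n'-k}{m-k}\sum_{G'}\den{F}{G'}\den{G'}{G}$, and comparing with the first count yields the lemma. The $\mathcal{H}$-freeness side condition causes no trouble, since induced subgraphs of $\mathcal{H}$-free graphs are $\mathcal{H}$-free, so every $G[T\cup\operatorname{Im}\theta_G]$ is indeed a $\sigma$-flag.

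The main obstacle is the second count. To carry it out rigorously (especially in Lean), I would need an explicit bijection between $\{S\subseteq T\}$ and subsets of the unlabeled part of the representative $G'$, transported along a chosen label-preserving isomorphism, together with isomorphism invariance of the density. The binomial identity and the edge cases ($m=n'$ or $n'=n$, where one factor is a Kronecker delta) are routine.
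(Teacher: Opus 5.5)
Your double count of the nested pairs $(S,T)$ is correct and is essentially the paper's route. The paper likewise proves the chain rule by an explicit bijection between the choices made by the direct and the staged sampling procedures. In the formal proof, the work you flag as the main obstacle is carried by that bijection together with isomorphism invariance of the density under transport to the representative $G'$.
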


\paragraph{Multi-flag density.}
For pattern $\sigma$-flags $F_1, \ldots, F_t$ of sizes $m_1, \ldots, m_t$ and
a host $\sigma$-flag $G = (M, \theta_G)$ of size $n$, we say $F_1, \ldots, F_t$
\emph{fit} in $G$ if
\[
  n - k \;\geq\; (m_1 - k) + \cdots + (m_t - k).
\]
When $F_1, \ldots, F_t$ fit in $G$, let $(S_1, \ldots, S_t)$ be a uniformly
random tuple of pairwise-disjoint subsets $S_i \subseteq V(G) \setminus
\operatorname{Im}(\theta_G)$ with $|S_i| = m_i - k$.
The \emph{density of $F_1, \ldots, F_t$ in $G$}, written $\den{F_1,\ldots,F_t}{G}$, is the
probability that $G[S_i \cup \operatorname{Im}(\theta_G)] \cong_\sigma F_i$
for every $i \in [t]$.
When $F_1, \ldots, F_t$ do not fit in $G$, we set
$\den{F_1,\ldots,F_t}{G} := 0$ by convention.

\subsection{The Flag Algebra}

Fix a flag type $\sigma$ of size $k$.
Informally, the flag algebra $\mathcal{A}^\sigma$ is an algebra\footnote{Here
``algebra'' means a vector space whose elements can also be multiplied.  In
this paper, the scalar coefficients are real numbers, multiplication is
bilinear and associative, there is a multiplicative unit, and the order of
multiplication does not matter: $fg=gf$.} of density expressions generated
by $\sigma$-flags.  Each $\sigma$-flag is a symbolic building block: it stands for the
density of that local labeled pattern in a large host graph.  The quotient
below records the identity of Lemma~\ref{lem:chain-rule} among these building
blocks.
Formally, let
$\mathbb{R}[\mathcal{F}^\sigma]$ be the free real vector space on
$\mathcal{F}^\sigma$, that is, the vector space of finite formal real linear
combinations of $\sigma$-flags.  For a $\sigma$-flag $F$, write $\mathbf e_F$
for its corresponding basis vector.
Define the \emph{zero space} $\mathcal{Z}^\sigma$ as the subspace generated
by all elements of the form
\begin{equation}
  \label{eq:zero-element}
  \mathbf e_F - \sum_{G \in \mathcal{F}^\sigma_n} \den{F}{G} \cdot \mathbf e_G,
  \qquad \text{for}\ F \in \mathcal{F}^\sigma_m \ \text{and}\ n \geq m.
\end{equation}
The \emph{flag algebra} is the quotient vector space
\[
  \mathcal{A}^\sigma \;:=\; \mathbb{R}[\mathcal{F}^\sigma] \,/\, \mathcal{Z}^\sigma.
\]
We write $[F]$ for the class of $\mathbf e_F$ in $\mathcal{A}^\sigma$.
Throughout the paper, capital letters $F,G$ denote finite flags, while
lowercase letters $f,g,h$ denote arbitrary flag-algebra elements.
For familiar named or pictorial flags, such as $K_2$ or
$\fKtwobull$, we retain the standard shorthand of omitting the brackets in
algebraic formulas; there the surrounding algebraic context makes the
canonical image unambiguous.

We call $\sum_{G \in \mathcal{F}^\sigma_n} \den{F}{G} \cdot \mathbf e_G$ the
\emph{expansion of $F$ at size $n$}.
By Lemma~\ref{lem:chain-rule}, a $\sigma$-flag $F \in \mathcal{F}^\sigma_m$ and
its expansion at any size $n \geq m$ have the same density in every host
$\sigma$-flag of size $\geq n$.
The quotient by $\mathcal{Z}^\sigma$ records this expansion identity as the
equality
\begin{equation}
  \label{eq:expansion-identity}
  [F]
  =
  \sum_{G \in \mathcal{F}^\sigma_n} \den{F}{G}\,[G]
  \qquad\text{in }\mathcal{A}^\sigma .
\end{equation}
We may therefore freely replace $[F]$ by its expansion at any size $n \geq m$.

\begin{example}[$K_2$ expanded at size $3$ in the triangle-free algebra]
  Working in the triangle-free flag algebra (i.e., $\mathcal{H} = \{K_3\}$),
  the $\emptyset$-flags of size $3$ are the three triangle-free graphs on three vertices:
  $\fbarKthree$, $\fbarPthree$, and $\fPthree$.
  The density of an edge in each of them is
  \[
    \den{K_2}{\fbarKthree}=0,\quad
    \den{K_2}{\fbarPthree}=\frac13,\quad
    \den{K_2}{\fPthree}=\frac23,
  \]
  so the zero-space quotient identifies
  \begin{equation}
    K_2
    \;=\;
    \frac13\,\fbarPthree
    + \frac23\,\fPthree
    \qquad\text{in } \mathcal{A}^{\emptyset}.
    \label{eq:k2-expansion}
  \end{equation}
\end{example}

\paragraph{Multiplication.}
For $\sigma$-flags $F_1 \in \mathcal{F}^\sigma_{m_1}$ and
$F_2 \in \mathcal{F}^\sigma_{m_2}$, choose an auxiliary size
$\ell \geq m_1 + m_2 - k$ and define
\begin{equation}
  \label{eq:mul}
  [F_1] \cdot [F_2]
  \;:=\;
  \sum_{G \in \mathcal{F}^\sigma_\ell} \den{F_1,F_2}{G} \cdot [G]
  \qquad\text{in } \mathcal{A}^\sigma.
\end{equation}
At first glance, this definition appears to depend on both the choice of $\ell$
and the choice of representatives for~$[F_1]$ and $[F_2]$.  The zero-space
quotient ensures neither matters: the product is independent of $\ell$ and
respects isomorphism classes of the input $\sigma$-flags.  Extending bilinearly
to all of $\mathcal{A}^\sigma$ gives an algebra over the real numbers.  Its
multiplication is commutative, meaning that $fg=gf$ for all
$f,g\in\mathcal{A}^\sigma$.
The unit of this algebra is the basis element represented by the $\sigma$-flag
$(\sigma, \operatorname{id}_{[k]})$ of size~$k$, in which the underlying graph
is $\sigma$ itself and every vertex is labeled; we write it as
$\mathbf{1}_\sigma$, and simply as $\mathbf{1}$ when
$\sigma=\emptyset$.

\begin{example}[Multiplication of $K_2^\bullet$ and $\overline{K_2}^\bullet$]
\label{ex:typed-edge-nonedge-product}
  Let $\sigma$ be the one-vertex flag type.  Taking $\ell = 3$, the density
  $\den{\fKtwobull,\,\fbarKtwobull}{G}$ equals $\frac{1}{2}$ for each size-3
  $\sigma$-flag $G$ whose labeled vertex has degree exactly $1$:
  \[
    \den{\fKtwobull,\,\fbarKtwobull}{\fbarPthreebull}
    \;=\;
    \den{\fKtwobull,\,\fbarKtwobull}{\fPthreebull}
    \;=\; \frac{1}{2},
  \]
  and vanishes on all other size-3 $\sigma$-flags:
  \[
    \den{\fKtwobull,\,\fbarKtwobull}{G} = 0
    \qquad\text{for all } G \in \mathcal{F}_3^\sigma \setminus \{\fbarPthreebull,\,\fPthreebull\}.
  \]
  Hence
  \[
    \fKtwobull \cdot \fbarKtwobull
    \;=\;
    \frac{1}{2}\,\fbarPthreebull + \frac{1}{2}\,\fPthreebull
    \qquad\text{in } \mathcal{A}^\sigma.
  \]
\end{example}

\subsection{Positive Homomorphisms and Semantic Order}
\label{sec:background-semantic-order}

Fix a flag type $\sigma$.  
The flag algebra $\mathcal{A}^\sigma$ provides algebraic syntax for expressing
densities of pattern $\sigma$-flags.  Flag algebra proofs, however, target
\emph{asymptotic} statements: rather than a single finite host $\sigma$-flag, we
consider increasing sequences~$\{G_n\}_{n\in\mathbb{N}}$ of host
$\sigma$-flags with $|V(G_n)|\to\infty$ and ask what happens to $\sigma$-flag
densities in the limit.  Positive homomorphisms can be thought of as the
limiting density profiles of convergent sequences of host $\sigma$-flags.
The semantic non-negativity cone
$\mathcal{C}^\sigma$ then consists of the flag-algebra elements
whose value under every positive homomorphism is non-negative, and establishing
membership in $\mathcal{C}^\sigma$ is precisely what flag algebra proofs aim
to do.

\paragraph{Convergent sequences.}
Call a sequence $\{G_n\}_{n\in\mathbb{N}}$ of host $\sigma$-flags \emph{increasing} if
$|V(G_1)| < |V(G_2)| < \cdots$.  Each host $\sigma$-flag $G$ defines a point
$p^G \in [0,1]^{\mathcal{F}^\sigma}$ by $p^G(F) := \den{F}{G}$.
An increasing sequence is \emph{convergent} if the sequence of points $p^{G_n}$
converges in $[0,1]^{\mathcal{F}^\sigma}$ endowed with the product topology;
equivalently, if $\lim_{n\to\infty}\den{F}{G_n}$ exists for every $\sigma$-flag $F$.

\paragraph{Positive homomorphisms.}
A \emph{positive homomorphism} on
$\mathcal{A}^\sigma$ is a map
$\phi:\mathcal{A}^\sigma \to \mathbb{R}$ that preserves addition,
multiplication, multiplication by real scalars, and the unit.  It must also
satisfy $\phi([F]) \geq 0$ for every $\sigma$-flag $F$.
Following standard flag-algebra terminology, we call such a homomorphism
\emph{positive}, although its defining inequalities are non-strict.
We write $\poshom{\sigma}$ for the set of all positive homomorphisms on
$\mathcal{A}^\sigma$.

In $\mathcal{A}^\sigma$, the following identity holds for every $\ell \geq |V(\sigma)|$:
\begin{equation}
  \sum_{F \in \mathcal{F}^\sigma_\ell} [F] \;=\; \mathbf{1}_\sigma.
  \label{eq:sum-to-one}
\end{equation}
Applying any $\phi \in \poshom{\sigma}$ to both sides gives
$\sum_{F \in \mathcal{F}^\sigma_\ell} \phi([F]) = 1$;
non-negativity then forces $\phi([F]) \in [0,1]$ for every $\sigma$-flag $F$.
Thus, each $\phi \in \poshom{\sigma}$ determines a point in $[0,1]^{\mathcal{F}^\sigma}$
via the assignment $F \mapsto \phi([F])$, so $\poshom{\sigma}$ can be identified with
a subset of $[0,1]^{\mathcal{F}^\sigma}$.  %

The following theorem makes precise the sense in which positive homomorphisms
are exactly the limiting density profiles of convergent sequences of
$\sigma$-flags~\cite{razborov2007flag}.

\begin{theorem}[Razborov]\label{thm:convergent-hom}
  \begin{enumerate}
  \item[\textup{(a)}] If $\{G_n\}_{n\in\mathbb{N}}$ is a convergent sequence of
    $\sigma$-flags, then
    \[
      \lim_{n\to\infty} p^{G_n} \in \poshom{\sigma}.
    \]
  \item[\textup{(b)}] Conversely, every $\phi \in \poshom{\sigma}$ satisfies
    $\phi = \lim_{n\to\infty} p^{G_n}$ for some convergent sequence
    $\{G_n\}_{n\in\mathbb{N}}$ of $\sigma$-flags.
  \end{enumerate}
\end{theorem}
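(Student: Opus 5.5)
For part (a), I will set $\phi(F) := \lim_{n\to\infty}\den{F}{G_n}$ for each $\sigma$-flag $F$, extend linearly to $\mathbb{R}[\mathcal{F}^\sigma]$, and check four things: that $\phi$ vanishes on $\mathcal{Z}^\sigma$ (so it descends to $\mathcal{A}^\sigma$), that it is multiplicative, that it preserves the unit, and that it is non-negative.

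\emph{Vanishing on $\mathcal{Z}^\sigma$.} For a generator~\eqref{eq:zero-element}, Lemma~\ref{lem:chain-rule} with host $G_n$ (once $|V(G_n)|\ge n$) gives $\den{F}{G_n} = \sum_{G\in\mathcal{F}^\sigma_n}\den{F}{G}\den{G}{G_n}$ exactly. The sum is finite, so passing to the limit shows $\phi$ kills the generator.

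\emph{Unit and positivity.} Non-negativity is immediate, since every $\den{F}{G_n}\ge 0$. The unit is preserved because $\den{\mathbf 1_\sigma}{G_n}=1$ for all $n$.

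\emph{Multiplicativity.} This is the only nontrivial point in (a). By~\eqref{eq:mul} and the chain-rule argument extended to two flags, $\sum_{G\in\mathcal{F}^\sigma_\ell}\den{F_1,F_2}{G}\den{G}{G_n} = \den{F_1,F_2}{G_n}$. So it suffices to show
\[
\bigl|\den{F_1,F_2}{G_n}-\den{F_1}{G_n}\den{F_2}{G_n}\bigr| = O\bigl(1/|V(G_n)|\bigr).
\]
This holds because sampling $S_1,S_2$ independently differs from sampling them disjointly only on the event that they intersect. That event has probability $O(m_1m_2/(N-k))$, where $N=|V(G_n)|$, so the difference tends to $0$ as $N\to\infty$. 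Once the product is defined on basis flags, bilinearity extends multiplicativity to all of $\mathcal{A}^\sigma$.

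For part (b), the plan is a random-sampling (compactness plus concentration) argument. Given $\phi\in\poshom{\sigma}$ and $n\ge k$, identity~\eqref{eq:sum-to-one} gives $\sum_{G\in\mathcal{F}^\sigma_n}\phi([G])=1$ with all terms non-negative, so $\phi$ defines a probability distribution on $\mathcal{F}^\sigma_n$. Let $\mathbf G_n$ be a random flag drawn from it.

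\emph{Expectation.} By the expansion identity~\eqref{eq:expansion-identity}, $\mathbb E\,\den{F}{\mathbf G_n} = \sum_G\phi([G])\den{F}{G} = \phi([F])$ for every $F$ of size at most $n$.

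\emph{Second moment.} We have $\mathbb E\,\den{F}{\mathbf G_n}^2 = \sum_G\phi([G])\den{F}{G}^2$. By the asymptotic independence estimate from (a), $\den{F}{G}^2 = \den{F,F}{G}+O(1/n)$. The expectation of $\den{F,F}{\mathbf G_n}$ equals $\phi([F]\cdot[F]) = \phi([F])^2$ by~\eqref{eq:mul} and multiplicativity of $\phi$. Hence $\operatorname{Var}\den{F}{\mathbf G_n}=O(1/n)$.

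\emph{Chebyshev and selection.} Chebyshev's inequality applied to the first $r$ flags in a fixed enumeration of the countable set $\mathcal{F}^\sigma$, with tolerance $1/r$, shows that for $n$ large enough (depending on $r$) some realization $G$ satisfies $|\den{F}{G}-\phi([F])|<1/r$ for those $r$ flags. Choosing such $G$ with strictly increasing sizes yields a sequence whose densities converge to $\phi([F])$ for every $F$, which is exactly convergence in the product topology.

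\textbf{Main obstacle.} The crux of both parts is the estimate $\den{F_1,F_2}{G}-\den{F_1}{G}\den{F_2}{G}=O(1/|V(G)|)$. I would prove it by explicitly coupling uniform disjoint pairs with independent pairs and bounding the collision probability. It is also worth confirming that the $\mathcal{H}$-free restriction causes no trouble here. The flags $\mathbf G_n$ are drawn from $\mathcal{F}^\sigma_n$, which is already $\mathcal{H}$-free, and all identities used are stated within that family. So the argument stays inside the constrained setting throughout.
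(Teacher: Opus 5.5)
Your proposal is correct. Part (a) matches the paper's account of the formalization: you define the limit profile, check that it kills the zero-space generators via the chain rule, and verify unit, positivity and multiplicativity, the last via $|p(F_1,F_2;G)-p(F_1;G)\,p(F_2;G)|=O(1/|V(G)|)$.

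In part (b) you use the same random model as the paper. You draw a size-$n$ flag with weight $\phi([G])$, get the mean from the expansion identity, and get the variance from multiplicativity of $\phi$. Where you differ is in how you extract the deterministic sequence.

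\begin{itemize}
\item \textbf{The paper's route.} It samples an entire sequence $X_n$ under a product measure on flag sequences, with $X_n$ of size $n^2+k$. This makes the $O(1/\text{size})$ variance bound summable along the sequence. The paper then shows that coordinatewise convergence holds almost surely and selects one realization.
\item \textbf{Your route.} At each stage $r$ you apply Chebyshev with a finite union bound over the first $r$ flags. This shows that a single good flag exists, and you then diagonalize with strictly increasing sizes.
\end{itemize}

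Your argument stays within finite probability spaces and needs no summability: any sizes growing fast enough in $r$ suffice. The paper's argument costs the construction of the infinite product measure. In return it gives the stronger statement that almost every sampled sequence works, and it avoids stage-by-stage bookkeeping.
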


\paragraph{Semantic non-negativity cone and order.}
For a fixed flag type $\sigma$, define the \emph{semantic non-negativity cone}
\[
  \mathcal{C}^\sigma
  :=
  \{f \in \mathcal{A}^\sigma
    \mid \phi(f) \geq 0
    \text{ for every } \phi \in \poshom{\sigma}\}.
\]
We write
\[
  f \leq_{\mathcal{H},\sigma} g
  \qquad\text{when}\qquad
  g-f \in \mathcal{C}^\sigma,
\]
where the subscript $\mathcal{H}$ records that the comparison is made in the
\(\mathcal{H}\)-free world: both $\mathcal{A}^\sigma$ and $\mathcal{C}^\sigma$ are
built from \(\mathcal{H}\)-free graphs.
Equivalently, $f\leq_{\mathcal{H},\sigma} g$ means that
$\phi(f)\leq\phi(g)$ for every $\phi \in \poshom{\sigma}$.  When $\sigma=\emptyset$, we write $f \leq_{\mathcal{H}} g$ in place of $f \leq_{\mathcal{H},\emptyset} g$, and $\mathcal{C}$ in place of $\mathcal{C}^\emptyset$.
Note that $f \in \mathcal{C}^\sigma$ is the same as saying $0 \leq_{\mathcal{H},\sigma} f$.

Two immediate examples of elements of $\mathcal{C}^\sigma$ are flag basis
elements and squares.  Every $\sigma$-flag $F$ gives an element
$[F]\in\mathcal{C}^\sigma$, because positive homomorphisms send every flag
basis element to a non-negative real by definition.
For any $f \in \mathcal{A}^\sigma$, the square $f^2$ satisfies $0 \leq_{\mathcal{H},\sigma} f^2$,
because $\phi(f^2) = \phi(f)^2 \geq 0$ for every~$\phi \in \poshom{\sigma}$.
Moreover, if $0 \leq_{\mathcal{H},\sigma} f$ and $0 \leq_{\mathcal{H},\sigma} g$, then $0 \leq_{\mathcal{H},\sigma} f + g$,
since $\phi(f+g) = \phi(f) + \phi(g) \geq 0$ for every $\phi \in \poshom{\sigma}$.
Similarly, if $0 \leq_{\mathcal{H},\sigma} f$ and $c$ is a non-negative real number, then $0 \leq_{\mathcal{H},\sigma} c \cdot f$.

\paragraph{From semantic non-negativity to Tur\'an density.}
The problem of bounding the Tur\'an density $\tdensity{F}{\mathcal{H}}$ from above
reduces to establishing membership in $\mathcal{C}^\emptyset$, as the following
lemma shows.

\begin{lemma}\label{lem:semantic-bound}
  If $[F] \leq_{\mathcal{H}} c \cdot \mathbf{1}$ for some $c \in \mathbb{R}$, then
  $\tdensity{F}{\mathcal{H}} \leq c$.
\end{lemma}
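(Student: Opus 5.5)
We argue by contradiction, using part (a) of \Cref{thm:convergent-hom} to turn a sequence of near-extremal graphs into a positive homomorphism. Let $m := |V(F)|$ and suppose $\tdensity{F}{\mathcal{H}} > c$. (If the limit defining $\tdensity{F}{\mathcal{H}}$ were not known to exist, the same argument would bound the $\limsup$; existence follows from the standard averaging argument, but we do not need it.)

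For each large $n$ choose an $\mathcal{H}$-free graph $G_n$ on $n$ vertices attaining $\mathrm{ex}(n,F;\mathcal{H})$. The number of induced copies of $F$ in $G_n$, divided by $\binom{n}{m}$, is exactly $\den{F}{G_n}$. If $F$ is not $\mathcal{H}$-free, then $\den{F}{G_n}=0$ for every $n$, since an induced copy of $F$ would give a subgraph from $\mathcal{H}$; hence we may regard $F$ as an $\emptyset$-flag. By assumption, $\den{F}{G_n} > c+\varepsilon$ for some $\varepsilon>0$ and all sufficiently large $n$, or at least along a subsequence.

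The sequence $(G_n)$ is increasing, and the points $p^{G_n}$ lie in $[0,1]^{\mathcal{F}^\emptyset}$. This space is compact by Tychonoff's theorem; more concretely, $\mathcal{F}^\emptyset$ is countable, so a diagonal argument extracts a convergent subsequence $(G_{n_j})$. Part (a) of \Cref{thm:convergent-hom} then gives $\phi := \lim_j p^{G_{n_j}} \in \poshom{\emptyset}$ with $\phi([F]) = \lim_j \den{F}{G_{n_j}} \geq c+\varepsilon$.

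On the other hand, the hypothesis $[F]\leq_{\mathcal{H}} c\cdot\mathbf{1}$ means $c\cdot\mathbf{1}-[F]\in\mathcal{C}$. Since $\phi(\mathbf{1})=1$, this gives $\phi([F])\le c$, a contradiction. Therefore $\tdensity{F}{\mathcal{H}}\le c$.

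The main technical point is the compactness step. We need sequential compactness in the product topology, which holds because the index set is countable, together with the identification of normalized induced-copy counts with $\den{F}{G_n}$. Both are routine, so the substantive content is carried by \Cref{thm:convergent-hom}(a).
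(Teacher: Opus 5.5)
Your proof is correct and takes the same route as the paper's. Assume $\tdensity{F}{\mathcal{H}}>c$ and pick $\mathcal{H}$-free graphs of increasing size with $\den{F}{G_n}>c+\varepsilon$. Extract a convergent subsequence of the profiles in $[0,1]^{\mathcal{F}^\emptyset}$, which is possible because the index set is countable. Then \Cref{thm:convergent-hom}(a) gives $\phi\in\poshom{\emptyset}$ with $\phi([F])\ge c+\varepsilon$, contradicting $\phi([F])\le c$. Your aside about $F$ not being $\mathcal{H}$-free is unnecessary, since writing $[F]$ already presupposes $F\in\mathcal{F}^\emptyset$; as written, that case would also need $c\ge 0$.
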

\begin{proof}
  Suppose for contradiction that $\tdensity{F}{\mathcal{H}} > c$.  Then
  there exist $\varepsilon > 0$ and an increasing sequence of
  \(\mathcal{H}\)-free graphs $\{G_k\}_{k\in\mathbb{N}}$ such that
  $\den{F}{G_k} > c + \varepsilon$ for all $k$.  Since each $p^{G_k}$
  is a point in $[0,1]^{\mathcal{F}^\emptyset}$ and
  $\mathcal{F}^\emptyset$ is countable, this product space is compact and
  metrizable.  Hence the sequence $\{p^{G_k}\}$ has a convergent
  subsequence $\{p^{G_{k_j}}\}$; by \Cref{thm:convergent-hom}(a) its limit is some
  $\phi \in \poshom{\emptyset}$.  Since $\phi$ is the pointwise limit of
  $\{p^{G_{k_j}}\}$, we have
  $\phi([F]) = \lim_j \den{F}{G_{k_j}} \geq c + \varepsilon > c$.  But
  $[F] \leq_{\mathcal{H}} c \cdot \mathbf{1}$ means $\phi([F]) \leq c$ for every~$\phi \in \poshom{\emptyset}$, a contradiction.
\end{proof}

A matching lower bound typically comes from an explicit \(\mathcal{H}\)-free construction
whose $F$-density approaches~$c$, pinning down $\tdensity{F}{\mathcal{H}}$ exactly.
Two classical results illustrate how this looks in flag algebra language.

\begin{example}[Mantel's theorem~\cite{mantel}]\label{ex:mantel}
  Mantel's theorem states $\tdensity{K_2}{K_3} = \tfrac{1}{2}$, meaning
  asymptotically at most half of all vertex pairs in a triangle-free graph
  are edges.  The upper bound translates into the flag algebra inequality
  $[K_2] \leq_{K_3} \tfrac{1}{2}\cdot\mathbf{1}$; the balanced complete bipartite
  graphs $K_{\lfloor n/2\rfloor,\lceil n/2\rceil}$ witness the lower bound.
\end{example}

\begin{example}[Erd\H{o}s Pentagon Theorem~{\cite{grzesik2012,hatami2012}}]\label{ex:pentagon}
  The Erd\H{o}s pentagon problem asks for the maximum density of $C_5$ in
  triangle-free graphs; the asymptotic answer is $\tdensity{C_5}{K_3} = \tfrac{24}{625}$.  The upper bound corresponds to the flag algebra inequality
  $[C_5] \leq_{K_3} \tfrac{24}{625}\cdot\mathbf{1}$; an explicit \(K_3\)-free construction
  (the balanced blow-up of $C_5$) witnesses the lower bound.
\end{example}

\subsection{The Downward Operator}
\label{sec:background-downward}
Labeled flag algebras (i.e., $\mathcal{A}^\sigma$ with $\sigma \neq \emptyset$) are often
richer than the unlabeled algebra $\mathcal{A}^\emptyset$: establishing membership
in $\mathcal{C}^\sigma$ can be easier than working directly in $\mathcal{C}$.
The final extremal statement, however, is always about unlabeled graphs.
The downward operator bridges this gap by mapping $\mathcal{A}^\sigma$ to
$\mathcal{A}^\emptyset$ in a way that preserves semantic non-negativity
(\Cref{thm:downward-nonneg}), so valid inequalities proved in a labeled algebra
can be transferred to the unlabeled one.

For a $\sigma$-flag $F = (M, \theta_F)$, let $F|_\emptyset$ denote the
$\emptyset$-flag obtained from $F$ by forgetting the embedding $\theta_F$,
i.e., viewing $M$ as a $\emptyset$-flag.
For a $\sigma$-flag $F$ of size $m$ with $|V(\sigma)|=k$, let
\[
  q_\sigma(F)
  \;:=\;
  \frac{
    \bigl|\{\theta' : [k]\hookrightarrow V(M)
       \mid \theta' \text{ embeds }\sigma\text{ into }M
       \text{ and } (M,\theta')\cong_\sigma F\}\bigr|}
       {m(m-1)\cdots(m-k+1)}
\]
be the probability that a uniformly random injection $[k]\hookrightarrow V(M)$
recovers the same $\sigma$-flag as $F$.
The \emph{downward operator}
$\llbracket\cdot\rrbracket_\sigma:\mathcal{A}^\sigma\to\mathcal{A}^\emptyset$
is then defined on flag basis elements by
\[
  \llbracket [F] \rrbracket_\sigma
  \;:=\;
  q_\sigma(F)\cdot [F|_\emptyset],
\]
extended linearly to all of $\mathcal{A}^\sigma$.
For $f\in\mathcal{A}^\sigma$, we call
$\llbracket f\rrbracket_\sigma$ the \emph{downward image} of $f$.

We write $\langle\sigma\rangle_0\in\mathcal{A}^\emptyset$ for the empty-type
flag-algebra basis element represented by the underlying graph of the flag type
$\sigma$, with its vertex labels forgotten.  Applying the downward operator to
the typed unit therefore gives
\[
  \llbracket \mathbf{1}_\sigma \rrbracket_\sigma
  =
  q_\sigma(\mathbf{1}_\sigma)\,\langle\sigma\rangle_0.
\]

\paragraph{Random homomorphisms.}
A positive homomorphism $\phi_0 \in \poshom{\emptyset}$ records the limiting
densities of unlabeled graphs along a convergent sequence of finite graphs.
For a nonempty flag type $\sigma$, its random extension can be understood
through any such sequence realizing $\phi_0$: in each finite graph, choose
uniformly at random an embedding of $\sigma$ and use the embedded vertices as
labels.  Razborov's random-extension theorem describes the limiting
distribution of the resulting $\sigma$-flag densities.
Whenever $\phi_0(\langle\sigma\rangle_0) > 0$, meaning that the underlying
unlabeled graph of $\sigma$ has positive limiting density, the following
holds~\cite{razborov2007flag}:
there exists a unique probability measure on $\poshom{\sigma}$,
which we denote $\Ext_\sigma(\phi_0)$,
such that for every~$f \in \mathcal{A}^\sigma$, the random positive
homomorphism $\boldsymbol{\phi}^\sigma \sim \Ext_\sigma(\phi_0)$ satisfies
\begin{equation}\label{eq:ensemble}
  \mathbb{E}\bigl[\boldsymbol{\phi}^\sigma(f)\bigr]
  \;=\;
  \frac{\phi_0\bigl(\llbracket f \rrbracket_\sigma\bigr)}
       {\phi_0\bigl(\llbracket \mathbf{1}_\sigma \rrbracket_\sigma\bigr)}.
\end{equation}
Note that the denominator is positive by the assumption
$\phi_0(\langle\sigma\rangle_0)>0$, since
$\phi_0\bigl(\llbracket \mathbf{1}_\sigma \rrbracket_\sigma\bigr)
=q_\sigma(\mathbf{1}_\sigma)\cdot
\phi_0(\langle\sigma\rangle_0)$ and $q_\sigma(\mathbf{1}_\sigma)>0$ always.

Equation~\eqref{eq:ensemble} implies that the downward operator $\llbracket \cdot \rrbracket_\sigma$ maps $\mathcal{C}^\sigma$ to $\mathcal{C}$:

\begin{theorem}[Razborov~{\cite{razborov2007flag}}]\label{thm:downward-nonneg}
  For any flag type $\sigma$ and any $f \in \mathcal{A}^\sigma$ with $0 \leq_{\mathcal{H},\sigma} f$,
  \[
    0 \;\leq_{\mathcal{H}}\; \llbracket f \rrbracket_\sigma.
  \]
\end{theorem}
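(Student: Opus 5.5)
We fix $f\in\mathcal{A}^\sigma$ with $0\leq_{\mathcal{H},\sigma} f$ and an arbitrary $\phi_0\in\poshom{\emptyset}$. We must show $\phi_0(\llbracket f\rrbracket_\sigma)\geq 0$. We split into two cases according to whether $\phi_0(\langle\sigma\rangle_0)$ is positive or zero.

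\emph{Case $\phi_0(\langle\sigma\rangle_0)>0$.} Here the random-extension theorem applies and gives the measure $\Ext_\sigma(\phi_0)$ on $\poshom{\sigma}$ satisfying \eqref{eq:ensemble}. Since every $\phi\in\poshom{\sigma}$ satisfies $\phi(f)\geq 0$, the random variable $\boldsymbol{\phi}^\sigma(f)$ is pointwise non-negative, so its expectation is non-negative. Multiplying \eqref{eq:ensemble} by the denominator $\phi_0(\llbracket\mathbf{1}_\sigma\rrbracket_\sigma)=q_\sigma(\mathbf{1}_\sigma)\phi_0(\langle\sigma\rangle_0)>0$ then yields $\phi_0(\llbracket f\rrbracket_\sigma)\geq 0$. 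One should check that $\boldsymbol{\phi}^\sigma(f)$ is measurable and integrable. This holds because $f$ is a finite combination of flags, $\phi\mapsto\phi([F])$ is a coordinate projection, and its values lie in $[0,1]$.

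\emph{Case $\phi_0(\langle\sigma\rangle_0)=0$.} This degenerate case is the main obstacle, because $\Ext_\sigma(\phi_0)$ is unavailable. The plan is to show directly that $\phi_0(\llbracket[F]\rrbracket_\sigma)=0$ for every $\sigma$-flag $F$, so that $\phi_0(\llbracket f\rrbracket_\sigma)=0$ by linearity. Take a representative $f=\sum_i c_i[F_i]$. For each $F$ of size $m$, the forgotten graph $F|_\emptyset$ contains $\sigma$'s underlying graph as an induced subgraph on the labeled vertices. Expanding $\langle\sigma\rangle_0$ at size $m$ via \eqref{eq:expansion-identity} gives $\langle\sigma\rangle_0=\sum_{G\in\mathcal{F}^\emptyset_m}\den{\langle\sigma\rangle_0}{G}[G]$ in $\mathcal{A}^\emptyset$. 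Every coefficient is non-negative, and the coefficient of $G=F|_\emptyset$ is at least $1/\binom{m}{k}>0$. Applying $\phi_0$ gives $0=\phi_0(\langle\sigma\rangle_0)\geq \binom{m}{k}^{-1}\phi_0([F|_\emptyset])\geq 0$, so $\phi_0([F|_\emptyset])=0$. Hence $\phi_0(\llbracket[F]\rrbracket_\sigma)=q_\sigma(F)\phi_0([F|_\emptyset])=0$.

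Combining the two cases shows $\phi_0(\llbracket f\rrbracket_\sigma)\geq 0$ for all $\phi_0\in\poshom{\emptyset}$, which is $0\leq_{\mathcal{H}}\llbracket f\rrbracket_\sigma$. Well-definedness of $\llbracket\cdot\rrbracket_\sigma$ on the quotient, meaning that it sends $\mathcal{Z}^\sigma$ to $\mathcal{Z}^\emptyset$, is taken as part of the definition. The only substantive input is the existence statement behind \eqref{eq:ensemble}, which the paper assumes from Razborov. Were it not available, the fallback would be a direct proof via \Cref{thm:convergent-hom}(b). One would realize $\phi_0$ by a convergent sequence $G_n$, note that $p^{G_n}(\llbracket f\rrbracket_\sigma)$ is, up to $o(1)$, an average over random $\sigma$-embeddings of $p^{(G_n,\theta)}(f)$, and argue that these averages are asymptotically non-negative. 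That argument requires a compactness argument over limits of the typed sequences, which is the genuinely hard part hidden in the random-extension theorem.
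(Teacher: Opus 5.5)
Your proof is correct and follows the paper's sketch: you make the same split on whether $\phi_0(\langle\sigma\rangle_0)$ vanishes, and in the non-degenerate case you use the random-extension identity with a pointwise non-negative integrand. In the degenerate case you show every downward basis image evaluates to zero; expanding $\langle\sigma\rangle_0$ at size $m$ and bounding the coefficient of $[F|_\emptyset]$ below by $1/\binom{m}{k}$ is a precise version of the paper's one-line remark that the underlying graph of every $\sigma$-flag contains an induced copy of $\sigma$.
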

\begin{proof}[Proof sketch]
  Let $\phi_0 \in \poshom{\emptyset}$; we show $\phi_0(\llbracket f \rrbracket_\sigma) \geq 0$.
  If $\phi_0(\langle\sigma\rangle_0) = 0$, then $\phi_0$ assigns density zero
  to the underlying unlabeled graph of $\sigma$.  The underlying graph of
  every $\sigma$-flag contains an induced copy of $\sigma$, so the downward
  image of every $\sigma$-flag basis element also evaluates to zero under
  $\phi_0$.  By linearity, $\phi_0(\llbracket f \rrbracket_\sigma) = 0$.
  Otherwise, let $\boldsymbol{\phi}^\sigma \sim \Ext_\sigma(\phi_0)$.
  Since $0 \leq_{\mathcal{H},\sigma} f$, every positive homomorphism on
  $\mathcal{A}^\sigma$ evaluates $f$ non-negatively, so
  $\mathbb{E}[\boldsymbol{\phi}^\sigma(f)] \geq 0$.
  Rearranging Equation~\eqref{eq:ensemble}:
  \[
    \phi_0\bigl(\llbracket f \rrbracket_\sigma\bigr)
    \;=\;
    \phi_0\bigl(\llbracket \mathbf{1}_\sigma \rrbracket_\sigma\bigr)
    \cdot
    \mathbb{E}\bigl[\boldsymbol{\phi}^\sigma(f)\bigr]
    \;\geq\; 0.
  \]
  Since $\phi_0$ was arbitrary, we have $0 \leq_{\mathcal{H}} \llbracket f \rrbracket_\sigma$.
\end{proof}

A flag algebra proof of $[F] \leq_{\mathcal{H}} c\cdot\mathbf{1}$ typically proceeds as follows:
first, one exhibits $f \in \mathcal{A}^\sigma$ with $0 \leq_{\mathcal{H},\sigma} f$ for some
nonempty flag type $\sigma$ (e.g., a square $f = g^2$) and applies
\Cref{thm:downward-nonneg} to obtain $0 \leq_{\mathcal{H}} \llbracket f \rrbracket_\sigma$;
then, one adds $\llbracket f \rrbracket_\sigma$ to $[F]$ or an expansion of $[F]$
at an appropriate size, and uses non-negativity of flag basis elements and
Equation~\eqref{eq:sum-to-one} to conclude $[F] \leq_{\mathcal{H}} c\cdot\mathbf{1}$.
The following example illustrates this pattern.

\begin{example}[Mantel's theorem via flag algebras]\label{ex:mantel-fa}
We prove $K_2 \leq_{K_3} \tfrac{1}{2}\cdot\mathbf{1}$ in the triangle-free flag algebra.
Let $\sigma$ be the one-vertex flag type.

\textit{Step 1: A key inequality.}
Since $0 \leq_{\mathcal{H},\sigma} ((\fKtwobull) - (\fbarKtwobull))^2$,
\Cref{thm:downward-nonneg} gives $0 \leq_{K_3} \llbracket((\fKtwobull) - (\fbarKtwobull))^2\rrbracket_\sigma$.
Expanding the square, computing each product at size~$3$, and
then applying the downward operator
gives the following inequality in $\mathcal{A}^{\emptyset}$:
\begin{align}
  0 \leq_{K_3} \llbracket((\fKtwobull) - (\fbarKtwobull))^2\rrbracket_\sigma\notag
  & = \llbracket(\fKtwobull)\cdot(\fKtwobull) - 2\cdot(\fKtwobull)\cdot(\fbarKtwobull) + (\fbarKtwobull)\cdot(\fbarKtwobull)\rrbracket_\sigma\notag\\
  &= \left\llbracket\fbarKthreebull + \fEthreebull
    - \fbarPthreebull - \fPthreebull
    + \fKonetwobull\right\rrbracket_\sigma\notag\\
  &= \fbarKthree - \frac{1}{3}\,\fbarPthree - \frac{1}{3}\,\fPthree.
  \label{eq:mantel-key}
\end{align}

\textit{Step 2: Algebraic derivation.}
Starting from Equation~\eqref{eq:k2-expansion}, we add two non-negative terms and simplify using Equation~\eqref{eq:sum-to-one}:
\begin{align*}
  K_2
  &= \frac{1}{3}\,\fbarPthree + \frac{2}{3}\,\fPthree
  &&\text{by Equation~\eqref{eq:k2-expansion}}\\
  &\leq_{K_3} \frac{1}{3}\,\fbarPthree + \frac{2}{3}\,\fPthree +
    \frac{1}{2}\left(\fbarKthree - \frac{1}{3}\,\fbarPthree - \frac{1}{3}\,\fPthree\right) +
    \frac{1}{3}\,\fbarPthree
  &&\text{by Equation~\eqref{eq:mantel-key} and $0 \leq_{K_3} \fbarPthree$}\\
  &= \frac{1}{2}\cdot\left(\fbarKthree + \fbarPthree + \fPthree\right)
  \;=\; \frac{1}{2}\cdot\mathbf{1}.
  &&\text{by Equation~\eqref{eq:sum-to-one}}
\end{align*}
\end{example}

\section{Formalizing Flag Algebras in Lean~4}
\label{sec:abstract}

Informal flag algebra proofs are compact partly because their notation is
deliberately overloaded.  The same symbol may denote a concrete labeled graph,
its isomorphism class, a basis vector in a free vector space, an element of the
quotient algebra, or the real number obtained by applying a positive
homomorphism to that element.  On paper, these changes of viewpoint are usually
harmless and are left implicit.  In Lean, however, they create an immediate
formalization challenge: each role must be represented by a separate type, and
each passage between roles must be justified.  In particular, constructions
on concrete graphs must be invariant under isomorphism, expansion identities
must hold in the quotient, multiplication must be well defined on the quotient
algebra, and positive homomorphisms must preserve the algebraic operations.

Our first design choice is therefore to make the Lean definitions follow the
mathematics of \Cref{sec:background} as closely as possible, even when the
resulting definitions cannot be executed directly.  Accordingly, flags are
isomorphism classes of labeled graphs, densities are finite sampling
probabilities, the flag algebra is a quotient by expansion identities, and
positive homomorphisms provide semantic interpretations of algebra elements.
The following subsections present these definitions in turn.  Together with
the maps between these representations and the theorems establishing their
correctness, they form the \emph{specification layer}.  Several of these
quotient-based definitions involve nonconstructive choices and are therefore
marked \lean{noncomputable}.  For the finite calculations needed in later
proofs, such as testing flag isomorphism and computing densities and products,
the \emph{reflection layer} of \Cref{sec:reflection} provides executable
procedures corresponding to the abstract definitions in the specification
layer.  Its adequacy theorems prove that the results returned by these
procedures agree with those abstract definitions.

Our second design choice concerns how forbidden-subgraph assumptions are
represented in the Lean development.  In the mathematical presentation of
\Cref{sec:background}, we first fix \(\mathcal{H}\) and construct flags and
flag algebras within the class of \(\mathcal{H}\)-free graphs.  Our Lean
formalization instead constructs the ambient flag algebras without a
forbidden-family parameter and imposes \(\mathcal{H}\)-freeness later by
restricting the positive homomorphisms considered by the semantic order.  Thus,
the core formalization implements the unconstrained theory.  If an arbitrary
forbidden family were carried through the definitions from the outset, even
the elementary constructions of flags, densities, expansion identities, and
multiplication would require an additional parameter and corresponding side
conditions.  By postponing the constraint, we formalize the
\(\mathcal{H}\)-independent algebraic definitions and results only once.  The
same Lean infrastructure can then be reused for different forbidden graphs,
with the appropriate constraint introduced only in the semantic comparisons
required by each extremal application.  This design is sound for the
Tur\'an-density applications formalized here: \Cref{sec:forbidden} defines the
ensemble semantic order and states the transfer theorem that converts an
empty-type ensemble inequality into the corresponding extremal-density bound.
\Cref{sec:metatheory} places this design in a broader mathematical setting by
comparing the ensemble semantic order with the order obtained by building the
constraint into the flag algebra from the outset.  It gives a criterion for
the two orders to agree, broad sufficient conditions for agreement, and an
example in which they differ.  A forthcoming paper will develop this
meta-theory and its applications in full.

For Lean-facing notation, we write the empty flag type as \(\emptyt\), rather
than the \(\emptyset\) used in \Cref{sec:background}, to distinguish it from
ordinary empty sets.

The Lean snippets below are lightly adapted for exposition.  We make otherwise
inferred parameters explicit and occasionally replace implementation-oriented
names with clearer paper-facing names, so that the relevant types and
dependencies are visible without requiring the surrounding source context.

\subsection{Flag Types and Flags}
\label{sec:flagtypes}

Our formalization builds on Mathlib, particularly its graph library.  Before
giving the Lean definitions of flags, we recall the small amount of Mathlib
notation used below.  In \Cref{sec:background}, we described a finite graph as
having a vertex set $V(G)$.  In Lean, the vertex set is represented by a type:
a term of type \lean{SimpleGraph V} is a simple graph whose vertices are the
elements of the type \lean{V}.  Finiteness is supplied separately by
assumptions such as \lean{Fintype V}.  For example,
\lean{SimpleGraph (Fin n)} is a graph on the type \lean{Fin n}, whose elements
are the natural numbers \(0,\ldots,n-1\).

Mathlib provides ${G\hookrightarrow_g H}$ for the type of graph embeddings
from $G$ to $H$.  Such an embedding is an injective vertex map that
preserves and reflects adjacency, so its image is an induced copy of $G$
in $H$.  Mathlib also provides $G\simeq_g H$ for graph isomorphisms.
With this notation in place, the concrete pair $F=(M,\theta_F)$ from
\Cref{sec:background} is represented by \lean{LabeledGraph}, while its
isomorphism class is represented by \lean{Flag}.

\begin{lstlisting}
abbrev FlagType (T : Type) := SimpleGraph T

structure LabeledGraph {T : Type} [Fintype T] ((*@$\sigma$@*) : FlagType T) (V : Type) where
  graph      : SimpleGraph V
  type_embed : (*@$\sigma$@*) (*@$\hookrightarrow$@*)g graph
\end{lstlisting}
Here \lean{FlagType T} is simply a name for \lean{SimpleGraph T}: a flag type
is represented by a simple graph on the label type \lean{T}.  The structure
\lean{LabeledGraph} packages the two pieces of a concrete $\sigma$-flag: the
field \lean{graph} stores the underlying graph $M$, and the field
\lean{type\_embed} stores the embedding
$\theta_F:\sigma\hookrightarrow_g M$.  In the signature, braces mark
implicit parameters that Lean normally infers from context, whereas square
brackets mark typeclass assumptions.  Thus, \lean{[Fintype T]} tells Lean that
the label type \lean{T} is finite; later definitions use assumptions such as
\lean{[DecidableEq V]} when equality on a vertex type must be decidable.

We define \lean{LabeledGraphIso} as a structure representing isomorphisms
between two \lean{LabeledGraph} terms, and define \lean{Flag} by quotienting
\lean{LabeledGraph} by this isomorphism relation:
\begin{lstlisting}
structure LabeledGraphIso {T V V' : Type} [Fintype T]
    {(*@$\sigma$@*) : FlagType T} (G : LabeledGraph (*@$\sigma$@*) V) (G' : LabeledGraph (*@$\sigma$@*) V') where
  graph_iso     : G.graph (*@$\simeq$@*)g G'.graph
  type_preserve : graph_iso (*@$\circ$@*) G.type_embed = G'.type_embed

infixl:50 " (*@\textcolor{orange!80!black}{$\simeq$}@*)f " => LabeledGraphIso

instance labeledGraphSetoid {T : Type} [Fintype T] ((*@$\sigma$@*) : FlagType T) (V : Type)
    : Setoid (LabeledGraph (*@$\sigma$@*) V) where
  -- r : LabeledGraph (*@\textcolor{gray}{$\sigma$}@*) V (*@\textcolor{gray}{$\to$}@*) LabeledGraph (*@\textcolor{gray}{$\sigma$}@*) V (*@\textcolor{gray}{$\to$}@*) Prop
  r     := fun G G' => Nonempty (G (*@$\simeq$@*)f G')
  iseqv := ... -- proof that r defines an equivalence relation

def Flag {T : Type} [Fintype T] ((*@$\sigma$@*) : FlagType T) (V : Type) : Type :=
  Quotient (labeledGraphSetoid (*@$\sigma$@*) V)
\end{lstlisting}
The structure \lean{LabeledGraphIso} represents an isomorphism of concrete
$\sigma$-flags.
It consists of a graph isomorphism \lean{graph\_iso} between the underlying
graphs, together with a compatibility proof \lean{type\_preserve} saying that
this isomorphism sends the labeled copy of $\sigma$ in $G$ to the labeled copy
of $\sigma$ in $G'$.  This is the Lean form of the condition
$\alpha\circ\theta_G=\theta_{G'}$.  The notation $G\simeq_f G'$ denotes this
type of isomorphism data, with $\sigma$ inferred from the two labeled graphs.

Lean's quotient type does not accept an arbitrary relation directly; it
requires a \emph{setoid}, which packages a relation with proofs that it is an
equivalence relation.  The instance \lean{labeledGraphSetoid} provides this package for
\lean{LabeledGraph}~$\sigma$~\lean{V}.  Its field \lean{r} identifies two
terms $G,G'$ when a labeled-graph isomorphism $G\simeq_f G'$ exists, and its
field \lean{iseqv} contains the proofs that this relation is reflexive,
symmetric, and transitive.  Finally,
\lean{Flag}~$\sigma$~\lean{V} is defined as the quotient of
\lean{LabeledGraph}~$\sigma$~\lean{V} by that setoid, so its elements are
precisely isomorphism classes of concrete $\sigma$-flags on the vertex type
\lean{V}.

Lean's \lean{Quotient} type comes with double-bracket notation for equivalence
classes.  In
Lean code, the unsubscripted notation \(\llbracket x\rrbracket\)
means ``the quotient class of the representative \(x\),'' with the relevant
quotient inferred from the expected type.  Thus, in this subsection, when the
expected type is \lean{Flag}~\(\sigma\)~\lean{V},
\(\llbracket G\rrbracket\) is the \(\sigma\)-flag represented by the concrete
labeled graph \(G\).  Throughout the paper, unsubscripted double brackets denote
quotient classes, whereas the subscripted brackets
\(\llbracket-\rrbracket_\sigma\) and \(\llbracket-\rrbracket_0\) denote the
downward operator.  We return to this quotient-class notation in
\Cref{sec:formal-flag-algebra}.

We can now represent $\mathcal{F}^\sigma_n$ in Lean by
\lean{Flag}~$\sigma$~\lean{(Fin}~$n$\lean{)}.  The type
\lean{FinFlag}~$\sigma$ collects $\sigma$-flags of all finite sizes and
represents $\mathcal{F}^\sigma = \bigcup_n \mathcal{F}^\sigma_n$.  It is
defined as a dependent sum type: each element is a dependent pair
$\hat F = \langle n, F\rangle$, where $n$ records the size and $F$ is the
$\sigma$-flag of that size.  In Lean, the two components are accessed as
$\hat F.1$ and $\hat F.2$.
\begin{lstlisting}
def FinFlag {k : (*@$\mathbb{N}$@*)} ((*@$\sigma$@*) : FlagType (Fin k)) : Type :=
  (*@$\Sigma$@*) (n : (*@$\mathbb{N}$@*)), Flag (*@$\sigma$@*) (Fin n)
\end{lstlisting}

Unlike the definitions in \Cref{sec:background}, none of the definitions above
carries the \(\mathcal{H}\)-free condition: the underlying graphs are
arbitrary.  This is a deliberate interface choice.  The core quotient and
density machinery is developed once for all finite graphs, while
\(\mathcal{H}\)-freeness is imposed later by restricting the positive
homomorphisms used in the ensemble semantic order
(\Cref{sec:forbidden}).

\subsection{Subflag Densities}
\label{sec:formal-subflag-densities}

Formalizing the density function is not straightforward.  The density
$\den{F_1,\ldots,F_t}{G}$ from \Cref{sec:background} should be a
function taking $t$ pattern $\sigma$-flags and a host $\sigma$-flag to a value in $[0,1]$, but
\lean{Flag} is a quotient type of \lean{LabeledGraph}.  The standard way to
define a function from a quotient type in Lean is to first define it on the
underlying type, prove that it respects the equivalence relation, and lift it
via \lean{Quotient.lift}.  That is, one must first define the density with
\lean{LabeledGraph} inputs, and then prove that replacing any input by an
isomorphic \lean{LabeledGraph} gives the same result.

We first introduce \lean{LabeledGraphList} as an abbreviation for a
(\lean{Fin t})-indexed tuple of \lean{LabeledGraph}s.
Given such a tuple of pattern \lean{LabeledGraph}s and a host \lean{LabeledGraph},
\lean{labeledGraphListDensity} computes the density as the number of tuples
$(S_1, \ldots, S_t)$ satisfying the conditions in the definition of $\den{F_1,\ldots,F_t}{G}$,
divided by the total number of candidate tuples.  Rather than adopting the
probabilistic sampling formulation of \Cref{sec:background}, this direct ratio is easier to work with in Lean.
\begin{lstlisting}
abbrev LabeledGraphList {T : Type} [Fintype T]
    ((*@$\sigma$@*) : FlagType T) (t : (*@$\mathbb{N}$@*)) (Vl : Fin t (*@$\to$@*) Type) : Type :=
  (i : Fin t) (*@$\to$@*) LabeledGraph (*@$\sigma$@*) (Vl i)

noncomputable def labeledGraphListDensity
    {T W : Type} [Fintype T] [Fintype W] [DecidableEq W]
    {t : (*@$\mathbb{N}$@*)} {Vl : Fin t (*@$\to$@*) Type} [FintypeList Vl] [DecidableEqList Vl]
    {(*@$\sigma$@*) : FlagType T} (Hl : LabeledGraphList (*@$\sigma$@*) t Vl) (G : LabeledGraph (*@$\sigma$@*) W) : (*@$\mathbb{Q}$@*) :=
  let r_list (i : Fin t) := (Hl i).size - (*@$\sigma$@*).size
  labeledGraphListCount Hl G / multinomialCoefficient r_list (G.size - (*@$\sigma$@*).size)
\end{lstlisting}
The first line defines \lean{LabeledGraphList} by spelling out what a finite
list means in this dependent setting: it is a function that assigns to each
index \lean{i : Fin t} a concrete $\sigma$-flag \lean{LabeledGraph}~$\sigma$~\lean{(Vl i)}.
The family \lean{Vl : Fin t $\to$ Type} records the vertex type of each entry, so
the $t$ pattern $\sigma$-flags are not forced to share one ambient vertex type.

The second definition gives the finite ratio corresponding to
$\den{F_1,\ldots,F_t}{G}$.  Here \lean{Hl} is the tuple of pattern labeled
graphs and \lean{G} is the concrete labeled graph in which they are counted.
The local function \lean{r\_list} records, for each pattern, how many unlabeled
vertices it contributes:
\[
  r_i = |V(F_i)| - |\sigma|.
\]
The helper \lean{labeledGraphListCount}, whose Lean definition is omitted from
the displayed excerpt, supplies the numerator.  It counts tuples of induced
labeled subgraphs of \lean{G} such that the entry at each index \lean{i} is
isomorphic, as a $\sigma$-flag, to the pattern \lean{Hl i}, and different
entries use pairwise-disjoint sets of unlabeled vertices.  The denominator
applies \lean{multinomialCoefficient} to \lean{r\_list} and
\(|V(G)| - |\sigma|\); it counts all ways to choose disjoint vertex sets of
the required sizes from the unlabeled vertices of \lean{G}.  Their ratio is
the multi-flag density $\den{F_1,\ldots,F_t}{G}$, represented as a rational
number.

The square-bracketed parameters tell Lean what finite data and decision
procedures are available.  For a type \lean{X}, \lean{[Fintype X]} supplies an
enumeration of all elements of \lean{X}, while \lean{[DecidableEq X]} supplies
a procedure for deciding whether two elements are equal.  The pattern tuple
may use a different vertex type \lean{Vl i} at each index.  Accordingly,
\lean{[FintypeList Vl]} and \lean{[DecidableEqList Vl]} package the same
information for every type \lean{Vl i}.  In Lean, such square-bracketed
parameters are called \emph{typeclass parameters}, and Lean normally supplies
them automatically through its typeclass inference mechanism.

The \lean{noncomputable} keyword signals that Lean cannot synthesize an
algorithm to evaluate this definition directly.  This poses no obstacle to
formalizing flag algebra itself, but becomes relevant when applying the algebra
to concrete extremal problems, a point we return to in \Cref{sec:reflection}.

The concrete function \lean{labeledGraphListDensity} has two inputs: the tuple
\lean{Hl} of pattern labeled graphs and the host labeled graph \lean{G}.  To
make this a function of $\sigma$-flags, we lift these two inputs separately.  The host
input is lifted from \lean{LabeledGraph} to \lean{Flag}; for the pattern tuple,
the intermediate quotient is the quotient of concrete tuples by entrywise
isomorphism.  We first lift the host input \lean{G}, then lift the pattern
tuples, and finally package the result as a function on \lean{FlagList}.  The
congruence proofs, omitted below, show that the density is unchanged when the
host $\sigma$-flag or any pattern $\sigma$-flag is replaced by an isomorphic
representative.
\begin{lstlisting}
def labeledGraphListEqv {T : Type} [Fintype T] {(*@$\sigma$@*) : FlagType T}
    {t : (*@$\mathbb{N}$@*)} {Vl : Fin t (*@$\to$@*) Type} (Gl Gl' : LabeledGraphList (*@$\sigma$@*) t Vl) : Prop :=
  (*@$\forall$@*) (i : Fin t), Nonempty (Gl i (*@$\simeq$@*)f Gl' i)

instance labeledGraphListSetoid {T : Type} [Fintype T]
    ((*@$\sigma$@*) : FlagType T) (t : (*@$\mathbb{N}$@*)) (Vl : Fin t (*@$\to$@*) Type)
    : Setoid (LabeledGraphList (*@$\sigma$@*) t Vl) where
  r     := labeledGraphListEqv
  iseqv := ... -- proof that the entrywise isomorphism is an equivalence

def QuotLabeledGraphList {T : Type} [Fintype T]
    ((*@$\sigma$@*) : FlagType T) (t : (*@$\mathbb{N}$@*)) (Vl : Fin t (*@$\to$@*) Type) : Type :=
  Quotient (labeledGraphListSetoid (*@$\sigma$@*) t Vl)

abbrev FlagList {T : Type} [Fintype T]
    ((*@$\sigma$@*) : FlagType T) (t : (*@$\mathbb{N}$@*)) (Vl : Fin t (*@$\to$@*) Type) : Type :=
  (*@$\forall$@*) (i : Fin t), Flag (*@$\sigma$@*) (Vl i)

noncomputable def labeledGraphListDensityLifted
    {T W : Type} [Fintype T] [Fintype W] [DecidableEq W]
    {t : (*@$\mathbb{N}$@*)} {Vl : Fin t (*@$\to$@*) Type} [FintypeList Vl] [DecidableEqList Vl]
    {(*@$\sigma$@*) : FlagType T} (Hl : LabeledGraphList (*@$\sigma$@*) t Vl)
    : Flag (*@$\sigma$@*) W (*@$\to$@*) (*@$\mathbb{Q}$@*) :=
  Quotient.lift
    (fun G => labeledGraphListDensity Hl G)
    (...) -- proof that host-graph equivalence preserves density

noncomputable def quotLabeledGraphListDensity
    {T W : Type} [Fintype T] [Fintype W] [DecidableEq W]
    {t : (*@$\mathbb{N}$@*)} {Vl : Fin t (*@$\to$@*) Type} [FintypeList Vl] [DecidableEqList Vl]
    {(*@$\sigma$@*) : FlagType T}
    : QuotLabeledGraphList (*@$\sigma$@*) t Vl (*@$\to$@*) Flag (*@$\sigma$@*) W (*@$\to$@*) (*@$\mathbb{Q}$@*) :=
  Quotient.lift
    labeledGraphListDensityLifted
    (...) -- proof that pattern-list equivalence preserves density

noncomputable def FlagList.coe
    {T : Type} [Fintype T] {(*@$\sigma$@*) : FlagType T}
    {t : (*@$\mathbb{N}$@*)} {Vl : Fin t (*@$\to$@*) Type}
    (Fl : FlagList (*@$\sigma$@*) t Vl)
    : QuotLabeledGraphList (*@$\sigma$@*) t Vl :=
  (*@$\llbracket$@*)fun i => (Fl i).out(*@$\rrbracket$@*)

noncomputable def flagListDensity
    {T W : Type} [Fintype T] [Fintype W] [DecidableEq W]
    {t : (*@$\mathbb{N}$@*)} {Vl : Fin t (*@$\to$@*) Type} [FintypeList Vl] [DecidableEqList Vl]
    {(*@$\sigma$@*) : FlagType T}
    : FlagList (*@$\sigma$@*) t Vl (*@$\to$@*) Flag (*@$\sigma$@*) W (*@$\to$@*) (*@$\mathbb{Q}$@*) :=
  fun Fl => quotLabeledGraphListDensity (FlagList.coe Fl)
\end{lstlisting}
Here \lean{labeledGraphListEqv} says that two concrete tuples are equivalent exactly
when their entries are isomorphic as labeled graphs: for each index \lean{i},
there must exist an isomorphism \lean{Gl i}~\(\simeq_f\)~\lean{Gl' i}.  The
instance \lean{labeledGraphListSetoid} supplies the setoid required by
\lean{Quotient}.  Its quotient is \lean{QuotLabeledGraphList}.  The
abbreviation \lean{FlagList} is the more convenient tuple of already-quotiented
$\sigma$-flags.

The three density definitions then lift the original concrete density one
argument at a time.  For a fixed concrete pattern tuple \lean{Hl},
\lean{labeledGraphListDensityLifted Hl} is a function
\lean{Flag}~$\sigma$~\lean{W}~\(\to \mathbb{Q}\), obtained by quotienting the
host argument.  The next definition, \lean{quotLabeledGraphListDensity},
lifts the pattern tuple as well, so its first input is a
\lean{QuotLabeledGraphList} and its second input is a host \lean{Flag}.  In
the final definition, the input \lean{Fl} has type
\lean{FlagList}~$\sigma$~\lean{t Vl}, that is, \lean{Fl i} is already a $\sigma$-flag
for each \lean{i : Fin t}.  The term
\lean{FlagList.coe Fl} has type
\lean{QuotLabeledGraphList}~$\sigma$~\lean{t Vl}; concretely, it chooses a
representative \lean{(Fl i).out} for each $\sigma$-flag in the tuple and forms the
quotient class of the concrete tuple \lean{fun i => (Fl i).out}.  Thus,
\lean{flagListDensity} can call
\lean{quotLabeledGraphListDensity} after converting the
input tuple of $\sigma$-flags into the quotient-of-tuples input expected by the
preceding definition.

The following \lean{flagDensity}$_1$ and \lean{flagDensity}$_2$ are specializations of \lean{flagListDensity}
to singleton and pair lists:
\begin{lstlisting}
noncomputable def flagDensity(*@$_1$@*)
    {T U W : Type} [Fintype T] [Fintype U] [DecidableEq U] [Fintype W] [DecidableEq W]
    {(*@$\sigma$@*) : FlagType T} (F : Flag (*@$\sigma$@*) U) (G : Flag (*@$\sigma$@*) W) : (*@$\mathbb{Q}$@*) :=
  flagListDensity [F](*@$^\mathtt{f}$@*) G

noncomputable def flagDensity(*@$_2$@*)
    {T U(*@$_1$@*) U(*@$_2$@*) W : Type} [Fintype T] [Fintype U(*@$_1$@*)] [DecidableEq U(*@$_1$@*)] [Fintype U(*@$_2$@*)] [DecidableEq U(*@$_2$@*)]
    [Fintype W] [DecidableEq W]
    {(*@$\sigma$@*) : FlagType T} (F(*@$_1$@*) : Flag (*@$\sigma$@*) U(*@$_1$@*)) (F(*@$_2$@*) : Flag (*@$\sigma$@*) U(*@$_2$@*)) (G : Flag (*@$\sigma$@*) W) : (*@$\mathbb{Q}$@*) :=
  flagListDensity [F(*@$_1$@*), F(*@$_2$@*)](*@$^\mathtt{f}$@*) G
\end{lstlisting}
Here the superscripted bracket notation is a tuple constructor for
\lean{FlagList}.  Thus, \lean{[F]}$^{\mathtt{f}}$ is the one-entry
\lean{FlagList} whose unique entry is \lean{F}, and
\lean{[F}$_1$\lean{, F}$_2$\lean{]}$^{\mathtt{f}}$ is the two-entry
\lean{FlagList} sending the first index of \lean{Fin~2} to \lean{F}$_1$ and
the second to \lean{F}$_2$.  The superscript \(\mathtt{f}\) distinguishes this
notation from ordinary list notation and from quotient brackets; it does not
add another quotienting operation.

Building on the definitions above, we prove the following chain rule
(\Cref{lem:chain-rule}) and several of its variants.
Although the statement is the elementary counting identity of
\Cref{sec:background}, its formal proof must account for the quotient
representation:
the counts are defined on concrete labeled graphs, and the identity between
them comes from an explicit bijection between the choices the two sampling
procedures make (\Cref{sec:proof-style}).
\begin{lstlisting}
theorem flagDensity_eq_sum_density_prods
    {(*@$\ell_0$@*) (*@$\ell_1$@*) (*@$\ell$@*) (*@$\ell'$@*) : (*@$\mathbb{N}$@*)}
    {(*@$\sigma$@*) : FlagType (Fin (*@$\ell_0$@*))} (F(*@$_1$@*) : Flag (*@$\sigma$@*) (Fin (*@$\ell_1$@*))) (G : Flag (*@$\sigma$@*) (Fin (*@$\ell$@*)))
    (h(*@$\ell_1$@*) : (*@$\ell_0$@*) (*@$\leq$@*) (*@$\ell_1$@*)) (h(*@$\ell'$@*) : (*@$\ell_1$@*) (*@$\leq$@*) (*@$\ell'$@*)) (h(*@$\ell$@*) : (*@$\ell'$@*) (*@$\leq$@*) (*@$\ell$@*))
    : flagDensity(*@$_1$@*) F(*@$_1$@*) G = (*@$\sum$@*) G' : Flag (*@$\sigma$@*) (Fin (*@$\ell'$@*)), flagDensity(*@$_1$@*) F(*@$_1$@*) G' * flagDensity(*@$_1$@*) G' G := ...
\end{lstlisting}

\subsection{The Flag Algebra}
\label{sec:formal-flag-algebra}

To define the flag algebra, we first construct the free real vector space on
$\mathcal{F}^\sigma$.  Each element of this vector space is represented as a
finitely supported function from $\mathcal{F}^\sigma$ to $\mathbb{R}$, using
Mathlib's \lean{Finsupp} type.  We call the resulting type
\lean{FlagVector}~$\sigma$; the constructor \lean{basisVector} produces the
basis vector for each $\sigma$-flag:
\begin{lstlisting}
abbrev FlagVector {k : (*@$\mathbb{N}$@*)} ((*@$\sigma$@*) : FlagType (Fin k)) : Type :=
  FinFlag (*@$\sigma$@*) (*@$\to_0$@*) (*@$\mathbb{R}$@*)

def basisVector {k : (*@$\mathbb{N}$@*)} {(*@$\sigma$@*) : FlagType (Fin k)} (F : FinFlag (*@$\sigma$@*)) : FlagVector (*@$\sigma$@*) :=
  Finsupp.single F 1
\end{lstlisting}
Here $\to_0$ denotes the type of finitely supported functions, and \lean{Finsupp.single F 1} constructs the basis vector $\mathbf e_F$ for $F \in \mathcal{F}^\sigma$: the function taking value $1$ at $F$ and $0$ elsewhere.

Next, we define the zero space $\mathcal{Z}^\sigma$.  The auxiliary function
\lean{flagExpansion F}~$\ell$ is the Lean version of
$\sum_{G \in \mathcal{F}^\sigma_\ell} \den{F}{G}\cdot \mathbf e_G$, the expansion of
$F$ at size $\ell$.  We encode the generating element of
Equation~\eqref{eq:zero-element} as \lean{zeroElement}, \lean{zeroSet} $\sigma$
collects all elements of this form, and \lean{ZeroSpace} $\sigma$ is the
linear subspace they span, represented in Lean by a \lean{Submodule}:
\begin{lstlisting}
noncomputable def flagExpansion {k : (*@$\mathbb{N}$@*)} {(*@$\sigma$@*) : FlagType (Fin k)} (F : FinFlag (*@$\sigma$@*)) ((*@$\ell$@*) : (*@$\mathbb{N}$@*))
    : FlagVector (*@$\sigma$@*) :=
  (*@$\sum$@*) F' : Flag (*@$\sigma$@*) (Fin (*@$\ell$@*)), flagDensity(*@$_1$@*) F.2 F' (*@$\leansmul$@*) basisVector (*@$\langle$@*)(*@$\ell$@*), F'(*@$\rangle$@*)

noncomputable def zeroElement {k : (*@$\mathbb{N}$@*)} {(*@$\sigma$@*) : FlagType (Fin k)} (F : FinFlag (*@$\sigma$@*)) ((*@$\ell$@*) : (*@$\mathbb{N}$@*))
    : FlagVector (*@$\sigma$@*) :=
  basisVector F - flagExpansion F (*@$\ell$@*)

noncomputable def zeroSet {k : (*@$\mathbb{N}$@*)} ((*@$\sigma$@*) : FlagType (Fin k))
    : Set (FlagVector (*@$\sigma$@*)) :=
  {z | (*@$\exists$@*) (F : FinFlag (*@$\sigma$@*)) ((*@$\ell$@*) : (*@$\mathbb{N}$@*)), F.1 (*@$\leq$@*) (*@$\ell$@*) (*@$\wedge$@*) z = zeroElement F (*@$\ell$@*)}

noncomputable def ZeroSpace {k : (*@$\mathbb{N}$@*)} ((*@$\sigma$@*) : FlagType (Fin k))
    : Submodule (*@$\mathbb{R}$@*) (FlagVector (*@$\sigma$@*)) :=
  Submodule.span (*@$\mathbb{R}$@*) (zeroSet (*@$\sigma$@*))
\end{lstlisting}
In \lean{flagExpansion}, the black dot is Lean's scalar multiplication
operation on \lean{FlagVector}s: it scales the basis vector indexed by
\(\langle\ell, F'\rangle\) by the density \lean{flagDensity$_1$ F.2 F'}.

Following the same pattern as the definition of \lean{Flag} from
\lean{LabeledGraph} in \Cref{sec:flagtypes}, we now define
the flag algebra $\mathcal{A}^\sigma$ as a quotient of \lean{FlagVector} $\sigma$.
The equivalence relation \lean{flagVectorEqv} declares two elements of \lean{FlagVector} $\sigma$ equivalent
when their difference lies in \lean{ZeroSpace} $\sigma$, and \lean{flagVectorSetoid} $\sigma$
packages this relation into a setoid so that Lean's quotient type can be applied.
\begin{lstlisting}
def flagVectorEqv {k : (*@$\mathbb{N}$@*)} {(*@$\sigma$@*) : FlagType (Fin k)} (f g : FlagVector (*@$\sigma$@*)) : Prop :=
  f - g (*@$\in$@*) ZeroSpace (*@$\sigma$@*)

instance flagVectorSetoid {k : (*@$\mathbb{N}$@*)} ((*@$\sigma$@*) : FlagType (Fin k)) : Setoid (FlagVector (*@$\sigma$@*)) where
  r     := flagVectorEqv
  iseqv := ... -- proof that flagVectorEqv is an equivalence relation

abbrev FlagAlgebra {k : (*@$\mathbb{N}$@*)} ((*@$\sigma$@*) : FlagType (Fin k)) : Type :=
  Quotient (flagVectorSetoid (*@$\sigma$@*))
\end{lstlisting}
Since \lean{FlagAlgebra}~$\sigma$ is another quotient type, if $f$ is a
\lean{FlagVector}~$\sigma$, the unsubscripted notation $\llbracket f\rrbracket$ denotes
its class in this quotient.

The quotient therefore imposes a family of expansion identities: for every
$\sigma$-flag \lean{F : FinFlag}~$\sigma$ and every size $\ell \geq F.1$, the class of
\lean{basisVector F} is identified with the class of
\lean{flagExpansion F}~$\ell$, that is, with the linear combination of all
size-$\ell$ $\sigma$-flags weighted by their densities over \(F\).

Formalizing multiplication in $\mathcal{A}^\sigma$ also requires several steps.
The term \lean{flagMulWithSize} defines the product
of two $\sigma$-flags at a chosen output size $\ell$ as in
Equation~\eqref{eq:mul}. The term \lean{flagMul} specializes this product to the minimal admissible
size $\ell = F.1 + F'.1 - k$. Then, \lean{bilinearExtension} extends
\lean{flagMul} linearly in each argument to give a product on
\lean{FlagVector} $\sigma$.  This intermediate product is only a bilinear
operation on formal sums; the quotient is where the usual algebra laws are
established.  Finally, \lean{Quotient.map$_2$} lifts the product to
\lean{FlagAlgebra} $\sigma$.
\begin{lstlisting}
noncomputable def flagMulWithSize {k : (*@$\mathbb{N}$@*)} {(*@$\sigma$@*) : FlagType (Fin k)} (F F' : FinFlag (*@$\sigma$@*)) ((*@$\ell$@*) : (*@$\mathbb{N}$@*))
    : FlagVector (*@$\sigma$@*) :=
  (*@$\sum$@*) G : Flag (*@$\sigma$@*) (Fin (*@$\ell$@*)), flagDensity(*@$_2$@*) F.2 F'.2 G (*@$\leansmul$@*) basisVector (*@$\langle$@*)(*@$\ell$@*), G(*@$\rangle$@*)

noncomputable def flagMul {k : (*@$\mathbb{N}$@*)} {(*@$\sigma$@*) : FlagType (Fin k)} (F F' : FinFlag (*@$\sigma$@*)) : FlagVector (*@$\sigma$@*) :=
  flagMulWithSize F F' (F.1 + F'.1 - k)

noncomputable instance {k : (*@$\mathbb{N}$@*)} ((*@$\sigma$@*) : FlagType (Fin k)) : Mul (FlagVector (*@$\sigma$@*)) where
  mul := bilinearExtension flagMul

noncomputable instance {k : (*@$\mathbb{N}$@*)} ((*@$\sigma$@*) : FlagType (Fin k)) : Mul (FlagAlgebra (*@$\sigma$@*)) where
  mul := by
    apply Quotient.map(*@$_2$@*) ((*@$\cdot$@*) * (*@$\cdot$@*))
    -- proof omitted: for f f' g g' : FlagVector (*@\textcolor{gray}{$\sigma$}@*),
    --   flagVectorEqv f f' and flagVectorEqv g g' imply flagVectorEqv (f * g) (f' * g')
\end{lstlisting}
Here \lean{Quotient.map$_2$} is a Lean built-in function that lifts a binary
operation to a quotient type given a proof that the operation respects the
equivalence relation.  In the displayed definition, the expression
\lean{($\cdot$ * $\cdot$)} is not a new operation: since its two arguments are
flag vectors, Lean resolves \lean{*} using the immediately preceding
\lean{Mul (FlagVector}~$\sigma$\lean{)} instance, whose multiplication is
\lean{bilinearExtension flagMul}.  The quotient instance then lifts exactly
this vector-level multiplication to \lean{FlagAlgebra} $\sigma$.  The proof
obligation amounts to showing that this multiplication respects
\lean{flagVectorEqv}, as noted in the comment above.  The load-bearing lemma
is that \lean{ZeroSpace} $\sigma$ is
closed under multiplication by arbitrary flag vectors:
\begin{lstlisting}
theorem flagVector_mul_zeroSpace {k : (*@$\mathbb{N}$@*)} {(*@$\sigma$@*) : FlagType (Fin k)}
    (f : FlagVector (*@$\sigma$@*)) {z : FlagVector (*@$\sigma$@*)} (hz_zero : z (*@$\in$@*) ZeroSpace (*@$\sigma$@*))
    : f * z (*@$\in$@*) ZeroSpace (*@$\sigma$@*) := ...
\end{lstlisting}
Together with commutativity, this says that \lean{ZeroSpace} $\sigma$ is an
ideal in the algebraic sense: multiplying a zero-space element by any flag
vector still gives a zero-space element.  This is exactly the property needed
to define multiplication on the quotient.  If two representatives differ by a
zero-space element, then multiplying both by the same vector still gives
representatives that differ by a zero-space element, so the product does not
depend on the chosen representatives.  This is the compatibility condition
that allows the Lean definition to use \lean{Quotient.map$_2$}.  A separate
theorem handles the choice of output size:
\begin{lstlisting}
theorem flagMulWithSize_indep_on_size {k : (*@$\mathbb{N}$@*)} {(*@$\sigma$@*) : FlagType (Fin k)}
    {F(*@$_1$@*) F(*@$_2$@*) : FinFlag (*@$\sigma$@*)} {(*@$\ell_1$@*) (*@$\ell_2$@*) : (*@$\mathbb{N}$@*)}
    (h(*@$\ell_1$@*) : F(*@$_1$@*).1 + F(*@$_2$@*).1 (*@$\leq$@*) (*@$\ell_1$@*) + k) (h(*@$\ell_2$@*) : F(*@$_1$@*).1 + F(*@$_2$@*).1 (*@$\leq$@*) (*@$\ell_2$@*) + k)
    : flagVectorEqv (flagMulWithSize F(*@$_1$@*) F(*@$_2$@*) (*@$\ell_1$@*)) (flagMulWithSize F(*@$_1$@*) F(*@$_2$@*) (*@$\ell_2$@*)) := ...
\end{lstlisting}
The hypotheses \lean{h$\ell_1$} and \lean{h$\ell_2$} are the admissibility
assumptions on the two output sizes: \(\ell_1\) and \(\ell_2\) must both be
large enough to contain the two input $\sigma$-flags after identifying their common flag
type \(\sigma\).  Here \(F_j.1\) is the total number of vertices of \(F_j\),
and \(k\) is the number of labeled vertices in \(\sigma\).  Equivalently, the intended output
size is at least \(F_1.1 + F_2.1 - k\): the unlabeled vertices outside the flag
type are sampled disjointly in the two $\sigma$-flags, while the \(k\) labeled vertices
are shared.  The theorem then says that any two such admissible output sizes
give equivalent flag vectors, justifying the choice of the minimal size in
\lean{flagMul}.

With these definitions in place, we prove that \lean{FlagAlgebra} $\sigma$ is
an algebra over the real numbers and that its multiplication is commutative,
completing the formalization of the algebraic structure of
$\mathcal{A}^\sigma$.

\subsection{Positive Homomorphisms and Semantic Order}

This subsection describes how we encode the semantic side of flag algebras in
Lean: sequences of finite $\sigma$-flags, their limiting density functions, positive
homomorphisms, and the order induced by testing against all positive
homomorphisms.  The convergence theorem from
Theorem~\ref{thm:convergent-hom} is the main mathematical bridge between finite
$\sigma$-flag sequences and positive homomorphisms, but the formalization first has to
make each of these surrounding notions explicit.

A sequence of $\sigma$-flags indexed by $\mathbb{N}$ is represented by
\lean{FlagSeq}~$\sigma$.  The predicate \lean{ConvergesTo} says that such a
sequence has a prescribed limiting density function.
\begin{lstlisting}
abbrev FlagSeq {k : (*@$\mathbb{N}$@*)} ((*@$\sigma$@*) : FlagType (Fin k)) :=
  (*@$\mathbb{N}$@*) (*@$\to$@*) FinFlag (*@$\sigma$@*)

def ConvergesTo {k : (*@$\mathbb{N}$@*)} {(*@$\sigma$@*) : FlagType (Fin k)} (s : FlagSeq (*@$\sigma$@*)) (a : FinFlag (*@$\sigma$@*) (*@$\to$@*) (*@$\mathbb{R}$@*)) : Prop :=
  Increases s (*@$\wedge$@*) Tendsto (flagDensitySeq s) atTop ((*@$\mathcal{N}$@*) a)
\end{lstlisting}
Here the candidate limit \(a\) is a real-valued function on finite
\(\sigma\)-flags.  The first conjunct, \lean{Increases s}, asserts that the
underlying sizes of the \(\sigma\)-flags \(s(0),s(1),\ldots\) strictly increase.  The
second conjunct uses Lean's filter notation for convergence:
\lean{atTop} is the limit \(n\to\infty\) on \(\mathbb{N}\), and
\(\mathcal{N} a\) is the neighborhood filter of the function~\(a\).  The term
\lean{flagDensitySeq s} is the sequence of density functions associated to
\(s\); its \(n\)-th value sends a finite \(\sigma\)-flag \(F\) to
\(\den{F}{s(n)}\).  Thus, the \lean{Tendsto} line is the Lean statement that,
for every finite \(\sigma\)-flag~\(F\),
\[
  \lim_{n\to\infty}\den{F}{s(n)}=a(F).
\]

Mathematically, a positive homomorphism is a map
\(\phi:\mathcal{A}^\sigma\to\mathbb{R}\) that preserves addition,
multiplication, multiplication by real scalars, and the unit, and satisfies
\(\phi([F])\ge 0\) for every finite \(\sigma\)-flag \(F\).  The Lean definition
follows this sentence almost literally:
\begin{lstlisting}
abbrev Hom {k : (*@$\mathbb{N}$@*)} ((*@$\sigma$@*) : FlagType (Fin k)) :=
  FlagAlgebra (*@$\sigma$@*) (*@$\to_a$@*)[(*@$\mathbb{R}$@*)] (*@$\mathbb{R}$@*)

def PositiveHom {k : (*@$\mathbb{N}$@*)} ((*@$\sigma$@*) : FlagType (Fin k)) : Type :=
  { (*@$\phi$@*) : Hom (*@$\sigma$@*) // (*@$\forall$@*) F : FinFlag (*@$\sigma$@*), (*@$\phi$@*) (*@$\llbracket$@*)basisVector F(*@$\rrbracket$@*) (*@$\geq$@*) 0 }
\end{lstlisting}
The arrow \lean{$\to_a$[$\mathbb{R}$]} in the first definition is Lean's
notation for the type of functions preserving addition, multiplication,
multiplication by real scalars, and the unit.  Thus, \lean{Hom}
abbreviates the type of these structure-preserving maps from
\lean{FlagAlgebra} to \(\mathbb{R}\), with the flag type \(\sigma\) supplied as
its argument.  The second definition is Lean's subtype construction.
Mathematically, we read it in the usual set-comprehension form
\[
  \{\phi\in \operatorname{Hom}(\mathcal{A}^\sigma,\mathbb{R})
    \mid \phi([F])\ge 0
    \text{ for every }F\in\mathcal{F}^{\sigma}\}.
\]
In other words, the only extra condition imposed by \lean{PositiveHom} is
non-negativity on flag basis elements.  In Lean syntax, the expression after
the subtype separator \lean{//},
\[
  \lean{$\forall$ F : FinFlag $\sigma$, $\phi$ $\llbracket$basisVector F$\rrbracket$ $\geq$ 0},
\]
is the predicate that imposes this condition.
Here \lean{basisVector F} is the formal basis vector representing \(F\), and
the unsubscripted quotient brackets send that flag vector to its class in the flag
algebra before \(\phi\) evaluates it.

Using these definitions, we formalize Theorem~\ref{thm:convergent-hom} as two
separate theorems.\footnote{For readability, the displayed statements give the
limit profile the ambient function type
\lean{FinFlag $\sigma$ $\to$ $\mathbb{R}$}.  The Lean implementation instead
packages such profiles in the auxiliary subtype
\lean{FlagDensitySpace}~\(\sigma\), which records that every value lies in
\([0,1]\), and packages the profile induced by a positive homomorphism as
\lean{$\phi$.coe}.  Accordingly, the implemented conclusions are written more
compactly as \lean{$\phi$.coe = a} and
\lean{ConvergesTo s $\phi$.coe}.  The formulations are equivalent: every flag
density lies in \([0,1]\), so the pointwise limit of a convergent flag sequence
does as well.}
\begin{lstlisting}
(*@\textcolor{gray}{\itshape -- Theorem~\ref{thm:convergent-hom} (a)}@*)
theorem flagSeq_limit_mem_positiveHom {k : (*@$\mathbb{N}$@*)} {(*@$\sigma$@*) : FlagType (Fin k)}
    (s : FlagSeq (*@$\sigma$@*)) {a : FinFlag (*@$\sigma$@*) (*@$\to$@*) (*@$\mathbb{R}$@*)} (hs_conv : ConvergesTo s a)
    : (*@$\exists$@*) ((*@$\phi$@*) : PositiveHom (*@$\sigma$@*)), (*@$\forall$@*) (F : FinFlag (*@$\sigma$@*)), (*@$\phi$@*) (*@$\llbracket$@*)basisVector F(*@$\rrbracket$@*) = a F := ...

(*@\textcolor{gray}{\itshape -- Theorem~\ref{thm:convergent-hom} (b)}@*)
theorem positiveHom_as_flagSeq_limit {k : (*@$\mathbb{N}$@*)} {(*@$\sigma$@*) : FlagType (Fin k)}
    ((*@$\phi$@*) : PositiveHom (*@$\sigma$@*))
    : (*@$\exists$@*) (s : FlagSeq (*@$\sigma$@*)), ConvergesTo s (fun F => (*@$\phi$@*) (*@$\llbracket$@*)basisVector F(*@$\rrbracket$@*)) := ...
\end{lstlisting}
Here \(a\) is the same real-valued function on finite \(\sigma\)-flags that
appears in the definition of \lean{ConvergesTo}.  The first theorem assumes
\lean{ConvergesTo s a} and produces a positive homomorphism \(\phi\) satisfying
\[
  \phi\bigl(\llbracket\lean{basisVector F}\rrbracket\bigr)=a(F)
  \qquad\text{for every finite \(\sigma\)-flag \(F\)}.
\]
Thus, the limiting density profile of every convergent \(\sigma\)-flag
sequence is induced by a positive homomorphism.  Conversely, the second
theorem starts with a positive homomorphism \(\phi\) and produces a
\(\sigma\)-flag sequence \(s\) satisfying
\[
  \lean{ConvergesTo}\;s\;
    \bigl(F\mapsto\phi(\llbracket\lean{basisVector F}\rrbracket)\bigr).
\]
In other words, the density profile induced by every positive homomorphism
arises as the limit of a \(\sigma\)-flag sequence.
Both directions require nontrivial formalization work.  In part (a), one must
show that the limit \(a\) of a \(\sigma\)-flag sequence \(s\) satisfies all algebraic
properties of a positive homomorphism, including details that informal proofs
leave implicit.  Part~(b) goes in the opposite direction.  Starting from a
positive homomorphism~\(\phi\), the proof uses the values~\(\phi([G])\) to put a
probability distribution on the finite \(\sigma\)-flags of each sufficiently
large size: a \(\sigma\)-flag \(G\) is sampled with weight \(\phi([G])\).  These weights are
indeed probabilities because \(\phi\) is non-negative on flag basis elements and satisfies the normalization
identity
\[
  \sum_{G\in\mathcal{F}^{\sigma}_{\ell}} \phi([G])
  =
  \phi\left(\sum_{G\in\mathcal{F}^{\sigma}_{\ell}} [G]\right)
  =
  \phi(\mathbf{1}_\sigma)
  =
  1.
\]
One then samples an entire sequence
\[
  X=(X_0,X_1,\ldots),
\]
where \(X_n\) is chosen from the distribution on \(\sigma\)-flags of size \(n^2+k\).
The key estimate says that, for each fixed test \(\sigma\)-flag \(F\), the density
\(\den{F}{X_n}\) is very likely to be close to \(\phi([F])\) once \(n\) is large.
Thus, almost every sampled sequence has the limiting density profile prescribed
by \(\phi\).  Selecting one such realization and calling it
\(G_0,G_1,\ldots\) gives the deterministic \(\sigma\)-flag sequence required by
Theorem~\ref{thm:convergent-hom}(b).  The Lean proof formalizes this argument
by constructing the product measure on flag sequences, proving that the
required coordinatewise convergence event has probability one, and selecting
a realization from that event.

We also define the semantic non-negativity cone $\mathcal{C}^\sigma$
and the order it induces on $\mathcal{A}^\sigma$.
Here, 
\lean{semanticCone}~$\sigma$ is the set of elements in
\lean{FlagAlgebra}~$\sigma$ that every positive homomorphism
evaluates non-negatively.  The induced order is installed as a Lean
\lean{LE} instance, with $f \leq g$ defined to mean
$g - f \in \mathcal{C}^\sigma$.
\begin{lstlisting}
def semanticCone {k : (*@$\mathbb{N}$@*)} ((*@$\sigma$@*) : FlagType (Fin k)) : Set (FlagAlgebra (*@$\sigma$@*)) :=
  { f | (*@$\forall$@*) (*@$\phi$@*) : PositiveHom (*@$\sigma$@*), (*@$\phi$@*) f (*@$\geq$@*) 0 }

instance {k : (*@$\mathbb{N}$@*)} ((*@$\sigma$@*) : FlagType (Fin k)) : LE (FlagAlgebra (*@$\sigma$@*)) where
  le := fun f g => g - f (*@$\in$@*) semanticCone (*@$\sigma$@*)
\end{lstlisting}

\subsection{The Downward Operator}
\label{sec:formal-downward}

The downward operator from \Cref{sec:background-downward} forgets the
labels of a typed flag-algebra expression while keeping the correct averaging
factor.  On a single $\sigma$-flag $F$, the intended formula is
\[
  \llbracket [F] \rrbracket_\sigma = q_\sigma(F)\cdot [F|_\emptyset].
\]
The displayed formula specifies the downward operator on flag basis elements.
To implement it, we first define the normalizing factor and the operation of
forgetting labels on concrete $\sigma$-typed labeled graphs, and prove that
both are independent of the chosen representative.  This gives the downward
image of an individual $\sigma$-flag.  We then extend this operation linearly
to flag vectors and prove that two flag vectors representing the same element
of $\mathcal{A}^\sigma$ have the same image in $\mathcal{A}^\emptyset$.  The
result is a well-defined operator
\[
  \mathcal{A}^\sigma\longrightarrow\mathcal{A}^\emptyset.
\]

Fix a flag type $\sigma$ with $k$ vertices. We regard
these vertices of $\sigma$ as labels.  A concrete $\sigma$-typed labeled
graph~$G$ on~$n$ vertices includes an embedding of $\sigma$ into $G$, which
places these labels on $k$ vertices of $G$.  Its normalizing factor is
computed by counting such label placements.  The numerator,
\lean{isomorphismCount G}, counts the injections of the $k$ labels into the
underlying graph that produce a typed labeled graph isomorphic to~$G$.  Its
denominator
\[
  \frac{n!}{(n-k)!}
\]
is the total number of injections of those labels into the $n$ vertices.
Their ratio is $q_\sigma(G)$.

The other ingredient, $F|_\emptyset$, simply forgets the type embedding while
retaining the underlying graph.  The following listing shows the concrete
definitions of the two ingredients, their lifts to isomorphism classes, and
the map \lean{downwardFlag} that combines them.  The subsequent listing carries
out the linear extension and the passage to the flag algebra described above.
\begin{lstlisting}
noncomputable def downwardNormalizingFactor_labeledGraph {k n : (*@$\mathbb{N}$@*)} {(*@$\sigma$@*) : FlagType (Fin k)}
    (G : LabeledGraph (*@$\sigma$@*) (Fin n)) : (*@$\mathbb{Q}$@*) :=
  let num_of_all_injections := n.factorial / (n - k).factorial
  isomorphismCount G / num_of_all_injections

noncomputable def downwardNormalizingFactor {k n : (*@$\mathbb{N}$@*)} {(*@$\sigma$@*) : FlagType (Fin k)}
    : Flag (*@$\sigma$@*) (Fin n) (*@$\to$@*) (*@$\mathbb{Q}$@*) :=
  Quotient.lift (fun G : LabeledGraph (*@$\sigma$@*) (Fin n) => downwardNormalizingFactor_labeledGraph G) (...)

def unlabel_labeledGraph {k : (*@$\mathbb{N}$@*)} {V : Type} {(*@$\sigma$@*) : FlagType (Fin k)} (G : LabeledGraph (*@$\sigma$@*) V)
    : LabeledGraph (*@$\emptyt$@*) V := ...

noncomputable def unlabel {k : (*@$\mathbb{N}$@*)} {V : Type} {(*@$\sigma$@*) : FlagType (Fin k)}
    : Flag (*@$\sigma$@*) V (*@$\to$@*) Flag (*@$\emptyt$@*) V :=
  Quotient.lift (fun G : LabeledGraph (*@$\sigma$@*) V => (*@$\llbracket$@*)unlabel_labeledGraph G(*@$\rrbracket$@*)) (...)

noncomputable def downwardFlag {k n : (*@$\mathbb{N}$@*)} {(*@$\sigma$@*) : FlagType (Fin k)} (F : Flag (*@$\sigma$@*) (Fin n))
    : FlagVector (*@$\emptyt$@*) :=
  downwardNormalizingFactor F (*@$\leansmul$@*) basisVector (*@$\langle$@*)n, unlabel F(*@$\rangle$@*)
\end{lstlisting}

The definitions appear in the order of this construction.  The counting ratio
on concrete $\sigma$-typed labeled graphs is implemented by
\lean{downwardNormalizingFactor\_labeledGraph}, and its lift to isomorphism
classes is \lean{downwardNormalizingFactor}.
Similarly, \lean{unlabel\_labeledGraph} forgets the type embedding at the concrete
level, while \lean{unlabel} lifts this operation to flags.  Finally,
\lean{downwardFlag} forms the empty-type basis vector indexed by the pair
$\langle n,\lean{unlabel F}\rangle$ and scales it by
\lean{downwardNormalizingFactor F}.

We next lift the single-flag construction to the algebra.  The map
\lean{downwardFlagVector} first extends \lean{downwardFlag} linearly from basis
\(\sigma\)-flags to arbitrary $\sigma$-typed flag vectors.  Its output is an empty-type
flag vector; taking its zero-space quotient gives
\lean{downwardFlagVectorQuot}, whose values lie in
\lean{FlagAlgebra}~$\emptyt$.  At this stage the input is still a flag vector
rather than an element of \lean{FlagAlgebra}~$\sigma$.

To quotient the input as well, we must show that the construction is
independent of the chosen representative.  The key respect lemma says that if
two $\sigma$-typed flag vectors differ by an element of the zero space, then
their downward images represent the same element of the empty-type algebra.
This is exactly the compatibility condition needed to define \lean{downward}
by a final quotient lift.
\begin{lstlisting}
noncomputable def downwardFlagVector {k : (*@$\mathbb{N}$@*)} {(*@$\sigma$@*) : FlagType (Fin k)}
    : FlagVector (*@$\sigma$@*) (*@$\to$@*) FlagVector (*@$\emptyt$@*) :=
  linearExtension (fun F : FinFlag (*@$\sigma$@*) => downwardFlag F.2)

noncomputable def downwardFlagVectorQuot {k : (*@$\mathbb{N}$@*)} {(*@$\sigma$@*) : FlagType (Fin k)}
    : FlagVector (*@$\sigma$@*) (*@$\to$@*) FlagAlgebra (*@$\emptyt$@*) :=
  fun f : FlagVector (*@$\sigma$@*) => (*@$\llbracket$@*)downwardFlagVector f(*@$\rrbracket$@*)

lemma downwardFlagVectorQuot_respect_eqv {k : (*@$\mathbb{N}$@*)} {(*@$\sigma$@*) : FlagType (Fin k)}
    {f f' : FlagVector (*@$\sigma$@*)} (h : f - f' (*@$\in$@*) ZeroSpace (*@$\sigma$@*))
    : downwardFlagVectorQuot f = downwardFlagVectorQuot f' := ...

noncomputable def downward {k : (*@$\mathbb{N}$@*)} {(*@$\sigma$@*) : FlagType (Fin k)}
    : FlagAlgebra (*@$\sigma$@*) (*@$\to$@*) FlagAlgebra (*@$\emptyt$@*) :=
  Quotient.lift (fun f : FlagVector (*@$\sigma$@*) => downwardFlagVectorQuot f) (...)

notation "(*@\textcolor{orange!80!black}{$\llbracket$}@*)" f "(*@\textcolor{orange!80!black}{$\rrbracket_0$}@*)" => (downward f)
\end{lstlisting}
The last definition is now a genuine map
\(\lean{FlagAlgebra}~\sigma\to\lean{FlagAlgebra}~\emptyt\), independent of all
representative choices.  The final line introduces the Lean notation
\(\llbracket f\rrbracket_0\) for \lean{downward f}; the subscript $0$
emphasizes that the result lies in the empty-type algebra.  In mathematical
prose we use \(\llbracket f\rrbracket_\sigma\), following
\Cref{sec:background-downward}, where the subscript instead records the type
being forgotten.  Altogether, the implementation consists of a quotient lift
from concrete labeled graphs to flags, a linear extension to flag vectors,
and a second quotient lift to the flag algebra.

Having constructed the algebraic map, we turn to its semantic meaning.  The
key result is the formal random-extension identity from
Equation~\eqref{eq:ensemble}; preservation of semantic non-negativity will
follow from it.

For the measure-theoretic statement, the implementation represents a random
positive homomorphism by its density profile.  The sample space
\lean{PositiveHomSpace}~$\sigma$ is the measurable subspace of
\([0,1]^{\mathcal{F}^\sigma}\) consisting of profiles induced by positive
homomorphisms.  Each point $\phi$ of this space determines a positive
homomorphism, and \lean{PositiveHomSpace.toPosHom $\phi$ f} evaluates that
homomorphism on $f$.  Thus, a probability measure on
\lean{PositiveHomSpace}~$\sigma$ can be read simply as a distribution of
positive homomorphisms.  With this representation, the random-extension
identity is stated as follows.
\begin{lstlisting}
theorem (*@\texttt{exists\_probMeasure\_extend\_emptyType\_positiveHom}@*) {k : (*@$\mathbb{N}$@*)} {(*@$\sigma$@*) : FlagType (Fin k)}
    {(*@$\phi_0$@*) : PositiveHom (*@$\emptyt$@*)} (h(*@$\sigma$@*) : (*@$\phi_0$@*) ((*@$\langle\sigma\rangle_0$@*)) > 0)
    : (*@$\exists$@*) (*@$\mathbb{P}$@*) : ProbabilityMeasure (PositiveHomSpace (*@$\sigma$@*)), (*@$\forall$@*) f : FlagAlgebra (*@$\sigma$@*),
        (*@$\int$@*) ((*@$\phi$@*) : PositiveHomSpace (*@$\sigma$@*)), PositiveHomSpace.toPosHom (*@$\phi$@*) f (*@$\partial$@*)(*@$\mathbb{P}$@*)
        = (*@$\phi_0$@*) (*@$\llbracket$@*)f(*@$\rrbracket_0$@*) / (*@$\phi_0$@*) (*@$\llbracket$@*)(1 : FlagAlgebra (*@$\sigma$@*))(*@$\rrbracket_0$@*)
  := ...
\end{lstlisting}
Mathlib writes \lean{$\int$ x, g x $\partial\mu$} for the integral of
$x\mapsto g(x)$ with respect to $\mu$; here $\partial\mu$ specifies the
measure and is not a derivative.  In the displayed theorem, $\phi$ ranges over
\lean{PositiveHomSpace}~$\sigma$, the integrand evaluates the corresponding
positive homomorphism on $f$, and the measure is $\mathbb{P}$.  Because
$\mathbb{P}$ is a probability measure, the left-hand side is the expectation
\[
  \mathbb{E}_{\phi\sim \mathbb{P}}[\phi(f)].
\]
The theorem therefore says exactly that $\mathbb{P}$ satisfies
Equation~\eqref{eq:ensemble}, expressed using Lean's downward notation:
\[
  \mathbb{E}_{\phi\sim \mathbb{P}}[\phi(f)]
  =
  \frac{\phi_0(\llbracket f\rrbracket_0)}
       {\phi_0(\llbracket \mathbf{1}_\sigma\rrbracket_0)} .
\]
Thus, $\mathbb{P}$ realizes the random-extension measure
$\Ext_\sigma(\phi_0)$ from \Cref{sec:background-downward}.  Equivalently,
\[
  \phi_0(\llbracket f\rrbracket_0)
  =
  \phi_0(\llbracket \mathbf{1}_\sigma\rrbracket_0)
  \mathbb{E}_{\phi\sim\mathbb{P}}[\phi(f)].
\]
This rearranged form explains the probabilistic construction behind the
formal proof.  Take a convergent sequence of finite graphs representing
$\phi_0$; in each graph, choose an embedding of $\sigma$ uniformly at random,
use the embedded vertices as labels, and evaluate $f$.  Averaging and passing
to the limit produces the expectation above, while the positive factor
$\phi_0(\llbracket \mathbf{1}_\sigma\rrbracket_0)$ accounts for the
normalization of the random embedding.

The random-extension identity now yields the desired transfer property.  If
$f$ is non-negative under every positive homomorphism of the $\sigma$-typed
algebra, then every value $\phi(f)$ in the expectation is non-negative.  The
rearranged identity therefore shows that the downward image of $f$ is
non-negative under every empty-type positive homomorphism.  This is the Lean
counterpart of \Cref{thm:downward-nonneg}:
\begin{lstlisting}
theorem downward_preserve_semanticCone {k : (*@$\mathbb{N}$@*)} {(*@$\sigma$@*) : FlagType (Fin k)}
    (f : FlagAlgebra (*@$\sigma$@*)) (hf : f (*@$\in$@*) semanticCone (*@$\sigma$@*))
    : (*@$\llbracket$@*)f(*@$\rrbracket_0$@*) (*@$\in$@*) semanticCone (*@$\emptyt$@*) := ...
\end{lstlisting}
The proof considers an arbitrary empty-type positive homomorphism $\phi_0$.
When $\phi_0(\langle\sigma\rangle_0)>0$, it applies the random-extension
measure above.  When $\phi_0(\langle\sigma\rangle_0)=0$, a separate lemma shows
that every downward image evaluates to zero, so the conclusion is immediate.
The density bounds developed later use this theorem to transfer inequalities
from a labeled algebra to the unlabeled one.

\subsection{Forbidden Graphs and the Ensemble Semantic Order}
\label{sec:forbidden}

In \Cref{sec:background}, a finite forbidden family \(\mathcal{H}\) is fixed
first, and the flag types, flags, algebras, and positive homomorphisms are all
built inside the \(\mathcal{H}\)-free world.  In that setup, an inequality
\(f\leq_{\mathcal{H},\sigma} g\) is already an inequality in the constrained
\(\sigma\)-typed algebra: the forbidden configurations have been removed
before the semantic order is tested.

Our Lean formalization deliberately separates building the algebra from
imposing the constraint.  The core development constructs the ambient flag
algebra \(\mathcal{A}^\sigma\) of all finite simple graphs, with no forbidden
family as a parameter.  A forbidden graph enters only when we compare two
ambient algebra elements \(f,g\in\mathcal{A}^\sigma\).  We restrict attention to
empty-type positive homomorphisms~\(\phi_0\) that assign density zero to every
graph containing the forbidden graph as a (not necessarily induced) subgraph.
For each such \(\phi_0\) for which \(\sigma\) has positive density, take a
sequence of finite graphs whose unlabeled density profiles converge to
\(\phi_0\).  In each graph, choose an embedding of \(\sigma\) uniformly at random
and use the embedded vertices as labels.  The limiting distribution of the
resulting \(\sigma\)-typed density profiles is \(\Ext_\sigma(\phi_0)\), and we
require \(\phi(f)\leq\phi(g)\) with probability one when \(\phi\) is sampled from
this distribution.

Thus, the forbidden-graph condition does not define a separate flag-algebra
universe in the implementation.  Instead, it restricts the empty-type positive
homomorphisms~\(\phi_0\) from which the random extensions used in the comparison
are formed.  The elements \(f\) and \(g\), as well as the definitions of flags,
densities, products, the downward operator, and positive homomorphisms, remain
those of the ambient algebra and can therefore be reused across extremal
problems.

The order used for this a posteriori condition is the
\emph{ensemble semantic order}.  We state it for a single forbidden graph
\(H\); a forbidden family would impose the zero-density condition below for
every member.

\Needspace*{12\baselineskip}
\begin{definition}[Ensemble semantic order]\label{def:ensemble-semantic-order}
Fix a forbidden graph \(H\), a flag type \(\sigma\), and elements
\(f,g\in\mathcal{A}^\sigma\) in the ambient flag algebra.  We write
\(f\leq^{\mathrm{ens}}_{H,\sigma} g\) if, for every empty-type positive
homomorphism \(\phi_0\) such that, viewing graphs as
empty-type flag-algebra elements,
\[
  \phi_0([F])=0
  \quad\text{for every graph \(F\) containing \(H\) as a subgraph}
  \qquad\text{and}\qquad
  \phi_0(\langle\sigma\rangle_0)>0,
\]
the random extension \(\phi\sim\Ext_\sigma(\phi_0)\) satisfies
\[
  \mathbb{P}\bigl[\phi(f)\leq\phi(g)\bigr]=1.
\]
\end{definition}

The first condition restricts \(\phi_0\) to the \(H\)-free unlabeled limits: it
assigns density zero to every graph containing \(H\).  The second condition,
\(\phi_0(\langle\sigma\rangle_0)>0\), says that the underlying unlabeled graph of
\(\sigma\) has positive limiting density along any sequence of finite
graphs converging to \(\phi_0\).  This is precisely the hypothesis under which the random-extension
measure \(\Ext_\sigma(\phi_0)\) is defined.  The final display says that, when a
\(\sigma\)-typed positive homomorphism \(\phi\) is sampled from this measure, the
inequality \(\phi(f)\leq\phi(g)\) holds with probability one.

\begin{remark}[Relation to built-in \(H\)-free semantics]\label{rem:builtin-hfree}
The background approach builds the \(H\)-free condition into the flag algebra
from the outset, whereas the ensemble approach retains the ambient algebra and
imposes the condition only when comparing its elements.  At the empty type,
where no labels are chosen, the two approaches agree, so this distinction does
not affect the final unlabeled Tur\'an-density bounds.  For nonempty types,
however, the two approaches can differ: the built-in approach may fix the
labels at an exceptional vertex configuration, whereas the ensemble approach
obtains its labels by uniform random sampling.  An exceptional choice of labels
may therefore affect the built-in comparison while remaining invisible to the
ensemble comparison.
\Cref{sec:metatheory} gives a precise criterion for when the two approaches
agree and an example showing that they need not.
\end{remark}

The Lean predicate \lean{forbidLE} is the formalization of
\Cref{def:ensemble-semantic-order}:
\begin{lstlisting}
def forbidLEWith {k : (*@$\mathbb{N}$@*)} {(*@$\sigma$@*) : FlagType (Fin k)}
    (C : ForbidCondition) (f g : FlagAlgebra (*@$\sigma$@*)) : Prop :=
  (*@$\forall\;$@*)((*@$\phi_0$@*) : PositiveHom (*@$\emptyt$@*)),
    (*@$\phi_0$@*) (*@$\langle\sigma\rangle_0$@*) > 0
    -> (*@$\,$@*) C (*@$\phi_0$@*)
    -> (*@$\,$@*) (*@$\mathbb{P}$@*)[(*@$\phi_0$@*)] {(*@$\phi$@*) : PositiveHomSpace (*@$\sigma$@*) | (*@$\phi$@*) f (*@$\leq\;$@*)(*@$\phi$@*) g} = 1

def forbidLE {k m : (*@$\mathbb{N}$@*)} {(*@$\sigma$@*) : FlagType (Fin k)}
    (H : SimpleGraph (Fin m)) (f g : FlagAlgebra (*@$\sigma$@*)) : Prop :=
  forbidLEWith (forbiddenCondition H) f g
\end{lstlisting}
Read \lean{forbidLE} first.  Its explicit argument
\lean{H} is the forbidden graph, and \lean{f} and \lean{g} are the two
ambient \(\sigma\)-typed flag-algebra elements being compared.  In particular,
\lean{f} and \lean{g} do not live in a separately built \(H\)-free algebra.
The definition delegates to \lean{forbidLEWith}, the condition-generic version
that takes a \lean{ForbidCondition} as a parameter.  A \lean{ForbidCondition}
is a predicate on empty-type positive homomorphisms,
specifying which empty-type positive homomorphisms are allowed.  The particular condition
\lean{forbiddenCondition H} selects the \(H\)-free ones: it considers only
positive homomorphisms \(\phi_0\) such that
\[
  \phi_0([F])=0
  \qquad
  \text{for every \(\emptyset\)-flag } F
  \text{ whose underlying graph contains } H \text{ as a subgraph}.
\]

The body of \lean{forbidLEWith} ranges over empty-type positive homomorphisms
\lean{$\phi_0$}.  The hypothesis \lean{C $\phi_0$} imposes the chosen condition
on \lean{$\phi_0$}; for \lean{forbidLE H}, this is the \(H\)-free condition
described above.  Recall that \lean{PositiveHomSpace $\sigma$} is the measurable
space of \(\sigma\)-typed density profiles, each of which determines a positive
homomorphism on \(\mathcal{A}^\sigma\).  The measure
\lean{$\mathbb{P}$[$\phi_0$]} is the random-extension measure
\(\Ext_\sigma(\phi_0)\) on this space.  In the final line of the definition, the
set
\[
  \{\phi\mid \phi(f)\leq\phi(g)\}
\]
is therefore the event that the real number assigned to \(f\) is at most the
real number assigned to \(g\).  The equality to \(1\) says that this inequality
holds with probability one when a density profile \(\phi\) is sampled from
\(\Ext_\sigma(\phi_0)\).

What remains is to connect
the semantic predicate above to the corresponding asymptotic bound on induced densities.
This is the Lean counterpart of \Cref{lem:semantic-bound}, with the hypothesis
stated in the ensemble order of \Cref{def:ensemble-semantic-order} rather than
in the built-in order of \Cref{sec:background-semantic-order}.
In the theorem below, \(H\) is
the forbidden graph and \(F\) is the target graph.  The expression \lean{F.toFlagAlgebra} converts
the simple graph \(F\) into the corresponding element of the empty-type flag
algebra \lean{FlagAlgebra $\emptyt$}, whose evaluation is the induced
\(F\)-density.

The formal bridge packages this passage from semantic order to the finite
asymptotic statement:
\begin{lstlisting}
theorem generalizedTuranDensity_le_of_forbidLE
    {n m : (*@$\mathbb{N}$@*)} {H : SimpleGraph (Fin n)}
    {F : SimpleGraph (Fin m)} {c : (*@$\mathbb{R}$@*)} (hc : 0 (*@$\leq$@*) c)
    (h : forbidLE H F.toFlagAlgebra (c (*@$\leansmul$@*) (1 : FlagAlgebra (*@$\emptyt$@*))))
    : generalizedTuranDensity H F (*@$\leq$@*) c := ...
\end{lstlisting}
The hypothesis \lean{h} is an empty-typed \lean{forbidLE} inequality: the target
graph \(F\), viewed as an element of the empty-type flag algebra, is at most
\(c\cdot \mathbf{1}\) under the ensemble semantic order for the forbidden
graph \(H\).  The conclusion is the corresponding
Tur\'an-density bound, written in Lean as
\lean{generalizedTuranDensity~H~F~$\leq$~c}; mathematically, it is the assertion
\[
  \tdensity{F}{H}
  =
  \lim_{N\to\infty}
  \max_{\substack{|V(G)|=N\\ G\text{ is }H\text{-free}}}
  \den{F}{G}
  \leq c .
\]
Consequently, for each later application, it is enough to prove the empty-type
ensemble inequality
\[
  [F]\leq^{\mathrm{ens}}_{H,\emptyset}c\cdot\mathbf{1}.
\]
The theorem above then yields the corresponding Tur\'an-density bound
\(\tdensity{F}{H}\leq c\).  The same theorem performs this final conversion in
every application.

The Tur\'an density \(\tdensity{F}{H}\) is represented in Lean by
\lean{generalizedTuranDensity H F}, whose definition uses Mathlib's
\lean{limUnder}.  This operation returns a real number even before the
normalized extremal numbers have been shown to converge; if they did not
converge, the returned value would have no asymptotic meaning.  We therefore
prove \lean{tendsto\_\allowbreak generalizedTuranDensity}, which shows that the
normalized extremal numbers converge to \(\tdensity{F}{H}\), namely to
\lean{generalizedTuranDensity H F}.  The proof is standard: once
\(n\geq |V(F)|\), the normalized extremal numbers are non-increasing, and they
are bounded below by zero, so they converge.

\section{The Reflection Layer}
\label{sec:reflection}

The finite calculations a formal proof depends on can be discharged in two
ways.  One is to proceed as a
mathematician usually would: when a finite claim is needed, write a proof of
that claim directly.  In the flag-algebra setting, this would mean proving, one
by one, that two particular flags are isomorphic, that a certain density is a
given rational number, or that a product of flag-algebra basis elements expands into a specified
linear combination.  This is possible in principle, but it is the wrong scale:
a single nontrivial application contains many such finite checks, and hand-proving each
one would bury the mathematical argument under routine case analysis.

The second approach is computation-oriented.  Instead of proving each finite
calculation from scratch, we prove once that an algorithm correctly performs the
kind of calculation we need, and then we let Lean run that algorithm inside later
proofs.  This is the idea of \emph{reflection}.  The reflection layer of our
Lean formalization is the computable counterpart of the specification layer
described in \Cref{sec:abstract}.  For each finite operation that would otherwise require a hand proof,
such as testing whether two labeled graphs represent the same flag, computing a
density, or expanding a product,
it provides a program that carries out the calculation and a theorem saying that
the program's output agrees with the abstract definition.  Later proofs can then
run the program and invoke the correctness theorem, rather than reconstructing
the finite verification case by case.

Flag algebras are especially well suited to this style.  In applications, the
candidate bounds and their supporting numerical data are often found by
computer search or semidefinite programming, and validating the resulting
proof already amounts to checking finite data: same-flag tests, subflag
densities, multiplication tables, and rational arithmetic.  A formal
proof should neither trust those external tables nor replay each entry as a
hand proof.  Reflection gives the middle path.

\Cref{sec:abstract} gave the specification layer: definitions close to the
mathematics, where flags are identified up to isomorphism and quotient classes
hide arbitrary choices of representatives.  Those definitions are the right ones
for stating theorems, but many are marked \lean{noncomputable} because they are
specifications rather than efficient programs.  This section develops the
executable mirror of that layer: concrete graph and flag representations,
procedures for isomorphism testing, for computing densities, and for computing
downward normalizing factors, and adequacy theorems
proving that those procedures agree with the quotient-based definitions.

\subsection{Concrete Graph Representations and Finite Search}
\label{sec:reflection-concrete}

The finite coefficients in a flag-algebra proof come from concrete counting
problems.  For example, suppose \(F\) and \(G\) are \(\emptyset\)-flags.  To
compute the density \(\den{F}{G}\), one can conceptually enumerate all
\(|V(F)|\)-vertex subsets of \(G\), regard their induced graphs as
\(\emptyset\)-flags, test which of them are isomorphic to \(F\), count the
successful ones, and divide by
\(\binom{|V(G)|}{|V(F)|}\).  For a \(\sigma\)-flag density, the same experiment
keeps the labeled vertices fixed and chooses only the additional unlabeled
vertices.  The downward normalizing factor \(q_\sigma(F)\) is another finite
count of the same flavor: it counts label placements rather than subflags.

Thus, an executable flag-algebra development needs two basic ingredients.  It
must be able to enumerate the finite objects being counted, such as candidate
induced subflags, and it must be able to decide yes/no questions, such
as whether two flags are isomorphic.  In Lean, these ingredients are usually
provided by typeclass instances.  A \lean{Fintype} instance gives Lean an
explicit finite collection containing every element of a type; a
\lean{Decidable} instance gives a procedure for deciding a proposition.  Once
such instances are available, a definition that says ``count all objects
satisfying this predicate'' can be evaluated by iterating over the finite
collection and filtering by the decision procedure.

The specification layer of \Cref{sec:abstract}, however, is not designed
around producing such executable \lean{Fintype} and \lean{Decidable} instances.
Its starting point is Mathlib's graph type \lean{SimpleGraph V}, where
\lean{V} is the type whose elements are the vertices of the graph and the
adjacency relation \lean{Adj} takes two vertices and returns a proposition:
\begin{lstlisting}
structure SimpleGraph (V : Type*) where
  Adj      : V -> (*@$\,$@*) V -> (*@$\,$@*) Prop
  symm     : Symmetric Adj
  loopless : Irreflexive Adj
\end{lstlisting}
This works well for the mathematical specification, but it is too general to be
the data structure of the evaluator.  The vertex type \lean{V} need not come
with an algorithm for enumerating its vertices, and the relation~\lean{Adj} need
not come with a Boolean decision procedure.  Since \lean{LabeledGraph}s and
\lean{Flag}s are built on top of this graph type, they inherit the same
generality.  Even if the specification layer knows that, for each size, there are
only finitely many flags up to isomorphism, this does not by itself provide the
computational choices needed by the evaluator: a concrete enumeration of
representatives and a Boolean procedure for comparing them up to isomorphism.

The reflection layer therefore does not use \lean{SimpleGraph} directly as the
evaluator's representation.  Instead, it introduces a family of more rigid
edge-set representations: an \(n\)-vertex graph has vertex type \lean{Fin n},
and its edge set is stored as \lean{Finset (Sym2 (Fin n))}, a finite set of
unordered pairs of vertices.  This representation is packaged as the structure
\lean{Sym2Graph}.  We then reuse the same graph structure to define a
concrete flag type, \lean{Sym2FlagType}, and a concrete labeled graph,
\lean{Sym2LabeledGraph}.
\begin{lstlisting}
structure Sym2Graph (n : (*@$\mathbb{N}$@*)) where
  edges       : Finset (Sym2 (Fin n))
  edges_valid : forall e in edges, (*@$\neg$@*) e.IsDiag

abbrev Sym2FlagType (k : (*@$\mathbb{N}$@*)) := Sym2Graph k

structure Sym2LabeledGraph {k : (*@$\mathbb{N}$@*)} ((*@$\sigma$@*) : Sym2FlagType k) (n : (*@$\mathbb{N}$@*)) where
  edges       : Finset (Sym2 (Fin n))
  edges_valid : forall e in edges, (*@$\neg$@*) e.IsDiag
  type_embed  : fromEdgeSet (SetLike.coe (*@$\sigma$@*).edges) (*@$\hookrightarrow$@*)g fromEdgeSet (SetLike.coe edges)
\end{lstlisting}
The expression \lean{SetLike.coe edges} is the coercion that views the
finite edge set \lean{edges : Finset (Sym2 (Fin n))} as an ordinary set of
unordered pairs.  This coercion is needed because Mathlib's constructor
\lean{fromEdgeSet} builds an abstract \lean{SimpleGraph} from a set of
unordered vertex pairs, not from a \lean{Finset}.  Thus,
\lean{fromEdgeSet (SetLike.coe edges)} is the abstract graph decoded from the
concrete edge data.  The field \lean{type\_embed} then says that the graph
decoded from \(\sigma\)'s edge set embeds into the graph decoded from the field
\lean{edges}.  In other words, the concrete
\lean{Finset}-based representation still carries the usual flag-labeling map
used by the specification layer.

In this concrete definition, unlike in a general \lean{SimpleGraph}, Lean can
enumerate all vertices of the type \lean{Fin n}, namely \(0,1,\ldots,n-1\).
It can also check whether two vertices \(u\) and \(v\) are adjacent by testing
whether the unordered pair \(\{u,v\}\) belongs to the explicit finite edge set
\lean{edges}.

These basic choices then scale to the finite searches used by the evaluator.
For example, to compute a density \(\den{F}{G}\),
the evaluator must range over possible placements of \(F\)
inside the host \(G\).  Such a placement first chooses a finite set of vertices
of \(G\) that contains all labeled vertices; this set then determines an
induced subflag of \(G\), which can be tested for isomorphism with \(F\).
The following \lean{Fintype} instance provides the enumeration used to range
over these possible placements:
\begin{lstlisting}
structure Sym2InducedLabeledSubgraph {k : (*@$\mathbb{N}$@*)} {(*@$\sigma$@*) : Sym2FlagType k} {n : (*@$\mathbb{N}$@*)}
    (G : Sym2LabeledGraph (*@$\sigma$@*) n) where
  verts : Finset (Fin n)
  -- G.type_verts is the image of G.type_embed: the labeled vertices of G.
  verts_subset : G.type_verts (*@$\subseteq$@*) verts

def Sym2InducedLabeledSubgraph.edges {k : (*@$\mathbb{N}$@*)} {(*@$\sigma$@*) : Sym2FlagType k} {n : (*@$\mathbb{N}$@*)}
    {G : Sym2LabeledGraph (*@$\sigma$@*) n} (H : Sym2InducedLabeledSubgraph G)
    : Finset (Sym2 (Fin n)) :=
  G.edges.filter (fun (e : Sym2 (Fin n)) =>
    forall (v : Fin n), v (*@$\in$@*) e -> (*@$\,$@*) v (*@$\in$@*) H.verts)

instance {k : (*@$\mathbb{N}$@*)} {(*@$\sigma$@*) : Sym2FlagType k} {n : (*@$\mathbb{N}$@*)} (G : Sym2LabeledGraph (*@$\sigma$@*) n)
    : Fintype (Sym2InducedLabeledSubgraph G) where
  -- This Finset.univ enumerates all vertex subsets V : Finset (Fin n).
  elems := (Finset.univ : Finset (Finset (Fin n))).filterMap
    (fun (V : Finset (Fin n)) =>
      if hV : G.type_verts (*@$\subseteq$@*) V
      then some ({ verts := V, verts_subset := hV } : Sym2InducedLabeledSubgraph G)
      else none)
    (by ...)
  -- complete proves that every H : Sym2InducedLabeledSubgraph G appears in elems.
  complete := fun (H : Sym2InducedLabeledSubgraph G) => ...
\end{lstlisting}
The type \lean{Sym2InducedLabeledSubgraph G} represents induced subflags
of \(G\): it stores the chosen vertex set together with a proof that this set
includes the labeled vertices.  Its edge set is then computed by filtering
\(G\)'s explicit edge set to the edges whose endpoints both lie in the
chosen vertex set.  The \lean{Fintype (Sym2InducedLabeledSubgraph G)} instance
provides an enumeration of all induced subflags of \(G\).  Concretely,
the instance contains an \lean{elems} field listing all elements of the type \lean{Sym2InducedLabeledSubgraph~G},
together with a \lean{complete} proof that this enumeration misses none of
them.  Here the
enumeration is deliberately simple: since the vertices of \(G\) are
\lean{Fin n}, Lean enumerates all subsets of \(\{0,\ldots,n-1\}\), keeps
exactly those containing \lean{G.type\_verts}, and stores the retained subset
with the proof of containment.  Each retained subset is then viewed as an
induced subflag by using \(G\) to compute the corresponding induced edge set.
Thus, the subset enumeration becomes a finite list of candidates
\(H : \lean{Sym2InducedLabeledSubgraph G}\), which is the search space for the
density calculation.  To compute the numerator of \(\den{F}{G}\), we still need
to recognize which of these candidates are isomorphic to the pattern \(F\).
This is the second basic ingredient: a decidable predicate on candidates which,
given \(H\), decides whether the labeled graph associated to \(H\) is
isomorphic to~\(F\), with the type labels respected.

The schematic predicate \lean{candidateMatches F H} below isolates this test.
It is a one-pattern presentation of the isomorphism check that appears inside
the implementation's tuple-level predicate \lean{predIsoSym2LabeledHl}.  It
turns the induced subgraph \(H\) into its associated labeled graph and asks
whether the result is isomorphic to \(F\), with the type labels respected.  As
stated, \lean{candidateMatches F H} is a proposition, not a program.  The
reflection step is to show Lean how to decide it from the concrete
data stored in \(F\), \(G\), and~\(H\).  The following small instance displays
the mechanism used in the implementation: expose the finite vertex set of the
candidate, read adjacency from explicit edge membership, reject candidates of
the wrong size, and then invoke the finite isomorphism search.
\begin{lstlisting}
/-- Schematic one-pattern helper: H is isomorphic to the concrete pattern F. -/
def candidateMatches {k m n : (*@$\mathbb{N}$@*)} {(*@$\sigma$@*) : Sym2FlagType k}
    (F : Sym2LabeledGraph (*@$\sigma$@*) m) {G : Sym2LabeledGraph (*@$\sigma$@*) n} (H : Sym2InducedLabeledSubgraph G)
    : Prop :=
  Nonempty (H.toLabeledSubgraph.coe (*@$\simeq$@*)f F.toLabeledGraph)

/-- The finite edge-set representation turns the predicate into executable code. -/
instance candidateMatchesDecidable {k m n : (*@$\mathbb{N}$@*)} {(*@$\sigma$@*) : Sym2FlagType k}
    (F : Sym2LabeledGraph (*@$\sigma$@*) m) (G : Sym2LabeledGraph (*@$\sigma$@*) n)
    : DecidablePred (fun H : Sym2InducedLabeledSubgraph G => candidateMatches F H) :=
  fun H => by
    -- the vertex set of H is finite
    have : Fintype H.toLabeledSubgraph.subgraph.verts := by ...
    -- adjacency in H is decidable
    have : DecidableRel H.toLabeledSubgraph.coe.graph.Adj := by ...
    -- adjacency in F is decidable
    have : DecidableRel F.toLabeledGraph.graph.Adj := by ...
    exact if hsize : H.verts.card = m then
      (by
        dsimp [candidateMatches] -- unfold the definition of candidateMatches
        infer_instance)
    else
      isFalse (fun hIso => hsize (verts_card_of_coe_iso H F hIso))

/-- A one-pattern illustrative version of the reflection-layer density computation. -/
def sym2LabeledGraphDensity(*@$_1$@*) {k m n : (*@$\mathbb{N}$@*)} {(*@$\sigma$@*) : Sym2FlagType k}
    (F : Sym2LabeledGraph (*@$\sigma$@*) m) (G : Sym2LabeledGraph (*@$\sigma$@*) n)
    [DecidablePred (fun H : Sym2InducedLabeledSubgraph G => candidateMatches F H)]
    : (*@$\mathbb{Q}$@*) :=
  ((Finset.univ : Finset (Sym2InducedLabeledSubgraph G)).filter
    (fun H => candidateMatches F H)).card / Nat.choose (n - k) (m - k)
\end{lstlisting}
The body of \lean{candidateMatches} decodes both sides before comparing them.
The pattern is decoded by \lean{F.toLabeledGraph}, which reads \(F\)'s edge
\lean{Finset} through \lean{fromEdgeSet} and keeps the type embedding, giving
the abstract \lean{LabeledGraph} of \Cref{sec:flagtypes}.  The candidate is
decoded in two steps.  First, \lean{H.toLabeledSubgraph} is the labeled
subgraph of \lean{G.toLabeledGraph} whose vertices are \lean{H.verts} and whose
edges are the induced ones computed above.  Second, \lean{.coe} views that
subgraph as a labeled graph in its own right, on the vertex type
\lean{H.verts}, forgetting that it sits inside \(G\).  The two decoded labeled
graphs are then compared with the isomorphism type \(\simeq_f\) of
\Cref{sec:flagtypes}, and \lean{Nonempty} turns it into the proposition that
some such isomorphism exists.

The proof of the instance is the computational heart of the example.  After the
local \lean{Fintype} and \lean{DecidableRel} instances have been assembled, the
\lean{hsize} branch reaches the \lean{infer\_instance} line; this is the
generic inference step.  It asks typeclass search to decide the
\lean{Nonempty} proposition in the definition of \lean{candidateMatches}, which
says that such an isomorphism exists.  The generic decidability instance does
this by enumerating finitely many graph isomorphisms and checking the
type-preservation condition.
The schematic predicate \lean{candidateMatches~F~H} has therefore been reduced
to a finite search.  The \lean{hsize} branch is not conceptually necessary for the
definition, but it matches the implementation's computation-oriented proof: a
candidate whose vertex set has the wrong cardinality cannot be isomorphic to
\(F\), so it is rejected before the isomorphism search is run.

Once the matching predicate is decidable, the density computation is ordinary
finite evaluation.  The illustrative definition
\lean{sym2LabeledGraphDensity}$_1$
enumerates all candidates \(H : \lean{Sym2InducedLabeledSubgraph G}\), filters
them by \lean{candidateMatches F H}, counts the survivors, and divides by the
number of ways to choose the non-type vertices.  The same pattern is used for
the multi-argument density \(\den{F_1,\ldots,F_t}{G}\) formalized in
\Cref{sec:formal-subflag-densities}.  There the numerator counts tuples of
induced subgraphs that realize \(F_1,\ldots,F_t\) inside \(G\), with the
chosen non-type vertices pairwise disjoint.  The actual predicate
\lean{predIsoSym2LabeledHl} in our Lean formalization is the computable version of this
condition for tuples.  Given the tuple of patterns \lean{Hl}
and a tuple of candidate induced subgraphs \lean{Gl}, it checks, for every
\lean{i : Fin t}, that \lean{Gl i} matches the corresponding pattern
\lean{Hl i}, using the same decidable matching test as above, and that the
candidates are disjoint away from the shared type vertices.  Our
computable implementation of \(\den{F_1,\ldots,F_t}{G}\) uses this
predicate to filter the finite set of candidate tuples, and then counts the
surviving tuples.  In particular, the computable counterparts
of the specification-level densities \(\mathtt{flagDensity}_1\) and
\(\mathtt{flagDensity}_2\) are:
\begin{lstlisting}
def sym2FlagDensity(*@$_1$@*) {k : (*@$\mathbb{N}$@*)} {(*@$\sigma$@*) : Sym2FlagType k} {m n : (*@$\mathbb{N}$@*)}
    (F : Sym2Flag (*@$\sigma$@*) m) (G : Sym2Flag (*@$\sigma$@*) n) : (*@$\mathbb{Q}$@*) := ...

def sym2FlagDensity(*@$_2$@*) {k : (*@$\mathbb{N}$@*)} {(*@$\sigma$@*) : Sym2FlagType k} {m(*@$_0$@*) m(*@$_1$@*) n : (*@$\mathbb{N}$@*)}
    (F(*@$_0$@*) : Sym2Flag (*@$\sigma$@*) m(*@$_0$@*)) (F(*@$_1$@*) : Sym2Flag (*@$\sigma$@*) m(*@$_1$@*)) (G : Sym2Flag (*@$\sigma$@*) n) : (*@$\mathbb{Q}$@*) := ...
\end{lstlisting}
The arguments here are of type \lean{Sym2Flag}, the reflection-layer counterpart
of \lean{Flag} obtained by quotienting \lean{Sym2LabeledGraph} by concrete
isomorphism; \Cref{sec:reflection-adequacy} constructs it.
Note that unlike the specification-level density functions, these definitions do not
carry the \lean{noncomputable} keyword: Lean has an executable algorithm for
evaluating them.  But computability alone does not say that they implement the
density functions from the specification layer.  The correspondence between
this implementation and our noncomputable specification of
\(\den{F_1,\ldots,F_t}{G}\) is established by the adequacy theorems explained
in the next subsection.

\subsection{Adequacy Theorems for Reflection}
\label{sec:reflection-adequacy}

\Cref{sec:flagtypes} defined flags by quotienting labeled graphs by
isomorphism.
The reflection layer repeats the same construction for the concrete graph
representation from \Cref{sec:reflection-concrete}:
\begin{lstlisting}
def sym2LabeledGraphEqv {k : (*@$\mathbb{N}$@*)} {(*@$\sigma$@*) : Sym2FlagType k} {n : (*@$\mathbb{N}$@*)} (G G' : Sym2LabeledGraph (*@$\sigma$@*) n) 
    : Prop := ...

instance sym2LabeledGraphSetoid {k : (*@$\mathbb{N}$@*)} ((*@$\sigma$@*) : Sym2FlagType k) (n : (*@$\mathbb{N}$@*)) 
    : Setoid (Sym2LabeledGraph (*@$\sigma$@*) n) where
  r     := sym2LabeledGraphEqv
  iseqv := ... -- reflexivity, symmetry, and transitivity of r

def Sym2Flag {k : (*@$\mathbb{N}$@*)} ((*@$\sigma$@*) : Sym2FlagType k) (n : (*@$\mathbb{N}$@*)) : Type :=
  Quotient (sym2LabeledGraphSetoid (*@$\sigma$@*) n)
\end{lstlisting}
The omitted definition of \lean{sym2LabeledGraphEqv} is standard, and says
that two concrete labeled graphs are equivalent if there
exists a label-preserving graph isomorphism between them.
The displayed setoid instance packages this relation with its equivalence
proofs, and \lean{Sym2Flag} takes the resulting quotient.

We now have two corresponding representations of flags.  The
specification-oriented development of \Cref{sec:flagtypes} uses
\lean{Flag}, while the computation-oriented development
of \Cref{sec:reflection-concrete} uses
\lean{Sym2Flag}.  To use a computation from the reflection layer in a theorem
stated in the
specification layer, we need an explicit map from the latter representation to
the former.  We call this map \lean{Sym2Flag.toFlag}:
\begin{lstlisting}
def Sym2Flag.toFlag {k : (*@$\mathbb{N}$@*)} {(*@$\sigma$@*) : Sym2FlagType k} {n : (*@$\mathbb{N}$@*)} (G : Sym2Flag (*@$\sigma$@*) n) 
    : Flag (fromEdgeSet (SetLike.coe (*@$\sigma$@*).edges)) (Fin n) :=
  Quotient.lift
    (fun G : Sym2LabeledGraph (*@$\sigma$@*) n => (*@$\llbracket$@*)G.toLabeledGraph(*@$\rrbracket$@*))
    ... -- proof that the lift respects the equivalence relation of the quotient
    G
\end{lstlisting}

With \lean{Sym2Flag.toFlag} in hand, we can carry out the reflection described
in \Cref{sec:reflection}.  It asks for a program and a theorem saying that the
program's output agrees with the abstract definition in the specification
layer.
\Cref{sec:reflection-concrete} supplied the programs, and the theorems are the
ones below.  Our
\emph{adequacy theorems} state that
\lean{sym2FlagDensity$_1$} and \lean{sym2FlagDensity$_2$}
agree
with the abstract definitions
of \lean{flagDensity$_1$} and \lean{flagDensity$_2$}
after we decode \lean{Sym2Flag}s back to ordinary
\lean{Flag}s using \lean{Sym2Flag.toFlag}:
\begin{lstlisting}
theorem flagDensity(*@$_1$@*)_eq_sym2FlagDensity(*@$_1$@*) {k : (*@$\mathbb{N}$@*)} {(*@$\sigma$@*) : Sym2FlagType k} {m n : (*@$\mathbb{N}$@*)}
    (F : Sym2Flag (*@$\sigma$@*) m) (G : Sym2Flag (*@$\sigma$@*) n)
    : flagDensity(*@$_1$@*) F.toFlag G.toFlag = sym2FlagDensity(*@$_1$@*) F G := ...

theorem flagDensity(*@$_2$@*)_eq_sym2FlagDensity(*@$_2$@*) {k : (*@$\mathbb{N}$@*)} {(*@$\sigma$@*) : Sym2FlagType k} {m(*@$_0$@*) m(*@$_1$@*) n : (*@$\mathbb{N}$@*)}
    (F(*@$_0$@*) : Sym2Flag (*@$\sigma$@*) m(*@$_0$@*)) (F(*@$_1$@*) : Sym2Flag (*@$\sigma$@*) m(*@$_1$@*)) (G : Sym2Flag (*@$\sigma$@*) n)
    : flagDensity(*@$_2$@*) F(*@$_0$@*).toFlag F(*@$_1$@*).toFlag G.toFlag = sym2FlagDensity(*@$_2$@*) F(*@$_0$@*) F(*@$_1$@*) G := ...
\end{lstlisting}
A computable definition together with its adequacy theorem is what we call a
\emph{reflected} computation.
With these adequacy theorems in place, a proof about a density in the
specification layer can be reduced to a proof about the corresponding
computable density in the reflection layer.  Since this density is computable,
Lean can then run the executable algorithm certified by the adequacy theorem to
check the claimed value.  For example, the density
\[
  \den{\fKtwobull,\,\fbarKtwobull}{\fbarPthreebull}
  = \frac{1}{2}
\]
from \Cref{ex:typed-edge-nonedge-product} is proved as follows:
\begin{lstlisting}
example : flagDensity(*@$_2$@*)
             Sym2Flag_2_1_0_0.toFlag
             Sym2Flag_2_1_0_1.toFlag
             Sym2Flag_3_1_0_1.toFlag = 1 / 2
  := by
  rw [flagDensity(*@$_2$@*)_eq_sym2FlagDensity(*@$_2$@*)]
  decide +kernel
\end{lstlisting}
In the code, the three displayed constants are concrete \lean{Sym2Flag}s in
the reflection layer: \lean{Sym2Flag\_2\_1\_0\_0} corresponds to
\(\fbarKtwobull\), \lean{Sym2Flag\_2\_1\_0\_1} corresponds to \(\fKtwobull\),
and \lean{Sym2Flag\_3\_1\_0\_1} corresponds to \(\fbarPthreebull\).  The
statement applies \lean{toFlag} to each of these concrete flags, so it is a
specification-level statement about
\lean{flagDensity}$_2$.  The command \lean{rw} uses the adequacy theorem
\lean{flagDensity}$_2$\lean{\_eq\_sym2FlagDensity}$_2$ and changes the goal to
the closed reflection-layer computation
\begin{lstlisting}
  sym2FlagDensity(*@$_2$@*)  Sym2Flag_2_1_0_0  Sym2Flag_2_1_0_1  Sym2Flag_3_1_0_1  =  1 / 2
\end{lstlisting}
The tactic \lean{decide +kernel} closes a decidable goal by evaluating it
inside the kernel.  Here it runs the reflection-layer density computation above
and closes the goal, so nothing outside the kernel is trusted.

The downward operator from \Cref{sec:formal-downward} follows the same
reflection pattern.  The specification-level definition uses the abstract
downward normalizing factor \(q_\sigma(F)\), implemented as
\lean{downwardNormalizingFactor}.  Its executable counterpart is
\lean{downwardNormalizingFactor\_Sym2Flag}.  On a concrete
\lean{Sym2LabeledGraph} over a flag type with \(k\) labels, on \(n\) vertices,
Lean enumerates the finite set of type embeddings that
produce an equivalent flag, takes its cardinality, and divides by the total
number \(n!/(n-k)!\) of injections of the \(k\) labels into the \(n\) vertices.
The resulting computation is invariant under flag equivalence, so it can then
be lifted through the \lean{Sym2Flag} quotient.
\begin{lstlisting}
def isoEmbeddingCount_sym2LabeledGraph {k : (*@$\mathbb{N}$@*)} {(*@$\sigma$@*) : Sym2FlagType k} {n : (*@$\mathbb{N}$@*)}
    (G : Sym2LabeledGraph (*@$\sigma$@*) n) : (*@$\mathbb{N}$@*) :=
  (isoSym2TypeEmbeddingSetWithSameGraph G).card

def downwardNormalizingFactor_sym2LabeledGraph {k : (*@$\mathbb{N}$@*)} {(*@$\sigma$@*) : Sym2FlagType k} {n : (*@$\mathbb{N}$@*)}
    (G : Sym2LabeledGraph (*@$\sigma$@*) n) : (*@$\mathbb{Q}$@*) :=
  let num_of_all_injections := n.factorial / (n - k).factorial
  isoEmbeddingCount_sym2LabeledGraph G / num_of_all_injections

def downwardNormalizingFactor_Sym2Flag {k : (*@$\mathbb{N}$@*)} {(*@$\sigma$@*) : Sym2FlagType k} {n : (*@$\mathbb{N}$@*)}
    (F : Sym2Flag (*@$\sigma$@*) n) : (*@$\mathbb{Q}$@*) :=
  Quotient.lift
    (fun G => downwardNormalizingFactor_sym2LabeledGraph G)
    ... -- proof that downwardNormalizingFactor_sym2LabeledGraph respects the equivalence relation
    F
\end{lstlisting}
These are executable \lean{def}s, rather than specification-level
\lean{noncomputable def}s.  The first definition computes a natural number by
taking the cardinality of the finite set
\lean{isoSym2TypeEmbeddingSetWithSameGraph G}.  Its elements are the type
embeddings of \(\sigma\) into \(G\)'s graph, in the sense of the
\lean{type\_embed} field of \lean{Sym2LabeledGraph}, that give a labeled graph
isomorphic to \(G\).  The
second definition divides that count by the number of injections, so its value
is exactly \(q_\sigma(G)\).  The final \lean{Quotient.lift} applies this same
representative-level computation to a \lean{Sym2Flag}; its omitted
representative-invariance argument proves that equivalent representatives give
the same value.

We then prove that the specification-level and executable implementations of
the downward normalizing factor agree.
\begin{lstlisting}
theorem downwardNormalizingFactor_eq {k : (*@$\mathbb{N}$@*)} {(*@$\sigma$@*) : Sym2FlagType k} {n : (*@$\mathbb{N}$@*)} (F : Sym2Flag (*@$\sigma$@*) n)
    : downwardNormalizingFactor F.toFlag = downwardNormalizingFactor_Sym2Flag F := ...
\end{lstlisting}
After decoding a concrete \lean{Sym2Flag} back to an abstract \lean{Flag}, this
adequacy theorem rewrites a goal about the specification-level coefficient as
a goal about the executable function
\lean{downwardNormalizingFactor\_Sym2Flag}.  For instance, for the concrete
$\sigma$-flag corresponding to \(\fbarPthreebull\), where \(\sigma\) is the one-vertex
flag type, Lean proves \(q_\sigma(\fbarPthreebull)=2/3\) by the same
rewrite-and-evaluate pattern:
\begin{lstlisting}
example : downwardNormalizingFactor Sym2Flag_3_1_0_1.toFlag = 2 / 3 := by
  rw [downwardNormalizingFactor_eq]
  decide +kernel
\end{lstlisting}
Here the command \lean{rw} changes the goal to the closed computable equality
\begin{lstlisting}
  downwardNormalizingFactor_Sym2Flag  Sym2Flag_3_1_0_1  =  2 / 3
\end{lstlisting}
and \lean{decide +kernel} evaluates that finite computation.

The reflected normalizing factor is what makes the downward operator executable
on concrete $\sigma$-flags.  The
reflection layer computes the underlying $\emptyset$-flag and its normalizing
factor, and the adequacy theorem certifies their product as the abstract
downward image.  By linearity, a finite linear combination of flag basis
elements is handled term by term, so the downward identities needed in later
flag-algebra proofs can be established by computation rather than by separate
counting arguments.

\subsection{Elaboration-Time Generation of Flags and Theorems}
\label{sec:reflection-generation}

The two examples in the preceding subsection show the basic unit of reflected
computation.  To prove the concrete density
\(\den{\fKtwobull,\,\fbarKtwobull}{\fbarPthreebull}=1/2\), Lean rewrites the
specification-level goal with the density adequacy theorem and then evaluates
the resulting closed \lean{sym2FlagDensity}$_2$ computation.  To prove
\(q_\sigma(\fbarPthreebull)=2/3\), it performs the same two steps with
\lean{downwardNormalizingFactor\_eq}.  The adequacy theorems themselves are
uniform: they prove the reflected evaluators correct for every possible input.
What repeats in a concrete development is not the correctness argument, but
its specialization to each concrete $\sigma$-flag and to each concrete density
instance used by the later algebraic proof.

The generated flags are named by a fixed scheme, already visible in the
constants of \Cref{sec:reflection-adequacy}.  A name such as
\lean{Flag\_n\_k\_m\_i} or \lean{FlagAlgebra\_n\_k\_m\_i} carries four
indices: \(n\) is the size of the flag, \(k\) is the size of its flag type,
\(m\) is the position of the flag type's graph in the enumeration of the
\(k\)-vertex graphs, and \(i\) is the position of the flag itself in the
enumeration of the \(n\)-vertex flags of that flag type.  Both enumerations are
in a fixed canonical order.  Graphs are ordered first by their number of edges
and then by their canonical edge lists.  The canonical edge list of a graph is
the lexicographically smallest sorted edge list obtained by relabeling its
vertices in all possible ways.  The flags over a graph are ordered by the
placement of their labels.  The
reflection-layer constants \lean{Sym2Flag\_n\_k\_m\_i} follow the same
scheme, and the empty flag type is written \(k=m=0\).  Thus,
\lean{Sym2Flag\_3\_1\_0\_1} from \Cref{sec:reflection-adequacy} is flag
number \(1\) among the three-vertex flags of the unique one-vertex type,
namely \(\fbarPthreebull\), and \lean{FlagAlgebra\_3\_1\_0\_1} is the
corresponding basis element of the flag algebra.

We use Lean's metaprogramming facilities to automate this repetition.  They
specialize the uniform adequacy theorems to concrete flags and to the required
single-pattern and pattern-pair density instances.  More precisely, the
generation commands displayed below are custom command
elaborators written using Lean's metaprogramming
framework~\cite{ebner2017structured}.  During
elaboration, the phase in which Lean processes source commands and produces
checked definitions and theorems, these commands run the finite enumerators
and generate proofs of the required specialized facts.  Once the resulting
theorem families have been accepted by Lean, subsequent flag-algebra proofs
use them as rewrite rules instead of evaluating the finite enumerators again.
For the
sizes and flag type used in the two examples of
\Cref{sec:reflection-adequacy}, the relevant commands have
the following form:
\begin{lstlisting}[emph={generate_empty_typed_flags,generate_flags,generate_flag_pair_density_theorems_no_forbid,generate_mul_theorems},emphstyle=\color{blue!70!black}\bfseries]
-- Generate empty-typed flags at pattern size 2 and host size 3.
generate_empty_typed_flags 2
generate_empty_typed_flags 3

-- Use the unique flag type on one labeled vertex (k = 1, m = 0),
-- and generate its flags at the sizes 2 and 3.
generate_flags 2 1 0
generate_flags 3 1 0

-- Prove all pair densities, then all product expansions.
generate_flag_pair_density_theorems_no_forbid 2 3 1 0
generate_mul_theorems 2 3 1 0
\end{lstlisting}

The numerical arguments are the \(n\), \(k\), \(m\) of the naming scheme.  The
command \lean{generate\allowbreak\_empty\allowbreak\_typed\allowbreak\_flags n} generates the empty-typed flags
on \(n\) vertices, and \lean{generate\_flags n k m} generates the \(n\)-vertex
flags of the flag type selected by \(k\) and \(m\); here \(k=1,m=0\) is the unique
flag type on one labeled vertex.  The density and multiplication commands take a
pattern size \(p\) and a host size \(h\), followed by the same \(k,m\)
describing the flag type, and the suffix \lean{no\_forbid} says that no
forbidden-graph filter is being applied.
Here, a completeness lemma states that the generated names exhaust the
relevant finite set: every \(n\)-vertex \(\sigma\)-flag, up to
label-preserving isomorphism, appears among the generated flags.  For the empty
flag type, this means that every \(n\)-vertex graph is represented up to
isomorphism; for a forbid-free enumeration, the same statement is restricted
to the \(H\)-free flags.
\Cref{tab:generation-commands} collects the whole command family, including
the forbid-free variants described below.

\begin{table}
  \centering
  \caption{The generation commands.  \(n\) is a flag size and \(k\), \(m\)
  select the flag type, as in the naming scheme; the density and
  multiplication commands take a pattern size \(p\) and a host size \(h\).
  The forbid-free variants take a further argument \lean{H}, a
  \lean{Sym2Graph} term for the forbidden graph.}
  \label{tab:generation-commands}
  \small
  \begin{tabular}{@{}lp{0.34\textwidth}@{}}
    \toprule
    Command & Generates \\
    \midrule
    \lean{generate\_empty\_typed\_flags n} &
      the empty-typed flags on \(n\) vertices, with the completeness lemma \\
    \lean{generate\_flags n k m} &
      the \(n\)-vertex flags of flag type \((k,m)\), with unlabelings, downward
      coefficients, and the completeness lemma \\
    \lean{generate\_flag\_pair\_density\_theorems\_no\_forbid p h k m} &
      the pair-density theorems for size-\(p\) patterns in size-\(h\) hosts \\
    \lean{generate\_mul\_theorems p h k m} &
      for every pair \(F_1,F_2\in\mathcal F^\sigma_p\), the product expansion
      \([F_1]\cdot[F_2]=\sum_{G\in\mathcal F^\sigma_h} c_G\,[G]\) \\
    \midrule
    \lean{generate\_forbid\_free\_empty\_typed\_flags n H} &
      the \(H\)-free empty-typed flags, with the filtered completeness lemma \\
    \lean{generate\_forbid\_free\_flags n k m H} &
      the \(H\)-free typed flags, with unlabelings, downward coefficients,
      and the filtered completeness lemma \\
    \lean{generate\_forbid\_free\_flag\_pair\_density\_theorems p h k m H} &
      the pair-density theorems over the \(H\)-free flags \\
    \lean{generate\_forbid\_free\_mul\_theorems p h k m H} &
      the product expansions, valid modulo flags containing \(H\) \\
    \lean{generate\_forbid\_free\_flag\_density\_theorems p i h H} &
      the single-flag densities of \lean{Flag\_p\_0\_0\_i} in every
      \(H\)-free size-\(h\) host \\
    \bottomrule
  \end{tabular}
\end{table}

\Needspace*{4\baselineskip}
The important point is what each command contributes to subsequent proofs.  The
first four commands create the named two- and three-vertex flags
used in the examples of \Cref{sec:reflection-adequacy}, and each of the four
also proves the corresponding completeness lemma.  Generating the typed
flags also computes and proves their downward identities, including the
coefficient \(q_\sigma(\fbarPthreebull)=2/3\).  The pair-density command then
generates a theorem \(\lean{flagDensity}_2\,F_1\,F_2\,G=c\) for every unordered
pair of two-vertex patterns~\((F_1,F_2)\) and every three-vertex host \(G\);
this family includes the identity
\(\den{\fKtwobull,\,\fbarKtwobull}{\fbarPthreebull}=1/2\) displayed in
\Cref{sec:reflection-adequacy}.
Finally, the multiplication
command combines those coefficients with the completeness lemma for the hosts
to prove the product expansion
\([F_1] \cdot [F_2]=\sum_G c_G\,[G]\) for every pattern pair; completeness is
what justifies summing over exactly the named hosts.  Thus, elaboration produces the
flags and their corresponding algebra basis elements, their unlabelings and downward coefficients, the density table,
and the multiplication table, together with proofs of all
the entries of these tables.

Extremal applications work under a forbidden-graph assumption, and each of
the commands above has a forbid variant for this situation.  The
\lean{generate\_forbid\_free\_*} commands take one additional argument, a
concrete \lean{Sym2Graph} term for the forbidden graph \(H\); the Mantel
development, for example, defines \lean{K3} as the complete graph on three
vertices and runs \lean{generate\_forbid\_free\_flags 3 1 0 K3}.  These
variants emit constants and theorems only for flags whose underlying graphs
avoid \(H\) as an ordinary, not necessarily induced, subgraph; their
completeness lemmas state that the named flags exhaust precisely this set,
using the same notion of \(H\)-freeness as the forbidden-graph machinery of
\Cref{sec:forbidden}.
One member of the family has no unforbidden counterpart:
\lean{generate\_forbid\_free\_flag\_density\_theorems p i h H} evaluates the
single-flag densities \(\lean{flagDensity}_1\) of the empty-typed flag
\lean{Flag\_p\_0\_0\_i} in every \(H\)-free host of size \(h\).  The
certificate compiler of \Cref{sec:flagmatic} uses this table when the target is
the \(p\)-vertex empty-typed flag indexed by \(i\) and \(p<h\): the table
provides the coefficients needed to express its density in terms of the
\(H\)-free \(h\)-vertex flags.

\Needspace*{4\baselineskip}
\medskip
\noindent\textbf{Optimizations.}
The pipeline employs the following optimizations to reduce the computational
cost of elaboration.  These optimizations do not eliminate the inherent
combinatorial growth of flag enumeration and density computation, but they
make the nontrivial examples considered in this paper feasible.

\begin{itemize}
\item \textbf{Size-inductive enumeration.}
The enumeration of flags is inductive in the size: the underlying
\(n\)-vertex graph representatives are obtained by extending the
\((n-1)\)-vertex representatives by one vertex in all possible ways, and the
typed flags on \(n\) vertices are obtained by placing the flag type's labels on
those representatives in every valid way.  The working lists therefore stay
close to one representative per isomorphism class, rather than covering all
\(2^{\binom{n}{2}}\) graphs on a fixed vertex set.

\item \textbf{Deduplication by cheap invariants.}
Both steps of that induction produce isomorphic duplicates: different vertex
extensions, and different label placements, can yield the same flag.  The
enumeration therefore deduplicates as it goes, keeping a candidate only when
it is isomorphic to no kept representative.  These isomorphism tests are the
expensive part, so each candidate carries a cheap isomorphism-invariant key,
derived from its edge count and degree sequence, and the expensive test runs
only when two keys collide.

\item \textbf{Completeness from a verified enumerator.}
One way to certify a particular generated list would be to enumerate the full
quotient type independently and compare it with that list.  Such a check would
establish completeness only for that particular run and would repeat the
brute-force search that the size-inductive enumeration is designed to avoid.
Instead, we prove once and for all, by induction on the size, that the
enumeration algorithm of our Lean implementation produces a representative of every isomorphism class.
A concrete generation run then needs only to check that the emitted constants
coincide with the algorithm's computed output.  The completeness of that
output follows by specializing the general theorem about the algorithm, rather
than by independently searching the full space again.

\item \textbf{Pruning under a forbidden graph.}
For a complete forbidden graph, the size induction itself prunes: only
\(H\)-free representatives are extended, only \(H\)-free extensions are kept,
and a cheap clique test decides containment, so flags containing \(H\) are
never generated.  For other forbidden graphs, the representative enumeration
is filtered before any typed structure is built, so the labeled-graph work is
spent only on the surviving graphs.  The pruned generation nevertheless
proves the same filtered completeness lemma.  Its enumeration misses no
\(H\)-free isomorphism class because \(H\)-freeness is invariant under
isomorphism and preserved by deleting a vertex; the completeness statement
therefore follows by induction on size.  The pruning algorithm
decides \(H\)-freeness using ordinary subgraph containment.  The filtered
completeness lemma expresses the same condition in flag-algebra terms, by
requiring every induced pattern whose underlying graph contains \(H\) to have
density zero.  Lean proves that these two tests retain exactly the same flags,
so the enumeration can use the executable containment test while later proofs
use the density-based completeness statement.

\end{itemize}

Metaprogramming automates theorem construction, not verification: Lean still
checks every generated proof, and a wrong coefficient makes that proof fail.
Once accepted, these theorems let later arguments simplify downward images,
densities, and products by rewriting rather than reopening the finite searches.
Thus, the story of this section has three steps: concrete representations make
the finite operations executable, adequacy theorems connect those operations
to the abstract definitions, and elaboration-time generation turns their
individual results into tables of proved facts.  The next section
uses those tables to automate complete flag-algebra arguments.

\section{Automating Flag-Algebra Proofs}
\label{sec:flagmatic}

Fix a forbidden family \(\mathcal H\), a target graph \(F\), and a proposed
bound \(c\).  The task addressed in this section is to turn finite certificate
data into a Lean proof of
\[
  [F]\leq_{\mathcal H}c\cdot\mathbf{1},
\]
which expresses the desired asymptotic upper bound on the induced density of
\(F\) in \(\mathcal H\)-free graphs.  A \emph{flag-algebra certificate} is a
finite witness for this inequality: it specifies flags, rational matrices, and
coefficients that purport to express the gap
\(c\cdot\mathbf{1}-[F]\) as a sum of terms that are non-negative under the
\(\mathcal H\)-free condition.  Flagmatic~\cite{vaughan2013flagmatic} uses
semidefinite programming to search for such a witness.

This section concerns certificate verification rather than certificate
search.  Our compiler accepts a certificate produced by Flagmatic as untrusted
input, translates its flags, matrices, and target data into Lean, regenerates
the required flags, recomputes their density and multiplication identities,
verifies the rational matrix factorizations, and assembles the final
inequality.  Automating this translation is important because nontrivial
certificates contain extensive finite bookkeeping that would be tedious and
error-prone to reproduce manually.  The resulting theorem is accepted only
after Lean has checked every mathematical obligation.

We first explain how positive semidefinite matrices justify the inequalities
encoded by a certificate.  We then identify the four obligations common to our
examples and show how the compiler discharges them.  Finally, we state the
trust model and evaluate the compiler on seven certificates.

\subsection{How a Certificate Proves a Bound}
\label{sec:flagmatic-background}

At a high level, a flag-algebra certificate proves
$[F] \leq_{\mathcal H} c\cdot\mathbf{1}$ by constructing two kinds of
non-negative correction term.  One kind consists of terms $Q_t$, each obtained
by applying the downward operator to a positive-semidefinite quadratic
expression.  The other is a remainder $r$, a linear combination of
$\mathcal H$-free flags with non-negative coefficients.  Under the
$\mathcal H$-free condition, these data yield the chain
\[
  [F]
  \leq_{\mathcal H} [F]+\sum_t Q_t
  = c\cdot\mathbf{1}-r
  \leq_{\mathcal H} c\cdot\mathbf{1}.
\]
The first inequality uses the non-negativity of the $Q_t$, the equality is the
finite calculation recorded by the certificate, and the final inequality uses
the non-negativity of $r$.  Equivalently, the certificate decomposes the gap as
$c\cdot\mathbf{1}-[F]=\sum_t Q_t+r$.  Flagmatic~\cite{vaughan2013flagmatic}
searches for this data using an SDP solver and outputs it as a certificate.  Our
compiler consumes the certificate as untrusted input and asks Lean to verify
the non-negativity of the $Q_t$ and $r$ and the intervening equality.

Mantel's theorem gives a small certificate with exactly the structure above.
Here the target is $F=K_2$, whose density is the edge density, and we
abbreviate its basis element $[K_2]$ by
$K_2$.
We work with the one-vertex type $\sigma$, choose a non-negative element
$f\in\mathcal A^\sigma$, and use \Cref{thm:downward-nonneg} to make its
downward image $Q:=\llbracket f\rrbracket_\sigma$ a non-negative correction
term.  We then show that
\[
  K_2 + Q = \tfrac12\cdot\mathbf{1} - \tfrac13\cdot\fbarPthree.
\]
Write $r:=\tfrac13\cdot\fbarPthree$ for the displayed remainder.  The
non-negativity of $Q$ and $r$ then proves the desired bound.

To construct $f$, take the two size-$2$ $\sigma$-flags $\fKtwobull$ and
$\fbarKtwobull$.  We use these pictures directly in the algebraic calculation
below.\footnote{Strictly,
each picture denotes a finite $\sigma$-flag, whereas the algebraic operations
below apply to its quotient class in $\mathcal A^\sigma$.  We follow the usual
convention of omitting brackets around pictorial basis elements.}
Set
\[
  f := \tfrac12((\fKtwobull)-(\fbarKtwobull))^2,
  \qquad
  Q := \llbracket f \rrbracket_\sigma.
\]
For every positive homomorphism $\phi\in\poshom{\sigma}$,
\begin{align*}
  \phi(f)
  = \tfrac12\,\phi\bigl(((\fKtwobull)-(\fbarKtwobull))^2\bigr)
  = \tfrac12\,\bigl(\phi((\fKtwobull)-(\fbarKtwobull))\bigr)^2
  \;\geq\; 0.
\end{align*}
The first equality unfolds the definition of $f$ and uses preservation of
scalar multiplication; the second uses preservation of multiplication, which
rewrites the flag-algebra square as the square of the real number
\(\phi((\fKtwobull)-(\fbarKtwobull))\).  This makes the source of the
non-negativity explicit: the final expression for \(\phi(f)\),
\[
  \tfrac12\bigl(\phi((\fKtwobull)-(\fbarKtwobull))\bigr)^2,
\]
is non-negative because it is a positive scalar multiple of the square of a
real number.  Thus, $f$ is non-negative in $\mathcal A^\sigma$, and
\Cref{thm:downward-nonneg} gives $0\leq_{K_3}Q$.

To calculate $Q$ in the common size-three basis, we use the triangle-free
flags $\fbarKthree$, $\fbarPthree$, and $\fPthree$.  The size-three expansion
of $K_2$ from Equation~\eqref{eq:k2-expansion} is
\[
  K_2 = \tfrac13\cdot\fbarPthree + \tfrac23\cdot\fPthree.
\]
Applying Equation~\eqref{eq:sum-to-one} to the triangle-free empty-type flags of size
three gives
\[
  \mathbf{1} = \fbarKthree + \fbarPthree + \fPthree.
\]
A size-three product expansion followed by the downward operator gives
\begin{align*}
  Q
  &=
  \bigl\llbracket
    \tfrac12\bigl((\fKtwobull)-(\fbarKtwobull)\bigr)^2
  \bigr\rrbracket_\sigma\\
  &=
  \bigl\llbracket
    \tfrac12\left(
       \fbarKthreebull+\fEthreebull
       -\fbarPthreebull-\fPthreebull
       +\fKonetwobull
     \right)
  \bigr\rrbracket_\sigma\\
  &= \tfrac12\left(
       \fbarKthree
       + \tfrac13\cdot\fbarPthree
       - \tfrac23\cdot\fbarPthree
       - \tfrac23\cdot\fPthree
       + \tfrac13\cdot\fPthree
     \right)\\
  &= \tfrac12\cdot\fbarKthree
     -\tfrac16\cdot\fbarPthree
     -\tfrac16\cdot\fPthree.
\end{align*}
The fractions produced by the downward operator are its normalizing factors:
they record whether one, two, or all three choices of the labeled vertex
produce the displayed $\sigma$-flag.

Combining these identities now gives a derivation of the desired bound:
\begin{align*}
  K_2
  &\leq_{K_3} K_2+Q \\
  &= \left(\tfrac13\cdot\fbarPthree + \tfrac23\cdot\fPthree\right)
     + \left(\tfrac12\cdot\fbarKthree
     -\tfrac16\cdot\fbarPthree
     -\tfrac16\cdot\fPthree\right)\\
  &= \tfrac12\cdot\fbarKthree
     + \tfrac16\cdot\fbarPthree
     + \tfrac12\cdot\fPthree\\
  &= \tfrac12\cdot
     \left(\fbarKthree+\fbarPthree+\fPthree\right)
     - \tfrac13\cdot\fbarPthree\\
  &= \tfrac12\cdot\mathbf{1} - \tfrac13\cdot\fbarPthree\\
  &\leq_{K_3} \tfrac12\cdot\mathbf{1}.
\end{align*}
The first inequality adds the non-negative downward square $Q$.  The
successive equalities substitute the expansions of $K_2$ and $Q$, collect their
coefficients, and regroup the result as $\tfrac12\cdot\mathbf{1}$ minus the
remainder using the unit expansion
$\mathbf{1} = \fbarKthree+\fbarPthree+\fPthree$.  The final inequality follows
because the remainder $r=\tfrac13\cdot\fbarPthree$ is non-negative.

The essential certificate data for this example are precisely the nontrivial
terms in this derivation: the two $\sigma$-flags $\fKtwobull$ and
$\fbarKtwobull$, the downward image $Q$ of their halved squared difference, and
the non-negative remainder $r=\tfrac13\cdot\fbarPthree$.  We chose this small
certificate by hand; in larger examples, Flagmatic's SDP search supplies the
analogous flags, positive semidefinite matrices, and remainder coefficients.
Checking even this certificate requires expanding a product, computing its
downward image, expanding the target and the unit to a common host size,
and collecting coefficients.  These calculations remain suitable for
automation but become impractical to perform and transcribe by hand for larger
certificates; they are exactly what our compiler automates.

The same idea scales by replacing one square with weighted sums of many
squares.  Choose $\sigma$-flags $E_1,\ldots,E_r$, let
$e_i:=[E_i]\in\mathcal A^\sigma$ be their basis elements, and collect these
algebra elements into a vector $v=(e_1,\ldots,e_r)^\top$.  A symmetric matrix $M$ is \emph{positive semidefinite}
(PSD) if $x^\top Mx\geq 0$ for every real vector~$x$.  Consequently, when $M=(M_{ij})$
is PSD, the flag-algebra quadratic form
$\langle v,Mv\rangle=\sum_{i,j}M_{ij} e_i e_j$ is non-negative under every
positive homomorphism: evaluating the algebra elements turns it into the ordinary real
quadratic form $x^\top Mx$.  The downward operator then transfers this
non-negativity from the labeled algebra to the unlabeled one.

Flagmatic finds the matrices by turning this decomposition problem into a
\emph{semidefinite program} (SDP).  It first fixes a host size $N$ and chooses
the flag types and flags that may occur.  It expands the target, the unit, and
the downward quadratic forms in the common basis of allowed $N$-vertex graphs,
namely those that avoid $\mathcal H$.  At this point, the entries of the matrices
are unknowns.  Requiring the coefficient left over for every host graph to be
non-negative gives linear constraints on those entries, and requiring each
matrix to be PSD gives the semidefinite constraints.  An SDP solver finds
numerical matrices satisfying these conditions, after which Flagmatic converts
them into exact rational matrices.

A full certificate uses several flag types at once, one PSD matrix per flag
type, so the single matrix $M$ above becomes a family indexed by the flag type.
The index $t$ ranges over the chosen flag types $\sigma_t$, each contributing
its own flag vector $v_t$ and matrix $M_t$.  We call the data for one such
index, the flag type $\sigma_t$ together with $v_t$ and $M_t$, an \emph{SDP block};
the name reflects that the matrices $M_t$ are the diagonal blocks of the
semidefinite program's matrix variable.  The downward quadratic form
block~$t$ produces is
\[
  Q_t := \bigl\llbracket
    \langle v_t, M_t\, v_t\rangle
  \bigr\rrbracket_{\sigma_t}.
\]
The resulting certificate specifies an identity, modulo the $\mathcal H$-free
condition, of the form
\[
  [F] + \sum_t Q_t
  =
  c\cdot\mathbf{1} - r,
\]
where every $M_t$ is PSD and $r$ is a linear combination of $\mathcal H$-free
flags with non-negative coefficients.  Equivalently,
the certificate justifies the derivation
\begin{align*}
  [F]
  \leq_{\mathcal H}
  \bigl([F]+\sum_t Q_t\bigr)
  =
  \bigl(c\cdot\mathbf{1}-r\bigr)
  \leq_{\mathcal H} c\cdot\mathbf{1}.
\end{align*}
The first inequality adds the non-negative downward quadratic forms.  The
equality is the certificate identity checked after all terms have been
expanded in the common host-size basis, and the final inequality follows from
the non-negativity of $r$.  The certificate records the chosen
flags, the rational matrices $M_t$, and the target-density data needed to state
this calculation.  Our compiler does not trust the certificate's derived
density or multiplication tables: it recomputes them in Lean, proves the
matrices are PSD, and checks the calculation.  The rest of this section explains
the four recurring parts of that verification and how the compiler generates
them.

\subsection{The Four-Part Compiler}
\label{sec:flagmatic-shape}
\label{sec:flagmatic-pipeline}
\label{sec:flagmatic-gen}\label{sec:flagmatic-psd}\label{sec:flagmatic-expand}%
\label{sec:flagmatic-close}\label{sec:flagmatic-script}%

We organize the compiler around a four-part proof skeleton, with one part
for each obligation in turning a certificate into a verified proof.  The four
parts are:

\begin{enumerate}
  \item \textbf{Generating the flags and density data.}  The proof must name
    the flags and corresponding flag-algebra basis elements it uses,
    restricted to those that are free of the chosen forbidden graph, and must
    prove the identities for single-flag densities, flag-pair densities, and
    products of flags sharing a common flag type.
    These identities support the algebraic rewrites used later in the proof.
  \item \textbf{Certifying the matrices.}  For each SDP block $t$, the
    certificate supplies a rational matrix $M_t$ and a list of
    $\sigma_t$-flags $E_{t,i}$.  The proof must define the vector~$v_t$ of
    their basis elements $e_{t,i}:=[E_{t,i}]$ and must prove that $M_t$ is
    positive semidefinite.
  \item \textbf{Lifting the target (when needed).}  If the target graph $F$
    has fewer than $N$ vertices, where $N$ is the host size chosen up front to
    bound the certificate search in \Cref{sec:flagmatic-background}, the proof
    must first lift $[F]$ to the
    host-size basis, dropping the contributions that contain a copy of the
    forbidden graph.
  \item \textbf{Assembling the main theorem.}  With
    $Q_t:=\bigl\llbracket\langle v_t,M_t v_t\rangle\bigr\rrbracket_{\sigma_t}$
    and the certificate's non-negative remainder~$r$, the proof must assemble
    the bound $[F] \leq_{\mathcal{H}} c\cdot\mathbf{1}$ through the same chain as
    in \Cref{sec:flagmatic-background}:
    \[
      [F]
      \;\leq_{\mathcal H}\;
      [F] + \sum_t Q_t
      \;=\;
      c\cdot\mathbf{1}-r
      \;\leq_{\mathcal H}\;
      c\cdot\mathbf{1}.
    \]
    The first inequality adds the non-negative downward quadratic forms arising
    from the matrices certified in Part~2.  The middle equality is checked using
    the identities from Part~1 and, when needed, the target-lifting identity
    from Part~3.  The final inequality follows because $r$ has non-negative
    coefficients.
\end{enumerate}

The compiler handles each part with a small set of reusable proof-producing
components.  Consequently, a new theorem primarily requires new certificate
data rather than a new proof architecture.

For the certificate shapes supported by the current implementation, this
four-part process is exposed through the \lean{gen-skeleton} subcommand of
\lean{flagmatic\_to\_lean.py}:
\begin{lstlisting}[language={}]
python flagmatic_to_lean.py gen-skeleton <cert>.json <target>.lean
\end{lstlisting}
Here \lean{<cert>.json} is a placeholder for a Flagmatic certificate file, which
identifies the target and forbidden graphs and contains the proposed bound,
the flag types and flags used by each SDP block, and the rational matrix data.  The
argument \lean{<target>.lean} is the path at which the script should write the
generated Lean source file.  The output file contains the flag and
density generation commands of Part~1, the matrices and basis-element vectors
of Part~2, the target-lifting lemma when Part~3 is needed, and the complete
proof of the bound in Part~4.

For example, applying the compiler to the Mantel certificate with
the following command
\begin{lstlisting}[language={}]
python flagmatic_to_lean.py gen-skeleton <mantel-cert>.json Mantel.lean --namespace Mantel --force
\end{lstlisting}
generates \lean{Mantel.lean} from the Mantel certificate.  The optional
\lean{--namespace Mantel} argument selects the namespace of the generated
declarations, and \lean{--force} permits the script to replace an existing
target file.  Before generating the file, one can run the separate diagnostic
command
\begin{lstlisting}[language={}]
python flagmatic_to_lean.py inspect <mantel-cert>.json
\end{lstlisting}
to check that the certificate can be parsed and that its flags can be matched
to Lean's enumeration.  This command does not generate a proof file; it prints
the resulting identifier mapping and the Part~1 generation commands.  We now
explain the generated Lean file one part at a time before showing an abridged
assembled output.

\paragraph{Generating the flags and density data (Part~1).}
This is where the reflection layer of \Cref{sec:reflection} enters the
compiler, making the finite bookkeeping involving flags and densities
executable and checkable.  Part~1 defines the certificate's forbidden graph as a
\lean{Sym2Graph} term and invokes the \lean{generate\_forbid\_free\_*}
commands of \Cref{sec:reflection-generation} on it.  The commands choose
between two enumeration procedures according to the forbidden graph: they use
size-inductive clique pruning when the forbidden graph is complete, and the
general ordinary-subgraph filtering procedure otherwise.  During elaboration,
they enumerate the flags that avoid the forbidden graph, with isomorphic
flags appearing only once.  For each enumerated flag, the commands emit a concrete
reflection-layer constant, either a \lean{Sym2EmptyTypedFlag} at the empty flag type
or a \lean{Sym2Flag} at a nonempty flag type, together with its decoded
specification-level \lean{Flag} and \lean{FlagAlgebra} basis element.  As in
\Cref{sec:reflection-generation}, the commands also prove the completeness
lemma stating that the emitted flags exhaust the flags avoiding the forbidden
graph.

After emitting these flags, Part 1 generates
the rewrite theorems, which are used later
in Parts 3 and 4 of the proof.  Concretely, for each emitted $\sigma$-flag
$F$, Part~1 evaluates the downward normalizing
factor~$q_\sigma(F)$ from \Cref{sec:formal-downward}.
For each relevant pair of $\sigma$-flags $F_1,F_2$ and host $\sigma$-flag $G$, it also
evaluates the pair density
\(\lean{flagDensity}_2\,F_1\,F_2\,G\), which supplies the coefficient of $[G]$
in the product expansion of $[F_1]\cdot[F_2]$.  When the target must be
lifted in Part~3, the generated file similarly contains proved evaluations of
the required single-flag densities.  The adequacy theorems from
\Cref{sec:reflection} turn all these reflection-layer evaluations into
specification-level statements about the decoded flags.  The multiplication
generator then combines the pair-density identities with the general
multiplication theorem and the completeness lemma for the generated flag set
to derive the required product expansions.

\paragraph{Certifying the matrices (Part~2).}
With the flags and their rewrite theorems in place, Part~2 discharges the
positive-semidefiniteness obligations.  For each SDP block, the certificate
stores rational matrices $Q'_t$ and $R_t$ with
$M_t=R_tQ'_tR_t^\top$, where $R_t$ may be rectangular and $Q'_t$ is a smaller
symmetric matrix. Flagmatic's user guide~\cite{vaughan2012flagmaticguide}
explains the format: $Q'_t$ is
chosen to be positive definite, which implies the positive semidefiniteness of $M_t$.

Our verification takes these data through a different, equally elementary
witness.  We state one Lean lemma for every PSD obligation, based on an exact
rational $LDL^\top$ factorization:
\begin{lstlisting}
theorem posSemidef_real_of_LDLt {n : (*@$\mathbb{N}$@*)}
    {M L : Matrix (Fin n) (Fin n) (*@$\mathbb{Q}$@*)} {d : Fin n -> (*@$\,\mathbb{Q}$@*)}
    (hd : forall (*@$\,$@*)i, 0 <= (*@$\,$@*)d i) (hM : M = L * Matrix.diagonal d * L(*@$^\top$@*))
    : (ratMatrixToReal M).PosSemidef := ...
\end{lstlisting}
The lemma says that if a rational matrix $M$ factors as
$L\operatorname{diag}(d)L^\top$ with every entry of $d$ non-negative, then the
entrywise real cast of $M$, written \lean{ratMatrixToReal M}, is positive
semidefinite; the $\mathbb{Q}$-to-$\mathbb{R}$ casting happens inside the
lemma.  To use it, the compiler script reconstructs $M_t$ from $Q'_t$ and
$R_t$ in exact rational arithmetic, computes a factorization
$M_t=L_t\operatorname{diag}(d_t)L_t^\top$ with $L_t$ square and unit lower
triangular, and emits $M_t$, $L_t$, and $d_t$ in the generated Lean file.
Lean treats the emitted factorization only as a candidate witness: the
\lean{psd\_real\_ldlt} tactic must prove the lemma's two hypotheses.  The
non-negativity of every entry of $d_t$ is checked with \lean{fin\_cases} and
\lean{norm\_num}.  For the factorization equality, both sides are concrete
finite matrices over $\mathbb{Q}$, so \lean{decide +kernel} evaluates their
decidable equality and has the Lean kernel certify that their rational
entries agree.  No eigenvalue or floating-point calculation enters this
check.  Lean does not separately compare $M_t$ with the original $Q'_t$ and
$R_t$ fields of the certificate; that reconstruction is performed by the
unverified compiler script.  An incorrect $LDL^\top$ witness is therefore
rejected, while fidelity to the source certificate is a separate concern
discussed in \Cref{sec:flagmatic-trust}.  In either case, the emitted matrix
can contribute to the final theorem only if both its PSD proof and the
subsequent algebraic calculation are accepted by Lean.

\paragraph{Lifting the target (Part~3).}
The first two parts provide the algebraic data used by the certificate.  Part~3
aligns the target with that data when necessary.  The certificate identities
are expressed at the host size $N$, so when the target graph $F$ itself has
$N$ vertices, its basis element $[F]$ already lives in that basis and nothing
needs to be done.  When $F$ is smaller, the compiler proves an auxiliary
identity that expands $[F]$ over the $N$-vertex $\mathcal{H}$-free
$\emptyset$-flags,
\[
  [F]
  \;=\;
  \sum_{\substack{G \in \mathcal{F}^{\emptyset}_{N} \\ G\ \mathcal{H}\text{-free}}}
  \den{F}{G}\,[G]
  \qquad\text{modulo the \(\mathcal{H}\)-free condition,}
\]
with coefficients $\den{F}{G}$ given by the single-flag densities
generated in Part~1.

All certificates considered here have a singleton forbidden family
$\mathcal{H}=\{H\}$.  The generated proof invokes the
\lean{flag\_expand\_hfree N K} tactic, where $N$ is the host size and $K$ is the
reflection-layer \lean{Sym2Graph} encoding of $H$.  The qualifier on the
identity above matters, because our development works in the ambient algebra.  In
the built-in \(H\)-free algebra (\Cref{rem:builtin-hfree}), the restricted sum
would simply be the expansion identity~\eqref{eq:expansion-identity}, all of
whose flags are \(H\)-free there.  In the ambient algebra, however, the flags containing
\(H\) are nonzero basis elements, so dropping their terms cannot give a
literal equality; the auxiliary identity instead says that every empty-type
positive homomorphism satisfying the \(H\)-freeness condition of
\Cref{def:ensemble-semantic-order} gives both sides the same value.
Part~1 selected the $H$-free flags by a concrete
computation on $K$.  Bridging the two requires proving that the underlying
graph of every flag that computation discarded contains $H$ as a subgraph; the
$H$-free condition then forces those flags' terms to vanish, which justifies
expanding over exactly the flags Part~1 emitted.  From $N$ and $K$ alone, the
tactic establishes this bridge itself.

\paragraph{Assembling the main theorem (Part~4).}
Part~4 assembles the outputs of the preceding parts and realizes the
chain of inequalities from \Cref{sec:flagmatic-background}.  The certificate identity
$[F]+\sum_tQ_t=c\cdot\mathbf{1}-r$ supplies its middle equality, while the
non-negativity of the $Q_t$ and of $r$ supplies its two inequalities.  In the
Mantel example, there is one $Q_t$, namely
$Q=\bigl\llbracket\tfrac12((\fKtwobull)-(\fbarKtwobull))^2\bigr\rrbracket_\sigma$,
and $r=\tfrac13\cdot\fbarPthree$.  The generated Lean proof does not
define $r$ explicitly; it appears as the coefficientwise difference left after
the two sides are expanded in a common basis.

The first step is to prove
$[F]\leq_H[F]+\sum_tQ_t$.  It uses the following general lemma:
\begin{lstlisting}
theorem forbidLEWith_add_QuadraticForm {k n : (*@$\mathbb{N}$@*)} {(*@$\sigma$@*) : FlagType (Fin k)}
    {C : ForbidCondition} {f g : FlagAlgebra (*@$\emptyt$@*)}
    (M : Matrix (Fin n) (Fin n) (*@$\mathbb{R}$@*)) (hM : M.PosSemidef) (v : FlagAlgebraVec (*@$\sigma$@*) n)
    : forbidLEWith C f g -> (*@$\,$@*) forbidLEWith C f (g + (*@$\llbracket$@*)flagQuadraticForm M v(*@$\rrbracket_0$@*)) := ...
\end{lstlisting}
Here $C$ is a predicate encoding the forbidden-subgraph set $\mathcal{H}$; for a problem
forbidding a single graph $H$, it is \lean{forbiddenCondition H}, the
$H$-freeness condition of \Cref{def:ensemble-semantic-order} used throughout
this section.  Thus,
\lean{forbidLEWith C f g} is the implementation-level form of $f\leq_H g$.
The lemma says that if $f\leq_H g$, then the inequality remains true after the
non-negative downward quadratic form determined by the PSD matrix $M$ and flag
vector $v$ is added to the right-hand side; \lean{hM} is the proof of positive
semidefiniteness generated in Part~2.
The proof starts from the reflexive inequality $[F]\leq_H[F]$ and applies this
lemma once for each SDP block.  After all blocks have been added, its conclusion is precisely
$[F]\leq_H[F]+\sum_tQ_t$.

It remains to prove $[F]+\sum_tQ_t\leq_H c\cdot\mathbf{1}$.  The compiler checks
this comparison by rewriting both sides in the basis of $H$-free flags at the
host size $N$.  When necessary, the Part~3 lemma first expands $[F]$ to that
basis.
Expanding each quadratic form produces downward images of products of basis
elements; the \lean{reduce\_downward\_flagmul} tactic rewrites them using the multiplication
and downward identities generated in Part~1.  The
\lean{expand\_one\_hfree\_at} tactic similarly expands $\mathbf{1}$ in the same
basis, dispatching on the forbidden graph exactly as
\lean{flag\_expand\_hfree} does in Part~3.  Finally,
the \lean{flagsum\_ac\_sort\_rhs\_pipeline} tactic collects the coefficients of matching
basis elements.

After these rewrites, the difference between the right- and left-hand sides is
exactly the remainder $r$ from the certificate calculation.  The tactic
\lean{flag\_nonneg} checks that its coefficients are non-negative and then uses
the non-negativity of every basis element to conclude $0\leq_H r$.  In this way,
the normalization and the final non-negativity check together verify the
conceptual equality $[F]+\sum_tQ_t=c\cdot\mathbf{1}-r$ and the last inequality
$c\cdot\mathbf{1}-r\leq_H c\cdot\mathbf{1}$.

\paragraph{The compiler script.}
The compiler script performs the certificate-specific assembly.  It parses
the target, bound, forbidden graph, host size, and SDP blocks from the
Flagmatic file.  For each graph or flag represented by a compact string, it
finds the isomorphic entry in Lean's canonical enumeration; the resulting
index selects the corresponding \lean{FlagAlgebra\_n\_k\_m\_i} constant.  The
script also reconstructs the rational matrices, computes candidate exact
$LDL^\top$ witnesses, and determines whether the target requires lifting.

From these data, the script emits one Lean source file organized according to
the four parts above: the generation commands of Part~1, the matrix and flag
vector declarations and PSD proofs of Part~2, the optional target expansion
of Part~3, and the final theorem of Part~4.  Certificates change the generated
data and the number of SDP blocks, but not this proof structure.

The generated file is a proof proposal that Lean checks independently,
including all its intermediate propositions and its final bound.  The source
certificate itself is not an input to Lean, however, so this check establishes
the validity of the generated theorem rather than the fidelity of the
translation.  The next subsection makes this boundary precise.

\subsection{Trust Model}
\label{sec:flagmatic-trust}

There are two distinct questions of trust: is the theorem that Lean accepts
valid, and is it the statement the certificate was meant to yield?

For the first question, neither the SDP solver nor the compiler script is
trusted for the logical validity of a theorem that Lean accepts.  The
complete generated files used here introduce neither axiom declarations nor
\lean{sorry}, so every claim is carried by a proof term Lean checks: the
emitted matrices must be proved positive semidefinite, the certificate
calculation must normalize to non-negative coefficients, and the final
theorem must typecheck.  Subject to Lean's trusted base, erroneous data
cannot make an invalid proof term valid.

For the second question, the compiler script is trusted for \emph{translation
fidelity}: it parses the
certificate, chooses the corresponding entry in Lean's canonical enumeration
for each certificate flag, reconstructs $M_t=R_tQ'_tR_t^\top$, and selects the
target, bound, and forbidden graph.
These choices are visible in the generated declarations and theorem statement,
but Lean does not prove that they faithfully reproduce the source certificate.
A translation error could therefore produce a valid theorem different from
the one intended.  Guarding against this does not require trusting the
script; it requires reading the final theorem statement.  In this sense the
script marks an auditability boundary, rather than a soundness gap in the
generated Lean theorem.

The proof obligations also use two evaluation mechanisms.  Five of the seven
generated files evaluated in \Cref{sec:results}, including the Erd\H{o}s
pentagon, discharge every finite
computation, from the exact matrix equalities of the PSD proofs to the
enumerations and density calculations, by \lean{decide +kernel}; the kernel
checks all of it, and no compiled-evaluation axiom appears.  In the $K_5$-free
and $C_5$-free edge-density certificates, kernel reduction of the generated
computations was too slow to be practical, so those two files use
\lean{native\_decide} and additionally trust Lean's native compiler and
runtime.  Finally, the custom
tactics are unverified metaprograms whose dependence on elaborated expression
shapes affects robustness and portability.  They construct proof terms that
Lean checks, so a tactic failure leaves the proof incomplete rather than
establishing an invalid theorem.

\subsection{Evaluation}
\label{sec:results}\label{sec:flagmatic-scenarios}

We evaluated the compiler on seven certificates spanning host sizes
$N\in\{3,4,5\}$, targets at and below the host size, between one and four SDP
blocks, and four
forbidden graphs ($K_3$, $K_4$, $K_5$, and $C_5$).  For each certificate,
the compiler produces a complete Lean source file.  All seven files compile
and contain neither \lean{sorry} nor generated axiom declarations.  No
certificate coefficient is transcribed into Lean by hand.  All but the
$K_5$- and $C_5$-free edge-density files are checked entirely by
\lean{decide +kernel}; those two use
\lean{native\_decide}.
Table~\ref{tab:flagmatic-scenarios} summarizes the seven cases;
\Cref{sec:compile-times} reports their compilation times and peak memory
under both evaluation modes.

\begin{table}
  \centering
\caption{Certificates carried end-to-end through the compiler.  \(H\) is the
  forbidden graph, \(n_F\) the number of vertices of the target graph \(F\),
  \(N\) the host size, and \(T\) the number of SDP blocks.}
  \label{tab:flagmatic-scenarios}
  \begin{tabular}{lccccc}
    \toprule
    Case & $H$ & $n_F$ & $N$ & $T$ & Bound \\
    \midrule
    Mantel (edge density)               & $K_3$ & 2 & 3 & 1 & $1/2$    \\
    $P_3$ density                       & $K_3$ & 3 & 3 & 1 & $3/4$    \\
    $C_4$ density                       & $K_3$ & 4 & 4 & 2 & $3/8$    \\
    Edge density                        & $K_4$ & 2 & 4 & 2 & $2/3$    \\
    Erd\H{o}s pentagon ($C_5$ density)  & $K_3$ & 5 & 5 & 3 & $24/625$ \\
    Edge density                        & $K_5$ & 2 & 5 & 4 & $3/4$    \\
    Edge density                        & $C_5$ & 2 & 5 & 4 & $1/2$    \\
    \bottomrule
  \end{tabular}
\end{table}

The cases play complementary roles.  Mantel is a hand-checkable baseline: its
single $2\times2$ SDP block is the explicit square from
\Cref{sec:flagmatic-background}.  Here the dimension of a block is the
dimension of its certificate matrix, equivalently the length of its associated
flag vector.  The $C_4$-density certificate has two blocks, with matrices of
dimensions $4\times4$ and $3\times3$, while the Erd\H{o}s pentagon certificate
has three blocks, with matrices of dimensions $8\times8$, $6\times6$, and
$5\times5$.  The $K_5$- and $C_5$-free edge-density cases each use four blocks,
all with~$8\times8$ matrices.  These certificates are Flagmatic's output as
produced, with no attempt to minimize the number of blocks or the matrix
dimensions; more compact certificates for the same bounds may exist, but the
unoptimized output already compiles end to end.  Together, these examples
exercise targets both
at and below the host size, and increasingly large multi-block normalizations.

The $C_5$-free edge-density case shows that the forbidden graph need not be
complete; the generated file proves that graphs containing no $C_5$ subgraph
have edge density at most $1/2$.

The generated certificate argument establishes only the upper-bound direction.
For Mantel and the Erd\H{o}s pentagon,
the matching lower bounds are proved directly in Lean using the underlying
formalization (\Cref{sec:lowerbounds}), completing the
Tur\'an densities $\tfrac12$ and $\tfrac{24}{625}$ of Examples~\ref{ex:mantel}
and~\ref{ex:pentagon}.

\paragraph{Limitations.}
\label{sec:flagmatic-open}
We point out three limitations of the current compiler.
First, this evaluation establishes coverage of the seven cases above, not
completeness of the compiler.  Even a certificate within the compiler's
intended scope may expose a normalization shape not handled by the custom
tactics.  Such failures are safe: an unsupported or malformed input is
rejected or yields an explicitly incomplete skeleton containing \lean{sorry},
and a tactic failure leaves the proof incomplete, so neither case produces an
unchecked theorem.
Second, fully kernel-checked compilation does not yet scale.  For the
$K_5$-free and $C_5$-free edge-density certificates, we attempted to check the
generated files entirely by \lean{decide +kernel}; the runs did not finish in
an acceptable time and were abandoned, which is why those two files use
\lean{native\_decide} (\Cref{sec:flagmatic-trust}).
Third, as explained in the trust model, we do not verify translation fidelity:
Lean does not check that the generated declarations faithfully transcribe the
source certificate.  For each of the seven evaluated certificates, however,
the compiler fills the fixed proof skeleton without manual editing, and Lean
checks every proof obligation in the generated file under the trust model of
\Cref{sec:flagmatic-trust}.  Thus, the fidelity of the generated theorem to the
source certificate remains a limitation, while the logical validity of the
generated theorem itself is checked by Lean.

\section{Using the Formalization Beyond Certificate Compilation}
\label{sec:lowerbounds}

The certificate compiler of \Cref{sec:flagmatic} is one application of our
development, not its full scope.  Underlying the compiler is a reusable
formalization of flag-algebra expressions, positive homomorphisms and their
semantics, and the passage from finite graphs to limiting densities.  These
components can also be used directly, without first translating an argument
into a certificate.

This section illustrates three such uses.  The Mantel development gives a
direct flag-algebra proof and shows what it is like to work with the
formalization without the compiler.  The matching lower bounds for Mantel's
theorem and the Erd\H{o}s pentagon theorem combine explicit finite
constructions with the convergence of normalized extremal numbers, completing
the corresponding exact Tur\'an-density results.  Finally, two inequalities of
Goodman show that the same framework supports flag-algebra arguments outside
the certificate problems considered in \Cref{sec:flagmatic}.

\paragraph{Mantel's theorem.}
\Cref{ex:mantel} states $\tdensity{K_2}{K_3}=\tfrac12$.  The Lean development
records it in four statements:
\begin{lstlisting}
theorem Mantel_flag_bound : K2 (*@$\leq$@*) (1 / 2 : (*@$\mathbb{R}$@*)) (*@$\leansmul$@*) 1 + K3 := ...

theorem Mantel_flag_bound' : (*@$\forall$@*) ((*@$\phi$@*) : PositiveHom (*@$\emptyt$@*)),
    (*@$\phi$@*) K3 = 0 (*@$\to$@*) (*@$\phi$@*) K2 (*@$\leq$@*) 1 / 2 := ...

lemma extremal_density_K3_ge (n : (*@$\mathbb{N}$@*)) (hn2 : n (*@$\geq$@*) 2)
    : (extremalNumber n (completeGraph (Fin 3)) / n.choose 2 : (*@$\mathbb{R}$@*)) (*@$\geq$@*) 1 / 2 := ...

theorem Mantel_Turan : turanDensity (completeGraph (Fin 3)) = 1 / 2 := ...
\end{lstlisting}
Here \lean{K2} and \lean{K3} denote the basis elements $[K_2]$ and $[K_3]$.
The compiler of \Cref{sec:flagmatic} already produces the upper bound in
Mantel's theorem.  We include this separate development for a different
reason: it shows what a proof looks like when one works directly with the
flag-algebra formalization instead of supplying a certificate to the compiler.
Accordingly, this proof uses neither the compiler nor the
\lean{forbidLE} interface of \Cref{sec:forbidden}.

The direct upper-bound argument takes place in the ambient algebra and keeps
the triangle contribution explicitly:
\[
  K_2\leq\tfrac12\cdot\mathbf{1}+K_3.
\]
This inequality is formalized as \lean{Mantel\_flag\_bound}.  If a positive
homomorphism $\phi$ represents a triangle-free limit, then
$\phi([K_3])=0$, and evaluating the displayed inequality gives
$\phi([K_2])\leq\tfrac12$.  This pointwise statement is recorded as
\lean{Mantel\_flag\_bound'}, and the remaining argument proceeds from it.

The matching lower bound comes from the balanced complete bipartite graph
$K_{\lfloor n/2\rfloor,\lceil n/2\rceil}$ and is formalized as
\lean{extremal\_density\_K3\_ge}.  This graph is triangle-free:
among any three vertices, two lie in the same part and are therefore
non-adjacent.  Its $\lfloor n^2/4\rfloor$ edges make up at least half of the
$\binom{n}{2}$ vertex pairs.  Here \lean{extremalNumber} denotes the maximum
number of edges in a $K_3$-free graph on $n$ vertices.  It agrees with
$\mathrm{ex}(n,\,K_2;\,K_3)$ because the induced copies of $K_2$ are precisely
the edges.  When $n=2k$, Lean uses
\lean{completeBipartiteGraph (Fin k) (Fin k)} as the witness and compares its
$k^2$ edges with $\tfrac12\binom{2k}{2}$.  The case of odd $n$ follows by
comparison with the next even graph size, using the fact that the normalized
extremal numbers are non-increasing.

Finally, the construction and the upper bound are combined in
\lean{Mantel\_Turan}, using the compactness argument of
\Cref{lem:semantic-bound} directly rather than the machinery of
\Cref{sec:forbidden}.  If the normalized extremal
numbers exceeded $\tfrac12+\varepsilon$ for infinitely many values of $n$, we
could choose extremal triangle-free graphs of those sizes and pass, by
\Cref{thm:convergent-hom}, to a convergent subsequence.  Its limit would be a
positive homomorphism $\phi$ with $\phi([K_3])=0$ and
$\phi([K_2])\geq\tfrac12+\varepsilon$, contradicting
\lean{Mantel\_flag\_bound'}.  The construction gives the lower bound
$\tfrac12$, so the normalized extremal numbers converge to $\tfrac12$, which
is the statement of \lean{Mantel\_Turan}.

\paragraph{The Erd\H{o}s pentagon theorem.}
This example shows how an explicit construction and the passage from finite
graphs to limiting densities complement a compiler-generated upper bound.
The full result of \Cref{ex:pentagon},
\begin{lstlisting}
theorem ErdosPentagon_Turan : generalizedTuranDensity K3 C5 = 24/625 := ...
\end{lstlisting}
is the conjunction of the upper bound of \Cref{sec:results} and the lower
bound
\begin{lstlisting}
theorem ErdosPentagon_Turan_lowerBound : generalizedTuranDensity K3 C5 (*@$\geq$@*) 24/625 := ...
\end{lstlisting}
proved by an explicit construction in Lean.  We first describe the
construction and the analytic steps mathematically, and then match them with
the formal development.

Since, by the convergence theorem
\lean{tendsto\_\allowbreak generalizedTuranDensity} of \Cref{sec:forbidden},
the normalized extremal numbers $\mathrm{ex}(n,\,C_5;\,K_3)/\binom{n}{5}$
converge to $\tdensity{C_5}{K_3}$, it suffices to exhibit, for infinitely many
graph sizes, one triangle-free graph whose normalized count of induced copies of
$C_5$ is at least $\tfrac{24}{625}$.  The
witness is the \emph{$n$-fold blow-up} of $C_5$: replace each vertex of
$C_5$ by a class of $n$ vertices, and join two vertices of distinct
classes exactly when the corresponding vertices of $C_5$ are adjacent;
vertices
of a common class remain non-adjacent.  The blow-up is triangle-free:
adjacent vertices lie in distinct classes, so the classes of a hypothetical
triangle would form a triangle in $C_5$ itself, which has none.  It also
contains many copies of $C_5$: picking one vertex from each of the five
classes
yields five vertices that inherit exactly the adjacencies of $C_5$, so each
of the $n^5$ choice functions induces a copy of $C_5$.  These copies are
induced, as the density
convention of \Cref{sec:intro} requires.  (In a triangle-free graph, the
distinction disappears anyway: every $C_5$ subgraph is induced, since any
chord, an edge joining two non-consecutive vertices of the cycle, would
create a triangle.)  The blow-up is therefore a
triangle-free graph on $5n$ vertices with at least $n^5$ induced copies
of~$C_5$,
so $\mathrm{ex}(5n,\,C_5;\,K_3)\geq n^5$, and normalizing gives, for every
$n\geq1$,
\[
  \frac{\mathrm{ex}(5n,\,C_5;\,K_3)}{\binom{5n}{5}}
  \;\geq\;
  \frac{n^5}{\binom{5n}{5}}
  \;\geq\;
  \frac{n^5}{(5n)^5 / 5!}
  \;=\;
  \frac{24}{625}.
\]
The bound thus holds along the subsequence of graph sizes $5n$, and taking the
limit yields $\tdensity{C_5}{K_3}\geq\tfrac{24}{625}$.

The Lean development mirrors each step of this outline:
\begin{lstlisting}
def blowUp {V : Type} [Fintype V] (G : SimpleGraph V) (n : (*@$\mathbb{N}$@*))
    : SimpleGraph (V (*@$\times$@*) Fin n) :=
  { Adj v w := G.Adj v.1 w.1
    symm v w := by apply G.symm }

theorem blowUp_K3_free {m : (*@$\mathbb{N}$@*)} {G : SimpleGraph (Fin m)}
    (n : (*@$\mathbb{N}$@*)) (hfree : K3.Free G) : K3.Free (blowUp G n) := ...

lemma subgraphCount_blowUp_C5_ge (n : (*@$\mathbb{N}$@*))
    : subgraphCount C5 (blowUp C5 n) (*@$\geq$@*) n ^ 5 := ...

theorem generalizedExtremalNumber_K3_C5_ge (n : (*@$\mathbb{N}$@*))
    : generalizedExtremalNumber (5 * n) K3 C5 (*@$\geq$@*) n ^ 5 := ...
\end{lstlisting}
The \lean{blowUp G n} is the $n$-fold blow-up of an arbitrary base graph $G$: its
vertex type is the product type \lean{V $\times$ Fin n}, whose elements are
the pairs $(v,i)$ of a vertex \lean{v : V} and an index \lean{i : Fin n}, so
the class of a base vertex $v$ consists of the pairs with first coordinate
$v$, and the \lean{Adj} field compares first coordinates through $G$.
Vertices of a common class are non-adjacent because $G$ has no loops, and two
classes over adjacent base vertices are completely joined.
The \lean{blowUp\_K3\_free} theorem states the triangle-freeness property.  Its proof
transports a hypothetical triangle back to the base graph: composing a copy
of $K_3$ in the blow-up with the first projection gives a copy of $K_3$ in
$G$, where injectivity survives the projection because vertices of a common
class are non-adjacent.  Instantiating $G$ with $C_5$, whose own
triangle-freeness is a finite check, shows that the blow-up of $C_5$
is triangle-free.  The
\lean{subgraphCount\_blowUp\_C5\_ge} lemma
expresses the counting property; here
\lean{subgraphCount C5 G} is the number of induced subgraphs of \lean{G}
isomorphic to $C_5$, matching the induced-copy convention of
\Cref{sec:intro}.  The proof converts the one-vertex-per-class recipe into an
injection from \lean{(Fin 5 $\to$ Fin n)} into the induced copies of $C_5$
in the blow-up: a choice function $g$ is sent to the subgraph induced on the five
vertices $(i, g(i))$ for \lean{i : Fin 5}, and distinct choice functions give
distinct subgraphs.

The preceding two results give a triangle-free graph on $5n$ vertices
containing at least $n^5$ induced copies of $C_5$.  The theorem
\lean{generalizedExtremalNumber\_K3\_C5\_ge} records the resulting finite bound
\[
  \mathrm{ex}(5n,\,C_5;\,K_3)\geq n^5.
\]
There is one additional representation step in Lean.  The natural vertex set
of \lean{blowUp C5 n} is the product \lean{Fin 5 $\times$ Fin n}, while
\lean{generalizedExtremalNumber} takes its maximum over graphs on
\lean{Fin (5 * n)}.  The graph \lean{blowUp\_fin} simply reindexes the blow-up
on this latter vertex set; it is proved isomorphic to \lean{blowUp} and is used
as the witness.  Normalizing the finite bound above and passing to the limit
along the graph sizes $5n$ then gives
\lean{ErdosPentagon\_Turan\_lowerBound}.

\paragraph{Direct Goodman inequalities.}
The final examples use the core flag-algebra infrastructure directly for two
inequalities outside the Tur\'an-density certificate problems considered in
\Cref{sec:flagmatic}.  Both are due to Goodman~\cite{goodman1959}.
The first is the algebraic lower bound
\[
  K_3 \;\geq\; K_2\cdot(2K_2-\mathbf{1}),
\]
which bounds the triangle density from below in terms of the edge density:
under a positive homomorphism with edge density $e$, the triangle density is
at least $e(2e-1)$, a quantity that becomes positive as soon as $e$ exceeds
the Mantel threshold $\tfrac12$.  The proof combines the size-three expansion
of $K_2$ with the Cauchy--Schwarz inequality for the downward operator,
\[
  \llbracket f\,g\rrbracket_\sigma^{\,2}
  \;\leq\;
  \llbracket f^2\rrbracket_\sigma\cdot\llbracket g^2\rrbracket_\sigma,
\]
which is a separate theorem in Razborov's paper~\cite{razborov2007flag} and
is proved in the development in this general form as
\lean{Cauchy\_\allowbreak Schwarz\_\allowbreak inequality}; here it is
applied to the edge flag labeled at one endpoint.  The resulting inequality is
formalized as
\lean{Goodman\_\allowbreak bound\_\allowbreak on\_\allowbreak triangle\_\allowbreak density}.

The second result is
\[
  \overline{K_3} + K_3 \;\geq\; \tfrac14\cdot\mathbf{1},
\]
where $\overline{K_3}$ is the empty graph on three vertices: asymptotically,
at least a quarter of all
vertex triples induce a triangle or an independent set.  Reading a graph and
its complement as the two color classes of a $2$-edge-colored complete graph
turns this into Goodman's classical bound on the number of monochromatic
triangles; the proof reuses the downward square certificate of Mantel's
theorem.  It is formalized as
\lean{Goodman\_theorem\_on\_Ramsey\_multiplicity}.

Together, the examples in this section show that certificate compilation is
only one use of the formalization.  The same framework supports direct
flag-algebra derivations, explicit finite constructions and their limiting
arguments, and inequalities beyond maximizing the induced density of one graph
while forbidding another as a subgraph.

\section{Engineering Obstacles}
\label{sec:obstacles}

The specification layer of \Cref{sec:abstract} makes explicit several proof
obligations that ordinary combinatorial notation hides.  A construction
defined on a concrete labeled graph must be proved invariant under
isomorphism before it can be lifted to flags.  Because a flag's size is part
of its Lean type, identities relating flags of different sizes, such as the
expansion identity, require additional type-level bookkeeping.  Reusing a
theorem about a tuple of flags can create a similar problem after the tuple
is reordered: mathematically identical tuples need not have types that Lean
identifies.  Finally, finite counting depends on a chosen enumeration, and
the same choice must be used consistently throughout a proof.  These
obligations do not change the underlying mathematics, but they recur often
enough to impose a substantial proof-engineering cost.

Although these difficulties often appear together, their causes differ.
Invariance under isomorphism is the well-definedness condition required to
define an operation on a quotient.  The size and tuple-reordering issues
reflect a limitation of Lean's underlying type theory: mathematically equal
types need not be definitionally equal, and therefore may require explicit
casts.  The enumeration issue, by contrast, is not primarily a
consequence of the underlying type theory.  It arises from elaboration and
typeclass inference, which may synthesize different enumerations at different
occurrences.  The following subsections treat these obstacles separately.

Counting identities present one further challenge: two density expressions
may count the same configurations using different finite representations.
Our most reliable proof method is to identify the two finite sets, construct
an explicit bijection between them, and deduce equality of their
cardinalities.  The final subsection explains this recurring pattern.

\subsection{Computing with Graphs up to Isomorphism}

As described in \Cref{sec:flagtypes}, a flag is an isomorphism class of
concrete labeled graphs.  Thus, two isomorphic labeled graphs $G$ and $G'$ may
use different vertex names and be represented by different Lean values, but
they denote the same flag.  Representing flags by isomorphism classes is
mathematically natural because subsequent statements on flags should not depend on
arbitrary vertex names.  At the same time, this quotient representation
creates a recurring well-definedness obligation whenever a construction is
first described using a concrete representative.

Suppose such a construction assigns a value $A(G)$ to a labeled graph $G$.
To define the corresponding value of the flag represented by $G$, one must
prove
\[
  G \simeq_f G' \quad\Longrightarrow\quad A(G)=A(G').
\]
Only then is the rule $[G]\mapsto A(G)$ independent of the representative.
Lean packages this step as \lean{Quotient.lift}.  The use of
\lean{Quotient.lift} is itself short; the substantive obligation is proving
that the representative-level construction is invariant under isomorphism.

This pattern occurs when lifting densities and graph operators to flags and
also at the reflection boundary.  For example, decoding an executable
edge-set representation must yield the same abstract flag for every
isomorphic concrete encoding.  Rather than exposing the two representations in
every later theorem, we prove this compatibility once and let subsequent
proofs operate on the resulting flag.  The same issue appears one level higher
when flag vectors are quotiented by the zero-space relations to form
\lean{FlagAlgebra}: a construction on vectors must respect those relations
before it can be lifted to the algebra.

One could avoid these compatibility proofs by choosing a canonical
representative for every isomorphism class.  That choice would make the
representation easier to compute with, but it would introduce
canonicalization details into the statements and proofs of algebraic lemmas.
We instead keep the quotient specification close to Razborov's mathematics
and concentrate the representation-specific work at the boundaries where a
construction is lifted to equivalence classes.

\subsection{Comparing Flags of Different Sizes}

In Lean, a flag's size is part of its type: the $\sigma$-flags of size $\ell$
form the type \lean{Flag $\sigma$ (Fin $\ell$)}, which varies with $\ell$.  A
statement comparing flags of different sizes must therefore cross between
different types, a crossing that paper mathematics performs silently.  The
situation arises throughout the development: the expansion
identity~\eqref{eq:expansion-identity} equates a size-$m$ flag with a
linear combination of size-$n$ flags, and the product formula of
Equation~\eqref{eq:mul} must be proved independent of its auxiliary size.
This subsection explains how such cross-size statements are made and proved.

Consider the expansion identity~\eqref{eq:expansion-identity}.  Its
left-hand side involves a flag of type \lean{Flag $\sigma$ (Fin $m$)},
whereas the flags on the right have type \lean{Flag $\sigma$ (Fin $n$)}.  A
term of the first type cannot be used where Lean expects a term of the
second, and an ordinary equality can only compare terms having the same
type.  Thus, the identity is not merely hard to prove; as written, it is not
a well-typed statement.

The first formalization task is therefore to construct explicitly
the common ambient space that paper mathematics leaves implicit.
Mathematically, we regard every fixed-size family
$\mathcal{F}^{\sigma}_{\ell}$ as part of the all-size family
$\mathcal{F}^{\sigma}$, and hence regard combinations of flags of different
sizes as vectors in the same space.  In Lean, however, the types
\lean{Flag $\sigma$ (Fin $\ell$)} vary with $\ell$ and do not come with such a
common ambient type.  We therefore use the dependent sum
\[
  \text{\lean{FinFlag}}~\sigma
  \;=\;
  \mathop{\Sigma}_{\ell : \mathbb{N}}
    \text{\lean{Flag}}~\sigma~(\text{\lean{Fin}}~\ell).
\]
A size-$m$ $\sigma$-flag $F$ is therefore stored as the tagged pair
$\langle m,F\rangle$, while a size-$n$ $\sigma$-flag $G$ is stored as
$\langle n,G\rangle$.  The second components have different types, but
both tagged pairs belong to the single type \lean{FinFlag}~$\sigma$.
Recall that \lean{FlagVector}~$\sigma$ is the vector space of finitely
supported real-valued functions on \lean{FinFlag}~$\sigma$, or equivalently
of finite formal real linear combinations of its elements.  Thus, the two
tagged $\sigma$-flags index basis vectors in the same vector space, and
equations mixing flags of different sizes can be stated in that common space.

The packaging provides a common home, not an identification: when their size
tags differ, $\langle m,F\rangle$ and $\langle n,G\rangle$ remain distinct
elements of \lean{FinFlag}~$\sigma$ and index distinct basis vectors.
Identifying them is the quotient's job, and it needs no further argument:
the difference between the two sides of the expansion identity~\eqref{eq:expansion-identity} 
is one of the generators of the zero space, so the identity holds
in the flag algebra by definition.  Later cross-size comparisons, including
the independence of the
product formula from its auxiliary size, reason modulo this relation.

\subsection{Comparing Reordered Flag Tuples}
\label{sec:reordered-flag-tuples}

The multi-flag density is invariant under reordering its pattern flags.  One
instance is the symmetry
\(
  \den{F_0,F_1}{G}=\den{F_1,F_0}{G}
\),
which is what makes flag-algebra multiplication commutative, since the
product formula~\eqref{eq:mul} expands a product of two flags into a sum
weighted by exactly these pair densities.  On paper, the invariance is proved
once, in full generality: for any number of pattern flags, reordering the
tuple does not change its density in a host.  The two-flag symmetry is then
read off by taking the permutation that swaps the two positions, and this
instantiation is not counted as a proof step.  The formalization takes the
same route, and the friction appears exactly in the instantiation, a step
with no counterpart on paper.

Recall from \Cref{sec:formal-subflag-densities} that a tuple of pattern
flags is a dependent function: an element of
\lean{FlagList}~$\sigma$~\lean{t Vl} assigns 
a $\sigma$-flag on the vertex type \lean{Vl i} to each index \lean{i : Fin t}, and \lean{flagListDensity}
computes the multi-flag density of such a tuple in a host flag.  Both the
entries and their types vary with the index, and the vertex-type family
\lean{Vl} is itself an argument of the tuple's type.  Reordering is defined
by precomposition:
\begin{lstlisting}
abbrev Perm (t : (*@$\mathbb{N}$@*)) := Fin t (*@$\simeq$@*) Fin t

def listTypePermute {t : (*@$\mathbb{N}$@*)} (Vl : Fin t (*@$\to$@*) Type) ((*@$\pi$@*) : Perm t)
    : Fin t (*@$\to$@*) Type :=
  fun i => Vl ((*@$\pi$@*) i)

def FlagList.permute {T : Type} {(*@$\sigma$@*) : FlagType T} {t : (*@$\mathbb{N}$@*)} {Vl : Fin t (*@$\to$@*) Type}
    (Fl : FlagList (*@$\sigma$@*) t Vl) ((*@$\pi$@*) : Perm t)
    : FlagList (*@$\sigma$@*) t (listTypePermute Vl (*@$\pi$@*)) :=
  fun i => Fl ((*@$\pi$@*) i)
\end{lstlisting}
A permutation \lean{$\pi$ : Perm t} is a bijection of the index set
\lean{Fin t}.  The permuted tuple \lean{Fl.permute $\pi$} places
\lean{Fl ($\pi$ i)} at position \lean{i}, so the entry types are reordered
along with the entries: the vertex-type family of the result is
\lean{listTypePermute Vl $\pi$}, the reordered family
\lean{fun i => Vl ($\pi$ i)}, and it appears in the type of the result.  The
invariance is then proved once, for every length \lean{t} and every
permutation; the symmetry to be derived from it is the pair case:
\begin{lstlisting}
theorem flagDensity_permute {T W : Type} {(*@$\sigma$@*) : FlagType T} {t : (*@$\mathbb{N}$@*)} {Vl : Fin t (*@$\to$@*) Type}
    (Fl : FlagList (*@$\sigma$@*) t Vl) (G : Flag (*@$\sigma$@*) W) ((*@$\pi$@*) : Perm t)
    : flagListDensity Fl G = flagListDensity (Fl.permute (*@$\pi$@*)) G := ...

theorem flagPairDensity_comm {T U(*@$_0$@*) U(*@$_1$@*) W : Type} {(*@$\sigma$@*) : FlagType T} 
    (F(*@$_0$@*) : Flag (*@$\sigma$@*) U(*@$_0$@*)) (F(*@$_1$@*) : Flag (*@$\sigma$@*) U(*@$_1$@*)) (G : Flag (*@$\sigma$@*) W)
    : flagDensity(*@$_2$@*) F(*@$_0$@*) F(*@$_1$@*) G = flagDensity(*@$_2$@*) F(*@$_1$@*) F(*@$_0$@*) G := ...
\end{lstlisting}
Here \lean{flagDensity}$_2$ is the pair specialization of
\lean{flagListDensity}, so the second theorem states precisely the two-flag
symmetry introduced at the beginning of this subsection.

To derive the second statement from the first, suppose the $\sigma$-flags
$F_0$ and $F_1$ have vertex types $U_0$ and $U_1$, and let \lean{Vl} be the
vertex-type family with
\lean{Vl 0 = U}$_0$ and \lean{Vl 1 = U}$_1$, so that the two-entry flag tuple
\lean{[F}$_0$\lean{, F}$_1$\lean{]}$^{\mathtt{f}}$, using the
\lean{FlagList} notation introduced in
\Cref{sec:formal-subflag-densities}, is an element of
\lean{FlagList}~$\sigma$~\lean{2 Vl}.  Take $\pi$ to be the permutation of
\lean{Fin 2} with $\pi(0)=1$ and $\pi(1)=0$.  Then
\lean{flagDensity\_permute} equates the density of
\lean{[F}$_0$\lean{, F}$_1$\lean{]}$^{\mathtt{f}}$ with that of its permuted
tuple; the permuted tuple places $F_1$ at position $0$ and $F_0$ at
position $1$, and its vertex-type family is
\lean{listTypePermute Vl $\pi$}.  The theorem \lean{flagPairDensity\_comm} speaks instead
of the directly constructed pair
\lean{[F}$_1$\lean{, F}$_0$\lean{]}$^{\mathtt{f}}$, an element of
\lean{FlagList}~$\sigma$~\lean{2 Vl'} for the vertex-type family \lean{Vl'}
with \lean{Vl' 0 = U}$_1$ and \lean{Vl' 1 = U}$_0$.  The families 
\lean{listTypePermute Vl $\pi$} and \lean{Vl'} are pointwise equal, since both send $0$
to $U_1$ and $1$ to $U_0$, but Lean need not reduce the two expressions to
the same term.  The family is an argument of the tuple's type, so the
permuted tuple and \lean{[F}$_1$\lean{, F}$_0$\lean{]}$^{\mathtt{f}}$
inhabit types, \lean{(FlagList}~$\sigma$~\lean{2 (listTypePermute Vl $\pi$))}
and \lean{FlagList}~$\sigma$~\lean{2 Vl'}, that are not definitionally the
same.  The permuted tuple and the directly constructed pair therefore cannot
yet be compared by ordinary equality.  The formal proof must first establish
equality of their vertex-type families and then use that equality to compare
the tuples.  This type-alignment step has no counterpart on paper, where the
two expressions simply denote the same reordered pair.

The bridging lemma \lean{flagListDensity\_HEq\_eq} packages these
requirements: if the two vertex-type families are equal and the two tuples
agree after their types are aligned, then the two densities coincide.
\begin{lstlisting}
theorem flagListDensity_HEq_eq {T W : Type} {(*@$\sigma$@*) : FlagType T} {t : (*@$\mathbb{N}$@*)} {Vl(*@$_0$@*) Vl(*@$_1$@*) : Fin t (*@$\to$@*) Type}
    {Fl(*@$_0$@*) : FlagList (*@$\sigma$@*) t Vl(*@$_0$@*)} {Fl(*@$_1$@*) : FlagList (*@$\sigma$@*) t Vl(*@$_1$@*)}
    (h_Vl_eq : Vl(*@$_0$@*) = Vl(*@$_1$@*)) (h_HEq : HEq Fl(*@$_0$@*) Fl(*@$_1$@*)) (G : Flag (*@$\sigma$@*) W)
    : flagListDensity Fl(*@$_0$@*) G = flagListDensity Fl(*@$_1$@*) G := ...
\end{lstlisting}
In this application, \lean{Vl}$_0$ is
\lean{listTypePermute Vl $\pi$} and \lean{Vl}$_1$ is \lean{Vl'}, so
\lean{h\_Vl\_eq} states
\[
  \lean{listTypePermute Vl $\pi$ = Vl'},
\]
proved by checking the two indices of \lean{Fin 2}.  The hypothesis
\lean{h\_HEq} expresses that the permuted tuple and the directly constructed
tuple \lean{[F}$_1$\lean{, F}$_0$\lean{]}$^{\mathtt{f}}$ are equal even
though Lean initially assigns them different types.  Concretely, both tuples
place $F_1$ at position $0$ and $F_0$ at position $1$.

The role of \lean{h\_Vl\_eq} is to align these two tuple types.  Once their
vertex-type families have been identified, \lean{h\_HEq} can be used as an
ordinary equality, allowing one tuple to be substituted for the other.  The
lemma \lean{flagListDensity\_HEq\_eq} then concludes that their densities are
equal.  Thus, the formal argument recovers the mathematically immediate fact
that permuting \lean{[F}$_0$\lean{, F}$_1$\lean{]}$^{\mathtt{f}}$ produces
the same pair as directly constructing
\lean{[F}$_1$\lean{, F}$_0$\lean{]}$^{\mathtt{f}}$.

\subsection{Controlling Finite Enumerations}
\label{sec:fintype}

Lean distinguishes a proof that a type is finite from a mechanism for
enumerating its elements.  An instance of \lean{Finite T} provides only the
former: it asserts that $T$ has finitely many elements but gives no way to
visit them.  An instance of \lean{Fintype T} provides the latter:
\begin{lstlisting}
class Fintype (T : Type) where
  elems : Finset T
  complete : (*@$\forall$@*) x : T, x (*@$\in$@*) elems
\end{lstlisting}
The field \lean{elems} is a \lean{Finset T}, a concrete finite set of
elements of \lean{T}, and \lean{complete} proves that every element of
\lean{T} belongs to it.  The set \lean{elems} carries no mathematically
significant order, but it allows generic operations to count the elements,
inspect every candidate, and form finite sums and products.  Thus,
\lean{Fintype} is the interface needed by counting algorithms,
whereas \lean{Finite} records only the underlying finiteness fact.  Every
\lean{Fintype} instance implies \lean{Finite}.  In the other direction,
\lean{Fintype.ofFinite} produces a \lean{Fintype} instance from a proof of
finiteness by invoking Lean's choice axiom \lean{Classical.choice}.  The
resulting instance is noncomputable: its \lean{elems} exists as a term that
proofs can mention, and every lemma about \lean{Fintype} applies to it, but
the definition carries no executable code, so Lean cannot evaluate
\lean{elems} to list the elements of $T$.  This costs nothing when a proof relates cardinalities to one another by
lemmas, which is how the counting below proceeds.  What a noncomputable
instance rules out is evaluation.  Proving a concrete numeric value of
\lean{Fintype.card T} by computation, or enumerating the elements
executably, requires a computable \lean{Fintype} instance; the reflection
layer (\Cref{sec:reflection}) builds its enumerations in that form.

This distinction becomes especially important when formalizing cardinalities of finite sets
and finite counting arguments.  A set $S=\{x:T\mid P(x)\}$ is represented as a
predicate of type \lean{Set T}; when
we count its elements, the relevant type is the subtype of terms $x:T$
satisfying $P(x)$.  Lean can often infer that this subtype is \lean{Finite}
without having a particular \lean{Fintype} enumeration available, but the
cardinality function \lean{Fintype.card} and the lemmas about it, such as
\lean{Finset.card\_univ} and \lean{Fintype.card\_congr}, require a
\lean{Fintype} instance.  The elaborator does synthesize \lean{Fintype}
instances automatically; the difficulty is not obtaining an instance but
obtaining the same one every time.  In a development with multiple
overlapping constructions (quotient types, embedded subgraphs,
partial-function spaces), the same type can receive different \lean{Fintype}
instances from different elaboration paths.  As sets, their \lean{elems}
fields cannot disagree, since \lean{complete} forces each to contain every
element of the type; what can differ is the term denoting that set.  When
the two \lean{elems} terms do not reduce to a common term definitionally,
the kernel rejects the goal even though the instances are provably equal.

The formalization responds in two ways.  The first is preventive: the
moment a finite set is named, the intended instance is chosen and bound
with \lean{let}, so that every later occurrence elaborates against the same
binding rather than a fresh search:
\begin{lstlisting}
-- S0 and S1 are the two finite sets being counted.
-- Choose and name the Fintype enumeration used for each one.
let hS0 : Fintype S0 := Fintype.ofFinite S0
let hS1 : Fintype S1 := Fintype.ofFinite S1
-- h_iso_S0_S1 is a bijection between the two sets.
-- The @ notation exposes the Fintype arguments that Lean normally inserts.
have card_eq : @Fintype.card S0 hS0 = @Fintype.card S1 hS1 :=
  @Fintype.card_congr S0 S1 hS0 hS1 h_iso_S0_S1
-- Rewrite the toFinset counts in the goal to Fintype cardinalities.
have count_eq : S0.toFinset.card = S1.toFinset.card := by
  -- These two toFinset terms also use hS0 and hS1, respectively.
  rw [Set.toFinset_card, Set.toFinset_card]
  exact card_eq
\end{lstlisting}
The displayed \lean{@} notation makes explicit where the two named instances
enter the proof: \lean{hS0} supplies the enumeration used by
\lean{Fintype.card S0}, and \lean{hS1} supplies the one used by
\lean{Fintype.card S1}; both are also arguments of
\lean{Fintype.card\_congr}.  The subsequent \lean{S0.toFinset} and
\lean{S1.toFinset} expressions use the same instances implicitly, as do the
two applications of \lean{Set.toFinset\_card}.  Thus,
\lean{Set.toFinset\_card} rewrites the two concrete counts to exactly the
cardinality terms appearing in \lean{card\_eq}.  Without the local bindings,
Lean could synthesize those hidden arguments independently at the different
occurrences, producing terms that are propositionally equal but do not match
for rewriting.

This pattern appears in \lean{flagDensity\allowbreak\_permute} from
\Cref{sec:reordered-flag-tuples}.  There, \lean{S0} and \lean{S1} are the
sets of tuples of pairwise-disjoint induced copies that realize \lean{Fl} and
\lean{Fl.permute $\pi$}, respectively.  The mathematical step is the
reindexing bijection \lean{h\_iso\_S0\_S1}; the named \lean{Fintype}
instances ensure that the cardinality equality obtained from this bijection
rewrites the counts occurring in the density definition.  The flag-operator
proofs use the same pattern for the isomorphism set associated with a fixed
underlying graph.

The second response repairs mismatches that do occur.  Once the
mathematical steps of such a proof are done, what remains is a goal of a
characteristic shape: an equality whose two sides print identically and
differ only in the inferred instances inside them.  The
tactic \lean{congr!} closes such goals; it decomposes the equality into
congruence subgoals and discharges the instance mismatches, which is sound
because any two \lean{Fintype} instances of the same type are provably
equal (\lean{Fintype T} is a subsingleton).  The bridging lemma
\lean{flagListDensity\allowbreak\_HEq\allowbreak\_eq} of
\Cref{sec:reordered-flag-tuples} is finished exactly this way: it carries
a \lean{FintypeList} instance for each vertex-type family, supplying a
\lean{Fintype} instance for every component type.  Once the families and the
tuples have been identified, the two sides of the goal differ only in the
\lean{Fintype} instances supplied for the component types; \lean{congr!}
resolves these remaining differences.

These explicit \lean{Fintype} choices are needed only where finite sets are
converted to \lean{Finset}s or their cardinalities are compared.  Once the
required counting identity has been proved, the subsequent algebraic argument
uses that identity without referring to the chosen instances.

\subsection{Proving Counting Identities by Bijections}
\label{sec:proof-style}

Explicit bijections became an unexpectedly prevalent proof pattern in the
specification layer.  We did not set out to make bijective arguments the
default: we
expected many of these equalities to follow from existing counting lemmas or
symbolic simplification, with explicit equivalences needed only occasionally.
In practice, however, unfolding and simplifying the definitions repeatedly
exposed the same structure: the two sides are cardinalities of finite sets of
configurations presented in different ways.  Our most reliable proof method
is to identify the two sets, construct an explicit bijection between them, and
use the bijection to prove equality of their cardinalities.  The bijection
carries the main mathematical content of the proof.

Our experience proving the density chain rules (\Cref{lem:chain-rule}) made
this pattern particularly clear.  A chain rule identifies two sampling procedures: choosing the required
subflags directly in the host, or making the same choice in stages through an
intermediate $\sigma$-flag.  On paper, these procedures are readily seen as two
descriptions of the same experiment.  In Lean, after unfolding the
definitions, the proof came down to a bijection between the finite sets of
choices made by the two procedures.  This bijection yielded equality of the
corresponding counts and hence the chain rule.

The same method proves permutation invariance, invariance under inserting an
empty flag, the downward-operator counting identities, and the adequacy of
the computational subgraph representation.  Although these obligations arise
in different parts of the formalization, each reduces to an explicit
correspondence between two finite sets.  The bijection contains the
mathematical argument, while the instance discipline of
\Cref{sec:fintype} handles the resulting cardinalities.  We do not claim
that bijections are the only possible approach, but their effectiveness
across these different obligations made them our default method for finite
counting identities.

\section{A Meta-Theory of Ensemble Semantics}
\label{sec:metatheory}

The definition of \lean{forbidLE} in \Cref{sec:forbidden} was a deliberate
departure from the usual presentation of constrained flag algebras.  Rather
than building a new algebra for each forbidden family, it retains the same
ambient algebra and imposes the forbidden-graph condition only when
interpreting its elements.  It considers only unlabeled positive
homomorphisms satisfying that condition, applies the corresponding
random-extension measure to each one, and requires the desired inequality to
hold with probability one under that measure.
This engineering choice exposed a mathematical problem in its own right:
when does this \emph{ensemble semantics} validate exactly the same typed
inequalities as a flag algebra in which the constraint is built in from the
beginning?  This led us to develop a separate mathematical meta-theory.
This section gives the definitions and main results needed to state that answer;
a forthcoming paper will present the full theory and its applications.

\paragraph{The constrained positive-homomorphism space.}
Let \(\mathcal G\) be the class of all finite simple graphs, let~\(\mathcal K\subseteq\mathcal G\) be a hereditary class, and fix a flag type
\(\sigma\).  Here \emph{hereditary} means closed under taking induced
subgraphs: if~\(G\in\mathcal K\) and \(S\subseteq V(G)\), then
\(G[S]\in\mathcal K\).  In particular, the \(H\)-free classes used earlier are
hereditary, because taking an induced subgraph cannot create a new
subgraph isomorphic to \(H\).

To make the underlying graph class explicit, we write
\(\FlagAlg{\sigma}[\mathcal G]\) for the ambient
\(\sigma\)-typed flag algebra and
\(\FlagAlg{\sigma}[\mathcal K]\) for the algebra constructed using only graphs
in \(\mathcal K\).  Only \(\sigma\)-flags whose underlying unlabeled graphs
belong to~\(\mathcal K\) occur in this constrained algebra.  Because
\(\mathcal K\) is hereditary, the remaining \(\sigma\)-flags, whose underlying
graphs lie outside \(\mathcal K\), span an ideal,\footnote{Here an ideal \(I\)
is a linear subspace of the flag algebra that is also closed under
multiplication by arbitrary algebra elements.  Thus, \(I\) is closed under
addition and real scalar multiplication, and if \(x\in I\) and \(a\) belongs
to the flag algebra, then \(ax\in I\).  This property makes the quotient by
\(I\) an algebra in which every element of \(I\) is identified with zero.} and the usual constrained
algebra is canonically isomorphic to the quotient of the ambient algebra by
this ideal.  We therefore have a canonical quotient map
\[
  q_\sigma:
  \FlagAlg{\sigma}[\mathcal G]
  \twoheadrightarrow
  \FlagAlg{\sigma}[\mathcal K]
\]
that sends every \(\sigma\)-flag outside \(\mathcal K\) to zero and every
remaining \(\sigma\)-flag to its counterpart in the constrained algebra.

Let
\[
  X_\sigma
  :=
  \mathrm{Hom}^{+}\bigl(\FlagAlg{\sigma}[\mathcal G],\R\bigr)
\]
be the compact space of positive homomorphisms on the ambient
algebra, with the topology of pointwise convergence on \(\sigma\)-flags.
Pullback along \(q_\sigma\) embeds the positive homomorphisms of the constrained
algebra into \(X_\sigma\).  Its image is
\begin{equation}\label{eq:Q-sigma}
  Q_\sigma
  :=
  \left\{
    \psi\circ q_\sigma
    \;\middle|\;
    \psi\in
    \mathrm{Hom}^{+}\bigl(\FlagAlg{\sigma}[\mathcal K],\R\bigr)
  \right\}
  \subseteq X_\sigma.
\end{equation}
Thus, \(Q_\sigma\) is the space of all \(\sigma\)-typed positive homomorphisms
admitted by the constrained quotient.

There is also an intrinsic description:
\begin{equation}\label{eq:Q-sigma-intrinsic}
  Q_\sigma
  =
  \left\{
    \chi\in X_\sigma
    \;\middle|\;
    \chi([F])=0
    \text{ for every \(\sigma\)-flag \(F\) whose underlying graph lies outside
    \(\mathcal K\)}
  \right\}.
\end{equation}
The inclusion from left to right is immediate.  If
\(\chi=\psi\circ q_\sigma\) belongs to \(Q_\sigma\), then
\(q_\sigma([F])=0\) for every \(\sigma\)-flag \(F\) outside \(\mathcal K\),
and hence \(\chi([F])=0\).

For the reverse inclusion, suppose that \(\chi\in X_\sigma\) vanishes on
every \(\sigma\)-flag outside \(\mathcal K\).  Here heredity ensures that the
span of these flags is an ideal.  Every \(\sigma\)-flag \(F\) appearing in
the product of a \(\sigma\)-flag \(F'\) outside \(\mathcal K\) with another
\(\sigma\)-flag \(F''\) contains \(F'\) as an induced subflag.  The flag \(F\)
must therefore also lie outside \(\mathcal K\), since otherwise heredity would
imply that the underlying graph of \(F'\) lies in \(\mathcal K\).  Thus, the
span of the flags outside \(\mathcal K\) is precisely the ideal killed by
\(q_\sigma\).  The homomorphism \(\chi\) vanishes on this ideal and therefore
factors through \(q_\sigma\) as a positive homomorphism on the constrained
algebra.  Hence \(\chi\in Q_\sigma\), proving the reverse inclusion.
Equation~\eqref{eq:Q-sigma-intrinsic} also shows
that \(Q_\sigma\) is closed in \(X_\sigma\), since it is an intersection of
zero sets of continuous evaluation maps.  We write
\(Q_0:=Q_{\emptyset}\) for the corresponding space of constrained empty-type
positive homomorphisms.

\paragraph{Random extensions and the root-planting space.}
Assume that \(\sigma\) is \emph{non-degenerate} for \(\mathcal K\): that is,
there is some \(\phi_0\in Q_0\) for which
\(\phi_0(\langle\sigma\rangle_0)>0\).  By Razborov's random-extension result
recalled in \Cref{sec:background-downward}, every such \(\phi_0\) determines a
unique probability measure \(\Ext_\sigma(\phi_0)\) on \(X_\sigma\).  It is
obtained by choosing a uniformly random occurrence of \(\sigma\) in each
member of a graph sequence converging to \(\phi_0\) and passing to the
limit.  For a probability
measure~\(\mu\) on~\(X_\sigma\), its
support \(\operatorname{supp}(\mu)\) consists of those points
\(\chi\in X_\sigma\) for which every open neighborhood of \(\chi\) has positive
\(\mu\)-measure.
Define
\begin{equation}\label{eq:S-sigma}
  S_\sigma
  :=
  \overline{
    \bigcup_{\substack{\phi_0\in Q_0\\
                        \phi_0(\langle\sigma\rangle_0)>0}}
    \operatorname{supp}\bigl(\Ext_\sigma(\phi_0)\bigr)
  }
  \subseteq X_\sigma.
\end{equation}
Thus, to form \(S_\sigma\), we first collect the supports of
\(\Ext_\sigma(\phi_0)\) over all \(\phi_0\in Q_0\) for which \(\sigma\) has
positive density.  We then take the closure in \(X_\sigma\), adding every point
that can be approximated arbitrarily closely by points in those supports.

The support of each random-extension measure is contained in the constrained
space:
\[
  \operatorname{supp}\bigl(\Ext_\sigma(\phi_0)\bigr)
  \subseteq Q_\sigma
  \qquad
  \bigl(\phi_0\in Q_0,
        \phi_0(\langle\sigma\rangle_0)>0\bigr).
\]
Indeed, let \(F\) be a \(\sigma\)-flag outside \(\mathcal K\).  Applying the
downward operator to \([F]\) produces a non-negative multiple of
\([F|_\emptyset]\), whose underlying graph also lies outside \(\mathcal K\).
Every \(\phi_0\in Q_0\) evaluates \([F|_\emptyset]\) to zero.  Therefore, if~\(\phi\sim\Ext_\sigma(\phi_0)\), the random-extension
identity in Equation~\eqref{eq:ensemble} gives
\[
  \mathbb{E}[\phi([F])]=0.
\]
Since every positive homomorphism \(\phi'\) satisfies \(\phi'([F])\geq0\), it
follows that for \(\phi\sim\Ext_\sigma(\phi_0)\), we have
\(\phi([F])=0\) with probability one.  Applying this argument to
the countable collection of \(\sigma\)-flags outside \(\mathcal K\) shows that a positive
homomorphism sampled from \(\Ext_\sigma(\phi_0)\) belongs to \(Q_\sigma\) with
probability one.  Since \(Q_\sigma\) is closed, Equation~\eqref{eq:S-sigma} yields the
fundamental inclusion
\begin{equation}\label{eq:S-subset-Q}
  S_\sigma\subseteq Q_\sigma.
\end{equation}
We say that \((\mathcal G,\mathcal K,\sigma)\) is
\emph{root-plantable}\footnote{The vertices distinguished by the flag type
\(\sigma\) are often called the \emph{roots}, which explains the first part of
the term \emph{root-plantable}.  A finite \(\sigma\)-flag \((G,\theta)\)
records a profile of \(\sigma\)-flag densities relative to a specified embedding
\(\theta:\sigma\to G\), whose image consists of the labeled vertices.  To
\emph{plant} this profile in a larger graph \(H\in\mathcal K\), without
specifying labels in advance, is to produce many embeddings
\(\widehat\theta:\sigma\to H\) whose profiles of \(\sigma\)-flag densities approximate the
profile specified by \((G,\theta)\).  Here ``many'' means that these embeddings
form a fraction bounded away from zero of all embeddings of \(\sigma\) in
\(H\).  Consequently, a uniformly chosen occurrence of \(\sigma\) reproduces
the prescribed profile with non-vanishing probability.  The equality
\(S_\sigma=Q_\sigma\) says that every constrained \(\sigma\)-typed positive
homomorphism can be approximated by finite constructions of this kind.} when
\(S_\sigma=Q_\sigma\).

\begin{definition}[Quotient and ensemble orders]\label{def:two-constrained-orders}
For \(f,g\in\FlagAlg{\sigma}[\mathcal G]\), define
\[
  f\leq^{\mathrm{quot}}_{\mathcal K,\sigma}g
  \quad\Longleftrightarrow\quad
  \chi(f)\leq\chi(g)
  \quad\text{for every }\chi\in Q_\sigma,
\]
and
\[
\begin{aligned}
  f\leq^{\mathrm{ens}}_{\mathcal K,\sigma}g
  \quad\Longleftrightarrow\quad
  \Ext_\sigma(\phi_0)
  \bigl(\{\chi\in X_\sigma\mid \chi(f)\leq\chi(g)\}\bigr)=1
  \quad\text{for every }\phi_0\in Q_0\text{ with }
  \phi_0(\langle\sigma\rangle_0)>0.
\end{aligned}
\]
\end{definition}

The first is the ordinary semantic order in the constrained quotient algebra:
equivalently, every positive homomorphism
\(\psi:\FlagAlg{\sigma}[\mathcal K]\to\R\) satisfies
\(\psi(q_\sigma(f))\leq\psi(q_\sigma(g))\).  The second quantifies over each
\(\phi_0\in Q_0\) on its right-hand side.  For such a
\(\phi_0\), the measure \(\Ext_\sigma(\phi_0)\) is the probability distribution
on \(X_\sigma\) obtained by taking a sequence of finite graphs converging
to \(\phi_0\), choosing uniformly at random an embedding of \(\sigma\) into
each of them, using the embedded vertices as labels, and passing to the
limit.  Thus, the second display requires
that the random positive homomorphism \(\chi\sim\Ext_\sigma(\phi_0)\) satisfy
\(\chi(f)\leq\chi(g)\) with probability one.
For the \(H\)-free class, consisting of graphs containing no subgraph
isomorphic to \(H\),
\(\lean{forbidLE H f g}\) in our Lean formalization implements this ensemble order.  The corresponding
notions of non-negativity are \(0\leq^{\mathrm{quot}}_{\mathcal K,\sigma}f\) and
\(0\leq^{\mathrm{ens}}_{\mathcal K,\sigma}f\).

\begin{theorem}[Support-closure criterion]\label{thm:support-closure-summary}
Let \(\mathcal K\) be hereditary and let \(\sigma\) be non-degenerate for
\(\mathcal K\).  For every \(f,g\in\FlagAlg{\sigma}[\mathcal G]\),
\[
  f\leq^{\mathrm{quot}}_{\mathcal K,\sigma}g
  \quad\Longrightarrow\quad
  f\leq^{\mathrm{ens}}_{\mathcal K,\sigma}g.
\]
Moreover, the two orders agree for every pair \((f,g)\) if and only if the class
is root-plantable at \(\sigma\):
\[
  \left(
    \forall f,g,\quad
    f\leq^{\mathrm{quot}}_{\mathcal K,\sigma}g
    \ \Longleftrightarrow\
    f\leq^{\mathrm{ens}}_{\mathcal K,\sigma}g
  \right)
  \quad\Longleftrightarrow\quad
  S_\sigma=Q_\sigma.
\]
\end{theorem}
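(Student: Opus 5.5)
The plan is to reduce both orders to statements about which points of \(X_\sigma\) are tested. For \(h:=g-f\), continuity of evaluation makes the set \(U_h:=\{\chi\in X_\sigma\mid \chi(h)<0\}\) open in \(X_\sigma\). Here \(\chi(h)=\chi(g)-\chi(f)\) is a finite linear combination of flag evaluations, each continuous in the topology of pointwise convergence on \(\sigma\)-flags. The quotient order says \(U_h\cap Q_\sigma=\emptyset\).

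We first show that the ensemble order is equivalent to \(U_h\cap S_\sigma=\emptyset\). The ensemble order says \(\Ext_\sigma(\phi_0)(U_h)=0\) for every admissible \(\phi_0\in Q_0\). Since \(U_h\) is open, the standard support fact gives \(\mu(U)=0\iff U\cap\operatorname{supp}(\mu)=\emptyset\) for open \(U\). This uses that \(X_\sigma\) is compact metrizable (second countable), so the complement of the support is null. Hence the ensemble order says \(U_h\) misses every support in the union defining \(S_\sigma\). Because \(U_h\) is open, missing the union is the same as missing its closure. So
\[
  f\leq^{\mathrm{ens}}_{\mathcal K,\sigma}g \iff U_h\cap S_\sigma=\emptyset ,
  \qquad
  f\leq^{\mathrm{quot}}_{\mathcal K,\sigma}g \iff U_h\cap Q_\sigma=\emptyset .
\]

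With this reformulation, the first implication follows at once from the inclusion \(S_\sigma\subseteq Q_\sigma\) of \eqref{eq:S-subset-Q}. It also gives the direction \(S_\sigma=Q_\sigma\Rightarrow\) (the two orders agree), since the two conditions then coincide for every pair \((f,g)\).

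For the converse, assume \(S_\sigma\subsetneq Q_\sigma\) and pick \(\chi^\ast\in Q_\sigma\setminus S_\sigma\). Since \(S_\sigma\) is closed and the topology is generated by finitely many flag coordinates, there is a basic open neighbourhood
\[
  W=\{\chi\mid |\chi([F_i])-\chi^\ast([F_i])|<\varepsilon,\ i=1,\dots,m\}
\]
of \(\chi^\ast\) disjoint from \(S_\sigma\). We need a single algebra element \(h\) with \(\chi^\ast(h)<0\) and \(\chi(h)\ge0\) on \(S_\sigma\); the point is that the orders only test elements of the algebra, not arbitrary open sets. Set \(a_i:=\chi^\ast([F_i])\) and
\[
  h:=\sum_i\bigl([F_i]-a_i\mathbf 1_\sigma\bigr)^2-\varepsilon^2\,\mathbf 1_\sigma/m .
\]
Then \(\chi^\ast(h)=-\varepsilon^2/m<0\). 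Every \(\chi\notin W\) has some \(|\chi([F_i])-a_i|\ge\varepsilon\), hence \(\chi(h)\ge\varepsilon^2-\varepsilon^2/m\ge0\). Taking \(f:=h\) and \(g:=0\) swapped appropriately, i.e.\ comparing \(0\) with \(h\), we get \(0\leq^{\mathrm{ens}}h\), because \(U_h\subseteq W\) misses \(S_\sigma\), but not \(0\leq^{\mathrm{quot}}h\), because \(\chi^\ast\in Q_\sigma\cap U_h\). This contradicts agreement of the orders, which proves the remaining direction.

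The main obstacle is the measure-theoretic step: that \(\operatorname{supp}\Ext_\sigma(\phi_0)\) has full measure (needing second countability of \(X_\sigma\), which holds because there are countably many flags), and that \(\Ext_\sigma(\phi_0)\) is a Borel measure on \(X_\sigma\) itself. The separation by a polynomial is routine once one uses products and the unit of the algebra.
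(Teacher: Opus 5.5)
Your proof is correct, and it reaches the necessity direction by a different and more elementary route than the paper.

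\textbf{Reformulation and forward direction.} This part matches the paper. You rewrite the ensemble order as non-negativity of the continuous function $\chi\mapsto\chi(g-f)$ on $S_\sigma$, using the supports of the measures $\Ext_\sigma(\phi_0)$ and the fact that an open set missing a union also misses its closure. You rewrite the quotient order as non-negativity on $Q_\sigma$, and then conclude with $S_\sigma\subseteq Q_\sigma$. You are more explicit than the paper on one point: the paper only asserts that ``almost-sure non-negativity is non-negativity on the support,'' whereas you justify it. The support has full measure because $X_\sigma$ is second countable, which holds since there are countably many flags.

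\textbf{Necessity direction.} Here the two arguments genuinely diverge.
\begin{itemize}
\item The paper separates $\chi_*$ from the closed set $S_\sigma$ by an abstract continuous function, strictly positive on $S_\sigma$ and negative at $\chi_*$. It then uses Stone--Weierstrass to approximate that function uniformly by evaluation of a flag-algebra element.
\item You instead use that the topology on $X_\sigma$ is generated by the flag coordinates themselves. So a basic neighbourhood $W$ of $\chi_*$ that misses $S_\sigma$ involves only finitely many flags.
\item The explicit element $h=\sum_i([F_i]-a_i\mathbf 1_\sigma)^2-\varepsilon^2\mathbf 1_\sigma/m$ then separates directly. This uses only that each $\chi$ is a unital algebra homomorphism.
\end{itemize}

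\textbf{Comparison.} Your route avoids both Urysohn-type separation and Stone--Weierstrass. It also yields a concrete degree-two sum-of-squares witness, which shows that a failure of root-plantability is always detected by an inequality of this simple form; that is convenient for formalization. The paper's route is more conceptual: it packages the step as density of flag-algebra evaluations in $C(X_\sigma)$, a reusable principle, at the cost of heavier machinery.
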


The proof explains why the equality \(S_\sigma=Q_\sigma\) is exactly the
condition under which the two orders agree.  Evaluation of
\(g-f\) defines a continuous function
\(\chi\mapsto\chi(g-f)\) on \(X_\sigma\).  Almost-sure non-negativity under a
measure is equivalent to non-negativity on its support; continuity then shows
that non-negativity in the ensemble semantics is exactly non-negativity on
\(S_\sigma\).  Non-negativity in the quotient semantics is, by definition,
non-negativity on \(Q_\sigma\).
The inclusion in Equation~\eqref{eq:S-subset-Q} proves the forward implication, and
\(S_\sigma=Q_\sigma\) proves equivalence.

The necessity of root-plantability requires more than this observation.  If
\(S_\sigma\subsetneq Q_\sigma\), choose
\(\chi_*\in Q_\sigma\setminus S_\sigma\).  Topological separation gives a
continuous function that is strictly positive on \(S_\sigma\) and strictly
negative at \(\chi_*\).  The Stone--Weierstrass theorem then approximates this
function by evaluation of a flag-algebra element.  The resulting element is
non-negative in the ensemble semantics but not in the quotient semantics,
so the two orders
cannot agree for all elements.  Thus, \Cref{thm:support-closure-summary} is a completeness theorem:
quotient proofs are always sound for ensemble semantics, and they capture all
ensemble-valid typed inequalities exactly under root-plantability.

\paragraph{When root-plantability holds.}
The criterion in \Cref{thm:support-closure-summary} reduces semantic completeness to a structural problem about the
graph class.  Our main positive result gives a broad answer.  Given
\(G\in\mathcal K\), a vertex~\(v\in V(G)\), and a graph \(J\), write
\(G[v\mapsto J]\) for the graph obtained by replacing \(v\) by \(J\), keeping
the edges inside \(J\), and joining every vertex of \(J\) to every former
neighbor of \(v\).  Call \(\mathcal K\) \emph{blow-up-closed} if, for every
\(G\in\mathcal K\), \(v\in V(G)\), and \(N \in \mathbb{N}\), some graph \(J\) on exactly
\(N\) vertices satisfies \(G[v\mapsto J]\in\mathcal K\).  For a hereditary
class, this is equivalent to asking only for \(\lvert J\rvert\geq N\):
choose an induced \(N\)-vertex subgraph \(J'\) of \(J\), and observe that
\(G[v\mapsto J']\) is an induced subgraph of \(G[v\mapsto J]\).  The interior
graph \(J\) may depend on \(G\), \(v\), and \(N\); the definition requires the
existence of a suitable interior, not that every interior works.

\begin{theorem}[Blow-up closure implies root-plantability]
\label{thm:blowup-root-plantability-summary}
Every blow-up-closed hereditary graph class is root-plantable at every
nonempty flag type that is non-degenerate for the class.  The empty flag type is
root-plantable for every hereditary graph class, independently of blow-up
closure.  Consequently, quotient and ensemble orders agree at every
non-degenerate flag type in a blow-up-closed class.
\end{theorem}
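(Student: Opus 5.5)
The forward inclusion \(S_\sigma\subseteq Q_\sigma\) is already Equation~\eqref{eq:S-subset-Q}, so by \Cref{thm:support-closure-summary} it suffices to prove \(Q_\sigma\subseteq S_\sigma\). The final sentence of the statement then follows directly from \Cref{thm:support-closure-summary}.

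\textbf{Empty type.} I would dispose of the empty flag type first. Here \(Q_\emptyset=Q_0\), \(\langle\emptyset\rangle_0=\mathbf{1}\) has value \(1\) under every \(\phi_0\), and the random extension involves no choice of labels. By Equation~\eqref{eq:ensemble} with \(\llbracket\cdot\rrbracket_\emptyset\) the identity, \(\Ext_\emptyset(\phi_0)\) has expectation \(\phi_0(f)\) for every \(f\). Applying this to \(f\) and \(f^2\) gives \(\mathbb{E}[\phi(f)^2]=\mathbb{E}[\phi(f)]^2\), so \(\phi(f)=\phi_0(f)\) almost surely. Countably many flags then force \(\Ext_\emptyset(\phi_0)=\delta_{\phi_0}\). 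Hence \(S_\emptyset=\overline{Q_0}=Q_0\), using only that \(Q_0\) is closed; no blow-up closure is needed.

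\textbf{Nonempty type, plan.} Let \(\sigma\) be nonempty of size \(k\), non-degenerate, and let \(\chi\in Q_\sigma\). By \Cref{thm:convergent-hom}(b), applied in the constrained algebra \(\FlagAlg{\sigma}[\mathcal K]\) and pulled back along \(q_\sigma\), there are \(\sigma\)-flags \((G_n,\theta_n)\) with \(G_n\in\mathcal K\), \(|V(G_n)|=n\to\infty\), and \(p^{(G_n,\theta_n)}\to\chi\). Since \(S_\sigma\) is closed, it suffices to find, for each large \(n\), a point \(\chi_n\in S_\sigma\) with \(d(\chi_n,p^{(G_n,\theta_n)})\to0\) in a metric for the product topology. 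Equivalently, for each fixed test flag \(F\) of size \(m\), \(|\chi_n([F])-\den{F}{G_n,\theta_n}|=O_F(1/n)\).

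\textbf{Construction.} Fix \(n\) and write \(G=G_n\), \(\theta=\theta_n\). Apply blow-up closure once per vertex of \(G\), successively. At each step the current graph lies in \(\mathcal K\), and by heredity we may trim each interior to exactly \(M\) vertices. This yields \(H_M\in\mathcal K\) on \(nM\) vertices, with blobs \(B_u\) (\(u\in V(G)\)) such that two vertices in distinct blobs \(B_u,B_w\) are adjacent iff \(uw\in E(G)\). The interiors are arbitrary, but they only affect pairs inside a common blob. By compactness, pass to a subsequence \(M\to\infty\) along which \(H_M\) converges to some \(\phi_0^{(n)}\). This limit lies in \(Q_0\), because each \(H_M\in\mathcal K\) and \(\mathcal K\) is hereditary, so flags outside \(\mathcal K\) have density \(0\) in every \(H_M\). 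It also satisfies \(\phi_0^{(n)}(\langle\sigma\rangle_0)\ge k!\,\binom{n}{k}^{-1}\)-type positivity: picking one vertex from each blob \(B_{\theta(i)}\) gives an induced copy of \(\sigma\), and there are \(M^k\) such choices among \(\binom{nM}{k}\) sets.

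\textbf{Key estimate and main obstacle.} Consider embeddings \(\widehat\theta\) of \(\sigma\) into \(H_M\) with \(\widehat\theta(i)\in B_{\theta(i)}\) for each \(i\). These form a fraction at least \(c_n>0\) of all embeddings of \(\sigma\) into \(H_M\), uniformly in \(M\). For every such \(\widehat\theta\), I claim
\[
  \den{F}{H_M,\widehat\theta}=\den{F}{G,\theta}+O(m^2/n)+o_M(1).
\]
The reason is that a uniform \((m-k)\)-subset of \(H_M\) lands in \(m-k\) distinct blobs, outside the root blobs, except with probability \(O(m^2/n)\). On that event, the blob map is a uniform \((m-k)\)-subset of \(V(G)\setminus\operatorname{Im}\theta\), and the induced \(\sigma\)-flag coincides with the one in \(G\). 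This comparison, and making it uniform enough to transfer to the limit, is the step I expect to be the main obstacle. One must check that the random-extension limit, defined via sequences converging to \(\phi_0^{(n)}\), can be computed along this specific sequence \(H_M\). One must also check that the positive-probability set of good embeddings gives positive \(\Ext_\sigma(\phi_0^{(n)})\)-mass to every neighborhood of the \(O(1/n)\)-ball around \(p^{(G,\theta)}\).

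I would try to get both from Equation~\eqref{eq:ensemble}. Take a test \(f\) that is a product of squares \((\,[F]-\den{F}{G,\theta}\,)^2\), expanded at a large common size. Use \(\phi_0^{(n)}\bigl(\llbracket f\,[\text{indicator of good roots}]\rrbracket_\sigma\bigr)\) to bound the measure of the bad event. Here I would encode ``roots in the right blobs'' by an approximating typed flag, by the Stone--Weierstrass device of the previous theorem if needed. This yields a point \(\chi_n\) in the support within \(O(1/n)\) of \(p^{(G_n,\theta_n)}\), and letting \(n\to\infty\) gives \(\chi\in S_\sigma\).
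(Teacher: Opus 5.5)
Your reduction to $Q_\sigma\subseteq S_\sigma$ is correct, and so is your empty-type argument: the variance computation gives $\Ext_\emptyset(\phi_0)=\delta_{\phi_0}$, hence $S_\emptyset=\overline{Q_0}=Q_0$. Your nonempty-type construction is the paper's. You approximate $\chi$ by $\sigma$-flags $(G,\theta)$ with $G\in\mathcal K$ and blow up every vertex using blow-up closure. You observe that root embeddings landing in the fibres over $\theta(1),\dots,\theta(k)$ form a fraction at least $n^{-k}$ of all embeddings of $\sigma$ into $H_M$. You then use the distinct-fibre estimate to put the profile of each such embedding within $O(m^2/n)$ of $p^{(G,\theta)}$.

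The gap is the final step, which you flag but do not close. You need to pass from ``a $c_n$-fraction of root embeddings into $H_M$ have profile in $U$'' to $\Ext_\sigma(\phi_0^{(n)})(U)>0$, and the mechanism you sketch would not supply this. ``Roots in the right blobs'' is a property of an embedding into the particular finite graph $H_M$, not a function of the $\sigma$-typed profile. So no typed flag or flag-algebra element encodes it, and Stone--Weierstrass only approximates continuous functions on $X_\sigma$. Without such an indicator, moment bounds on $f=\sum_F([F]-\den{F}{G,\theta})^2$ cannot help. The good embeddings are a vanishing fraction, so $\mathbb{E}[\phi(f)]$ is governed by the bad ones, and Markov's inequality gives no bound below $1$ on the bad event.

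The missing idea is that $\Ext_\sigma(\phi_0^{(n)})$ is the weak limit of the finite random-root laws along your specific sequence; this follows from the uniqueness in \eqref{eq:ensemble}. Let $\mu_M$ be the law of $p^{(H_M,\widehat\theta)}\in[0,1]^{\mathcal F^\sigma}$ for $\widehat\theta$ uniform among embeddings of $\sigma$ into $H_M$. Double counting pairs (embedding, vertex set) gives exactly
\[
  \mathbb{E}_{\widehat\theta}\bigl[\den{F}{H_M,\widehat\theta}\bigr]
  =\frac{q_\sigma(F)\,\den{F|_\emptyset}{H_M}}{q_\sigma(\mathbf{1}_\sigma)\,\den{\langle\sigma\rangle_0}{H_M}} .
\]
Products of densities differ from multi-flag densities by $O(1/M)$. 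Hence every mixed moment of $\mu_M$ converges to the corresponding right-hand side of \eqref{eq:ensemble} for $\phi_0^{(n)}$.

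It follows that every subsequential weak limit of $(\mu_M)$ satisfies \eqref{eq:ensemble}. It is also supported on $X_\sigma$, because each finite profile satisfies the expansion identities exactly and multiplicativity up to $O(1/M)$. Uniqueness then forces $\mu_M\Rightarrow\Ext_\sigma(\phi_0^{(n)})$. This is precisely the paper's description of $\Ext_\sigma(\phi_0)$ as the limit of uniformly random root choices along any sequence converging to $\phi_0$.

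Now take a basic open set $V$ around $\chi$, defined by finitely many test flags and a tolerance. For $n$ large, your key estimate places every good profile in $V$ for all $M$. The portmanteau inequality for open sets then gives
\[
  \Ext_\sigma(\phi_0^{(n)})(V)\ \geq\ \liminf_M\mu_M(V)\ \geq\ c_n>0,
\]
so $V$ meets $\operatorname{supp}\Ext_\sigma(\phi_0^{(n)})$. Equivalently, you could apply Stone--Weierstrass to a continuous bump function of the profile that equals $1$ on a closed set containing the good profiles and vanishes off $V$, rather than to a root-location indicator. With this step supplied, your argument is the paper's proof.
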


The proof starts from a target \(\chi\in Q_\sigma\), represented by a sequence
of finite \(\sigma\)-flags whose underlying graphs lie in \(\mathcal K\).  For
each such \(\sigma\)-flag \((G,\theta)\), the proof applies the preceding
replacement construction once for each vertex of \(G\): the vertices
\(v\in V(G)\) are replaced one at a time by graphs \(J_v\), and at every
step blow-up closure supplies an interior for which the current graph
remains in \(\mathcal K\), so the final blow-up also lies in \(\mathcal K\).  The resulting blow-up has a natural projection back to \(G\): every
vertex of \(J_v\) is mapped to \(v\).  We call \(V(J_v)\), the inverse image of
\(v\) under this projection, the \emph{fiber over~\(v\)}.

The fibers over the labeled vertices \(\theta(i)\) are made large enough that a
uniformly chosen embedding of \(\sigma\) places each label \(i\) in the
corresponding fiber with probability bounded away from zero.  When a fixed
finite \(\sigma\)-flag is sampled from the blow-up, its sampled vertices lie in distinct
fibers with probability tending to one.  On this event, their adjacencies are
determined by the corresponding vertices of \(G\), so the internal structure
of the replacement graphs \(J_v\) does not affect the sampled \(\sigma\)-flag.

These estimates connect the finite blow-up construction to the definition of
\(S_\sigma\).  Let \(U\) be any open neighborhood of \(\chi\) in \(X_\sigma\).
Because \(X_\sigma\) has the topology of pointwise convergence on \(\sigma\)-flags,
membership in \(U\) can be ensured by approximating the values of finitely many
\(\sigma\)-flags within a prescribed tolerance.  Choose~\((G,\theta)\)
sufficiently far along the sequence representing \(\chi\), and then choose its
blow-up sufficiently large.  The construction ensures that a positive fraction
of the embeddings of \(\sigma\) into the blow-up produce \(\sigma\)-typed
density profiles lying in \(U\): the labeled vertices enter the designated
fibers with probability bounded away from zero, while the distinct-fiber
estimate preserves the required \(\sigma\)-flag densities.

Passing to a limit of these unlabeled blow-ups gives some \(\phi_0\in Q_0\) for
which
\[
  \Ext_\sigma(\phi_0)(U)>0.
\]
Therefore, \(U\) intersects \(\operatorname{supp}(\Ext_\sigma(\phi_0))\).  Since
every neighborhood \(U\) of \(\chi\) has this property, \(\chi\) lies in the
closure of the union of these supports, namely \(S_\sigma\).  Hence
\(Q_\sigma\subseteq S_\sigma\), while the opposite inclusion is
Equation~\eqref{eq:S-subset-Q}.

Independent blow-ups, in which every replacement graph \(J_v\) is an
independent set, cover all \(K_r\)-free classes
for \(r\geq3\), including triangle-free graphs.  Complete blow-ups, in which
every \(J_v\) is a clique, cover, for example, graphs whose connected
components are cliques.  Both are instances of
\Cref{thm:blowup-root-plantability-summary}.

\paragraph{Why the criterion is nontrivial.}
Heredity alone does not imply root-plantability.  Let \(\mathcal K\) be the
class of \(C_4\)-free graphs, those containing no \(C_4\) as a subgraph, and
let \(\sigma_1\) be the one-vertex type.
Every large \(C_4\)-free graph has \(o(n^2)\) edges, so a vertex chosen uniformly
at random has normalized degree tending to zero.  
Let~\(d\in\FlagAlg{\sigma_1}[\mathcal G]\) be the flag-algebra element represented by
the following two-vertex \(\sigma_1\)-flag:
\[
  d :=
  \begin{tikzpicture}[scale=0.75,baseline=-0.55ex]
    \draw[graph edge] (0,0) -- (1.2,0);
    \node[graph root] at (0,0) {};
    \node[graph vertex] at (1.2,0) {};
    \node[font=\scriptsize] at (0,-0.32) {\(1\)};
  \end{tikzpicture}
\]
The filled vertex carries label \(1\), and the open vertex is unlabeled.  Thus,
the density of \(d\) in a finite \(\sigma_1\)-flag is the normalized degree of
its labeled vertex: the probability that one additional uniformly sampled
vertex is adjacent to it.  The preceding observation implies that every
\(\chi\in S_{\sigma_1}\) satisfies \(\chi(d)=0\), so \(-d\) is non-negative
in the ensemble semantics.  On the other hand, arbitrarily large stars are
\(C_4\)-free: the star \(K_{1,t}\) consists of one vertex, its
\emph{center}, adjacent to \(t\) pairwise non-adjacent leaves, and contains
no cycle at all.  Label the center of \(K_{1,t}\) to obtain a
\(\sigma_1\)-flag, and let \(\chi_*\in Q_{\sigma_1}\) be the positive
homomorphism arising as the limit of this sequence of \(\sigma_1\)-flags.
Every leaf is adjacent to the center, so \(\chi_*(d)=1\).  Since
every point of \(S_{\sigma_1}\) evaluates \(d\) to zero,
\(\chi_*\notin S_{\sigma_1}\), and hence
\begin{equation}\label{eq:c4-strict-inclusion}
  S_{\sigma_1}\subsetneq Q_{\sigma_1}.
\end{equation}
Moreover, \(\chi_*(-d)=-1<0\), so \(-d\) fails to be non-negative in the
quotient semantics.
The example exhibits a genuine gap between the two typed orders: quotient
semantics permits a specially chosen vertex to carry the label, whereas
ensemble semantics obtains the labeled vertex by uniform random sampling.

The gap does not invalidate the final unlabeled density bounds of this
paper.  The
certificate pipeline consumes its typed inequalities only through the
downward operator, which preserves non-negativity in the ensemble
semantics
(\lean{downward\_\allowbreak preserve\_\allowbreak semanticCone},
\Cref{sec:formal-downward}).  Its conclusions land at the empty flag type,
which is root-plantable for every hereditary class by
\Cref{thm:blowup-root-plantability-summary}.  The final conversion
(\lean{generalizedTuranDensity\_\allowbreak le\_\allowbreak of\_\allowbreak forbidLE},
\Cref{sec:forbidden}) then turns these conclusions into classical
statements about Tur\'an densities.  Root-plantability is therefore not a
hypothesis of any of these results.  It concerns only the completeness of
the intermediate typed calculus, whether typed ensemble reasoning proves
exactly the typed inequalities of the constrained algebra.

\paragraph{Formalization and scope.}
The Lean formalization of the meta-theory presented in this section is included
in the public release of our flag-algebra development linked in
\Cref{sec:intro}, under the module \lean{LeanFlagAlgebras.MetaTheory}.  The
constrained space \(Q_\sigma\) in Equation~\eqref{eq:Q-sigma}, the root-planting space
\(S_\sigma\) in Equation~\eqref{eq:S-sigma}, and the condition
\(S_\sigma=Q_\sigma\) are formalized as \lean{Q$\sigma$}, \lean{S$\sigma$}, and
\lean{RootPlantable}, respectively.  The special cases
\(0\leq^{\mathrm{quot}}_{\mathcal K,\sigma}f\) and
\(0\leq^{\mathrm{ens}}_{\mathcal K,\sigma}f\) of the two orders in
\Cref{def:two-constrained-orders} are formalized as \lean{QuotientNonneg} and
\lean{EnsembleNonneg}, respectively; the general comparison \(f\leq g\) is
expressed by applying these predicates to \(g-f\).  At the theorem level,
\lean{quotient\_implies\_ensemble} formalizes the unconditional implication in
\Cref{thm:support-closure-summary}, while \lean{support\_criterion} formalizes
its characterization of when the two orders agree;
\lean{blowupClosed\_root\_plantable} and
\lean{heredClass\_emptyType\_rootPlantable} formalize, respectively, the
nonempty-type and empty-type claims of
\Cref{thm:blowup-root-plantability-summary}; and
\lean{c4free\_not\_rootPlantable}, together with
\lean{S$\sigma$\_subset\_Q$\sigma$}, formalizes the strict inclusion in
Equation~\eqref{eq:c4-strict-inclusion}.

These Lean files were generated by Claude Code from an already-developed
mathematical account and its theorem targets.  Lean checks the resulting proof
terms, and the released development builds without incomplete proofs.

This autoformalization was feasible because it extended, rather than rebuilt,
the manual formalization reported in the preceding sections.  The manual
library had already fixed the representations and semantic interfaces and
provided the ambient flag algebra, positive-homomorphism spaces, random
extensions, the downward operator, and the density and representation
theorems on which the meta-theory depends.  The generated meta-theory files
could therefore focus on the new constrained layer, beginning with the
forbidden ideal and constrained quotient, while reusing these established
foundations.

The present section isolates the core meta-theoretic consequences of
\lean{forbidLE}: quotient inequalities always imply ensemble inequalities,
root-plantability characterizes when the converse holds, and blow-up closure
is a broad sufficient condition for root-plantability.  A forthcoming paper
will give the complete mathematical and formal account and pursue three
extensions.  First, it will develop the finite planting criterion.  For every
fixed flag size and error tolerance, this criterion asks that each sufficiently
large finite \(\sigma\)-flag in the class admit a larger graph in the class
with a positive-density set of \(\sigma\)-embeddings that reproduce, to the
specified accuracy, all of its \(\sigma\)-flag densities up to that size.
This finite condition implies root-plantability.  Second, the forthcoming
work will study further graph classes for which quotient and ensemble
semantics agree or disagree.  A representative source of disagreement is a
pinning obstruction: every admissible random-extension measure forces the
density of a particular \(\sigma\)-flag to one value almost surely, while the
constrained quotient contains a positive homomorphism assigning it a
different value.  Third, it will develop relative ensemble semantics for
arbitrary sets of admissible unlabeled positive homomorphisms, allowing
non-hereditary limit constraints such as fixing the edge density.

\section{Open Design Questions}
\label{sec:unsure}

The formalization and end-to-end certificate proofs presented above are fully
checked in Lean.  This section instead concerns design questions whose broader
consequences cannot be settled by the present implementation alone: when
forbidden-graph constraints should enter the theory, which engineering costs
are specific to Lean, whether specification and computation should use
separate representations, and how far the current computational layer can
scale.  For each question, we distinguish what the present development
demonstrates from what would require alternative implementations or larger
computations to determine.

\paragraph{When should a forbidden-graph constraint enter the formalization?}
The predicate \lean{forbidLE} (\Cref{sec:forbidden}) keeps the ambient flag
algebra independent of the forbidden graph and imposes the constraint later,
when a semantic inequality is stated.  Razborov's universal-theory
presentation instead incorporates the constraint into the construction of
the flags, algebras, and positive homomorphisms.  Consequently, the two
approaches give different semantics to intermediate inequalities involving
labeled flags: the former restricts their interpretations, whereas the latter
removes inadmissible flags through the constrained algebra.  As
\Cref{thm:support-closure-summary} shows, this distinction disappears
precisely when the underlying hereditary class is root-plantable; in that
case, the two semantics validate the same typed inequalities.  Even without
root-plantability, every quotient inequality implies the corresponding
ensemble inequality.  This one-way implication is sufficient for
transferring the formal flag-algebra inequalities used here to the final
unlabeled Tur\'an bounds.

These semantic results do not by themselves determine which design is
preferable in a proof assistant.  Even when root-plantability ensures that the
two semantics validate the same inequalities, the implementations may have
different proof-engineering costs.  Deferring the constraint allows core
definitions and algebraic lemmas to be reused across forbidden graphs;
incorporating it at the outset may streamline arguments specific to a fixed
extremal problem.  Evaluating this tradeoff would require formalizing
comparable certificate proofs under both designs.

\paragraph{Which engineering obstacles are Lean-specific?}
\Cref{sec:obstacles} includes both obligations common to quotient-based
formalizations and difficulties tied more closely to Lean.  If flags are
represented as isomorphism classes, a construction first defined on labeled
graphs must be proved invariant under isomorphism before it can be applied to
flags.  The same well-definedness obligation would arise in any proof
assistant using this representation.  The remaining difficulties have causes
more closely tied to Lean.  Binding a particular \lean{Fintype}
instance is needed because Lean's elaborator may synthesize different instance
terms for different occurrences of the same carrier type.  The use of
\lean{HEq} reflects Lean's dependent type theory: pointwise-equal type families
need not be definitionally equal, so terms indexed by them cannot immediately
be compared using ordinary equality.
Other proof assistants handle finite enumeration, implicit arguments, and
dependent equality differently.  We therefore do not know which of these
Lean-specific proof-engineering techniques would be needed in a port to Rocq,
Agda, or Isabelle/HOL\@.  The underlying combinatorial identities and the
well-definedness obligations created by quotient representations would remain,
but their formal expression and proof cost could differ.

\paragraph{Should specification and computation use separate representations?}
Our development maintains two representations
(\Cref{sec:abstract,sec:reflection}).  The specification layer uses
quotients and definitions close to the mathematics, allowing the abstract
theory to ignore choices of graph representatives.  The reflection layer uses
the concrete \lean{Sym2Graph} representation to evaluate finite densities,
downward coefficients, and multiplication tables.  Adequacy theorems connect
the two, turning these evaluations into facts about the specification-level
definitions.  The density and multiplication facts used throughout our case
studies are obtained in this way.

This separation lets us keep the abstract development simple while retaining
an efficient executable representation.  It also creates a substantial
bridge of adequacy proofs.  One alternative is to state the theory directly
on a concrete, canonical representation: this would reduce the bridge, but
representation and canonicalization choices would become visible in the
statements and proofs of the abstract theory.  We have not carried this
alternative through comparable case studies, so we do not know which
architecture minimizes the total proof and computation cost.

\paragraph{Can the computational layer scale beyond five vertices?}
Our seven end-to-end certificates (\Cref{sec:results}) expand their
calculations over graphs on three, four, or five vertices.  At this scale,
the optimized enumeration of \Cref{sec:reflection-generation}, size-inductive,
deduplicated, and pruned under the forbidden graph, keeps the required
families of finite identities manageable.  For the Erd\H{o}s pentagon
certificate, whose calculations expand over five-vertex graphs, the three
flag types require 1800, 672, and 360 pair-density theorems, respectively.
Lean checks all 2832 identities by \lean{decide +kernel} and then uses the
resulting theorems in the complete certificate proof.  This confirms that
certificates at the five-vertex scale do go through the current reflection
and automation layers, though only just: the Erd\H{o}s pentagon is the
largest certificate carried entirely on the kernel-checked path.

Going substantially beyond five vertices remains a major engineering
challenge.  As the number of vertices grows, so do the families of flags,
the density and multiplication identities generated from their pairs, and the
expressions that the final algebraic proof must normalize.  The cost of
kernel-checked evaluation grows with them: already at five vertices, the two
largest certificates, with 10,332 and 8,172 pair-density theorems, could not
be checked by \lean{decide +kernel} in acceptable time and fall back to
\lean{native\_decide} (\Cref{tab:compile-times}), and eliminating that
fallback is part of the same scaling problem.  Scaling to larger
certificates will therefore likely require more efficient executable graph
representations and counting algorithms in the reflection layer, as well as
improvements to theorem generation and algebraic normalization.  This
ceiling is not unique to formal verification: certificate generation itself
faces the same combinatorial growth, and Flagmatic is in practice limited to
expanding its calculations in the basis of graphs on about seven or eight
vertices.  The formal pipeline reaches its
limit earlier, and we have not yet determined which of its costs dominates
or how much of the gap such optimizations can close.

\section{Related Work}
\label{sec:related}

\paragraph{Formalizations of combinatorics.}
Substantial combinatorial arguments have already been formalized in Lean.
Dillies and Mehta formalized Szemer\'edi's regularity lemma and the
qualitative form of Roth's theorem~\cite{dillies2022szemeredi}, and Mehta
the Kruskal--Katona theorem~\cite{mehta2022kruskal}.  Subercaseaux et
al.~\cite{subercaseaux2024hexagon} verified the empty-hexagon number against
a SAT certificate.  The collaborative PFR project~\cite{pfr2023} formalized
the Gowers--Green--Manners--Tao proof~\cite{gowers2025marton} of the
polynomial Freiman--Ruzsa conjecture in characteristic two.  These developments lean on heavy
computation or on a large supporting library.  A flag-algebra argument needs
both at once: it combines quotient and measure-theoretic semantics with
thousands of exact finite-density computations and an externally computed
semidefinite certificate.

\paragraph{Proof by reflection in proof assistants.}
Proof by reflection is a classical technique in the Rocq (previously called
Coq) community; Chlipala
gives an extended treatment~\cite{chlipala2013cpdt}, and Lean supports
verified computation through tactics such as \lean{decide} and
\lean{norm\_num}.  Cohen et al.~\cite{cohen2013refinements} develop
a systematic methodology for building efficiently computable representations
alongside their abstract counterparts in Rocq.  Our
\lean{Sym2Graph}/\lean{SimpleGraph} pair with its adequacy theorems follows
the same philosophy, adapted to Lean~4's type-class and instance-search
mechanism.

\paragraph{Computer-assisted flag algebra arguments.}
The Flagmatic software~\cite{vaughan2013flagmatic} automates the computation
of exact and approximate flag-algebra bounds, but it does not produce
proof-assistant proof terms; Razborov's survey~\cite{razborov2013flag}
organizes many early applications of the method.  Our work adds a
proof-producing reconstruction stage: certificate data proposes a
calculation, while Lean verifies the finite identities, the
positive-semidefinite matrices, and the final combinatorial statement.

\paragraph{A concurrent formalization of local flag algebras.}
Concurrently with and independently of our work, Davey, Hurley, de Joannis
de Verclos, Kang, and Volec formalized in Lean~4 the theory of \emph{local
flag algebras} introduced in their accompanying paper~\cite{davey2026local}
and applied in its companion~\cite{davey2026strong}.
The local calculus normalizes densities by the maximum degree $\Delta$
rather than by the number of vertices, so that the method still yields
nontrivial bounds for sparse graphs of small maximum degree.  Under this
normalization, no quotient is taken: their algebra is defined directly on
formal combinations of isomorphism classes, evaluation at a finite graph is
multiplicative only up to an error bounded by a constant multiple of
$1/\Delta$, and positivity is again captured by a semantic cone.  Their formalization
machine-certifies the new bounds of both papers, on pentagon counts in
triangle-free graphs~\cite{davey2026local} and on the strong chromatic
index~\cite{davey2026strong}.  Independently,
their design arrives at the reflection pattern of \Cref{sec:reflection}: a
noncomputable specification mirrored by a computable Boolean-adjacency
representation with adequacy lemmas.  Our development formalizes the
classical theory instead (the quotient algebra, positive homomorphisms,
and the ensemble semantics) and is organized as a reusable library and
proof-generating pipeline rather than around individual bounds, so the two
developments are complementary in both scope and design.

\paragraph{A concurrent formalization of computational flag algebras.}
A further concurrent development is Spiegel's \emph{computational flag
algebras} in Lean~\cite{spiegel2026computational}.  It is an independent
ongoing work on the Lean formalization of the flag algebras, which focuses
on carrying out flag-algebra inequality proofs fully and automatically
inside Lean, without relying on an external tool or solver.

\paragraph{SDP certificate verification.}
The verification of SDP certificates has been studied for sum-of-squares
proofs, which relax polynomial optimization problems to semidefinite
programs, as studied by Parrilo and
Sturmfels~\cite{parrilo2003minimizing}.  Harrison~\cite{harrison2007verifying}
verifies sum-of-squares certificates for polynomial inequalities in HOL
Light, with numerical methods finding the certificates and the proof
assistant checking them.  Morrison's \lean{sos} tactic, recently added
under the official \lean{leanprover} organization~\cite{morrison2026sos},
implements Harrison's procedure in Lean~4: it finds a certificate with an
external semidefinite solver, rounds it to rational form, and verifies it
through an LDL$^\top$ factorization.  We use the same
discovery-versus-verification separation, applied to flag-algebra
certificates rather than polynomial inequalities.

\paragraph{Graph limits and Turán densities.}
The mathematical foundation of flag algebras (convergent graph sequences,
positive homomorphisms, the graphon limit) is developed in depth by
Lov\'asz and Szegedy and by Lov\'asz~\cite{lovasz2006limits,lovasz2012large}.
Freer has independently formalized a
substantial graphon theory in Lean~4, including cut distance, weak
regularity,
counting and inverse-counting lemmas, compactness, and the equivalence of two
notions of convergence~\cite{freer2026graphon}.  Our development works with
convergent flag sequences, positive homomorphisms, and random extensions
rather than graphons; connecting the two formalizations into a common formal
account of the flag-algebra/graphon correspondence remains future work.

\section{Conclusion}
\label{sec:conclusion}

We have developed a Lean~4 formalization of Razborov's flag algebra method for
simple graphs.  Its specification layer represents flags as quotients modulo
isomorphism, constructs the quotient algebra of density-expansion relations,
defines positive homomorphisms and the semantic non-negativity cone, and
introduces the ensemble semantic order, which imposes forbidden-subgraph
constraints only when algebra elements are compared.  On top of this
specification, an executable reflection layer computes finite flag data and
connects these computations to the abstract definitions through adequacy
theorems, while a certificate-to-proof compiler uses the resulting identities
to reconstruct Lean proofs from externally generated flag-algebra
certificates.  We evaluated the compiler on seven upper-bound certificates:
Mantel's theorem, bounds on the \(P_3\)- and \(C_4\)-densities of triangle-free
graphs, the Erd\H{o}s pentagon theorem, and edge-density bounds for
\(K_4\)-free, \(K_5\)-free, and \(C_5\)-free graphs.  The formalization also
supports proofs written directly against its mathematical interface: we used
it to prove the matching lower bounds for Mantel's theorem and the
Erd\H{o}s pentagon theorem, as well as two inequalities of Goodman.  Finally,
we summarized a separately developed mathematical meta-theory comparing the
ensemble and quotient semantics.  Non-negativity in the quotient semantics
always implies non-negativity in the ensemble semantics, while the converse
holds for every typed expression exactly when the constrained class is
root-plantable at that flag type.

We do not claim to have found the uniquely right way to formalize flag
algebras.  Our aim has instead been to make the development's design choices
and their consequences explicit: the engineering obstacles they create, the
executable and automation layers needed to address those obstacles, the proof
patterns that may transfer to other formalizations, and the questions that
remain open.  Some engineering decisions reflect what is computationally
feasible in the current implementation rather than a principled design
preference.  Most notably, the two largest certificates still use
\lean{native\_decide}; we have stated this trust tradeoff explicitly.

We expect the architecture to support additional parts of Razborov's
flag-algebra calculus, especially the differential structure developed in
Section~4.3 of his original paper~\cite{razborov2007flag}.  Formalizing this
structure would test how much of the existing quotient-level semantics and
finite-counting infrastructure can be reused beyond the certificate arguments
considered here.  We have also not tested whether the architecture extends to
richer combinatorial structures, such as hypergraphs, directed graphs, and
oriented graphs.  Other directions for future work include building a formal
bridge between our positive-homomorphism semantics and graphon theory.  In
particular, the representation of every positive homomorphism by a
graphon would connect our development more directly to the Lov\'asz
theory~\cite{lovasz2012large} and to the existing Lean graphon
formalization~\cite{freer2026graphon}.  Automating the discovery of SDP
certificates within Lean, rather than importing them from external solvers, is
a longer-term goal.  More immediately, extending the \lean{decide +kernel}
path from the five certificates it currently covers to the two certificates
that still use \lean{native\_decide}, or replacing the latter with a formally
verified external checker, would eliminate the remaining dependency on Lean's
native compiler.  This dependency is a feasibility constraint of the current
encoding, not a logical limitation of the method.

\subsection*{Use of AI Assistance}
The layers of this development used AI assistance to different degrees.  The
specification and reflection layers of \Cref{sec:abstract,sec:reflection},
including the quotient constructions, their interface theorems, and the
adequacy proofs, were formalized manually by the authors.  In the
certificate-to-proof compiler of \Cref{sec:flagmatic}, most of the required
metaprogramming, including the elaboration-time generation commands, the custom
tactics, and the certificate compiler, was developed with the help of
Claude Code.  Finally, the mathematical results of the meta-theory
summarized in \Cref{sec:metatheory} were obtained with the help of GPT~Pro,
and their Lean development was autoformalized entirely by Claude Code.  We
also used Codex and Claude Code to assist with writing and revising this paper.

\section*{Acknowledgments}
We would like to thank Jineon Baek, Taeyoung Kim, Hyunwoo Lee, Joonkyung Lee, Christoph Spiegel, 
and Jan Volec for helpful discussions on Razborov's flag algebras and their Lean formalization.  
We are also grateful to
Ross Kang and Sidharth Hariharan for encouraging us to work harder to remove
\lean{native\_decide} in our Lean formalization.
This work was supported by the National Research Foundation of Korea (NRF)
grant funded by the Korean Government (MSIT) (No.~RS-2023-00279680) and by the
Institute for Basic Science (IBS-R029-C1).

\commentout{
\appendix
\section{Lessons for Future Formalizations}
\label{sec:lessons}

This section distills six lessons from concrete design choices, successful
proof patterns, and failed approaches in the development.  The first four
concern representation and automation, the fifth concerns constrained
semantics, and the sixth concerns reporting the trusted base.  They are
case-study observations intended to inform future formalizations, not claims
that the same choices are optimal in every proof assistant or application.

\paragraph{Lesson 1 (Keeping quotient objects as quotients localizes
representation costs).}
Flag algebras are naturally quotient objects: graphs up to isomorphism, and
flag-algebra elements modulo the density-expansion relations.  A
formalization could pick canonical representatives in the specification
layer to make more operations directly executable.  Our specification instead
keeps the quotients, while the reflection layer uses the concrete
\lean{Sym2Graph} representation.  Density adequacy theorems connect the two,
factoring through the count equality
\lean{labeledGraphListCount\_eq\_sym2InducedSubgraphListCount}.

This design concentrates quotient and representation-specific subgraph
reasoning in the adequacy proofs instead of repeating it in thousands of
generated density lemmas.  The adequacy proofs are themselves substantial,
so this is a tradeoff rather than a free abstraction.  The lesson supported by
our case study is narrow: when computation is supplied by a concrete
refinement of a quotient specification, an adequacy boundary can localize
representation costs while leaving later algebraic statements independent of
the chosen representation.

\paragraph{Lesson 2 (Computable is not the same as usable).}
The naive reflective design is to leave the abstract quotient definitions in
place, give the relevant finite types \lean{Fintype} instances, and ask
\lean{native\_decide} to evaluate density goals by unfolding
\lean{flagDensity}.  This is computable in principle but not viable for the
pentagon proof: each density check repeatedly constructs abstract subgraphs,
compares quotient classes by searching for label-preserving isomorphisms, and
carries proof-valued fields through the evaluator.  The graphs are small, but
the computation induced by their specification is not.

What worked was to treat reflection as verified compilation.  The abstract
definitions remain the specification; the reflected isomorphism and density
procedures form a compilation target with deliberate optimizations (edge-count
pruning before permutation enumeration; type-aware permutation that fixes
labeled vertices and enumerates only the remainder).  Adequacy theorems certify
that the executable procedures preserve the specification's meaning.  The
lesson is that decidability alone says little about feasibility: the data
representation presented to the evaluator can determine whether reflection is
practical.  A separate verified implementation can improve that representation
without replacing the mathematical specification.

\paragraph{Lesson 3 (Generated names can carry metadata for automation).}
Every flag and flag-algebra constant in the development is named according
to the scheme \lean{Flag\_n\_k\_m\_i} and \lean{FlagAlgebra\_n\_k\_m\_i},
where $n$ is the vertex count, $k$ and $m$ identify the flag type, and $i$
is a canonical index.  The convention is not merely organizational.  While
Lean processes a proof script and turns tactic output into typed terms, the
sorting tactics inspect each constant's syntactic \lean{Name} and parse its
trailing index.  The tactic \lean{ac\_sort\_at\_pipeline} uses that index as a
sort key without unfolding the flag or comparing graph structures.

A recorded counterexample from the pentagon proof is a commented-out call that
reads \texttt{-- sort\_at -- takes more than 10 minutes}.
The generic \lean{sort\_at}, which uses \lean{simp} without index guidance,
does not finish promptly on that expression.  The index-guided
\lean{ac\_sort\_at\_pipeline} completes the same step by exploiting the name
encoding to bypass that search.  Names are not inherently better than user
attributes or a custom expression type.  The narrower lesson is that, for
generated declarations with a stable naming scheme, names can provide a
lightweight metadata channel for proof automation.

\paragraph{Lesson 4 (External computation is a candidate generator, not a
proof).}
The rational certificate produced by the SDP workflow, and the Python
compiler script that turns it into Lean source, are both unverified.  Neither is
trusted for the logical validity of a theorem accepted by Lean.  The
script emits flags, matrices, factorizations, and a target inequality;
Lean then checks the resulting PSD obligations, density and multiplication
identities, algebraic calculation, and final statement.  Candidate data that
do not support that statement cause proof construction to fail.

The PSD data is checked separately: the rational matrix and its claimed
$LDL^\top$ factorization are written explicitly, and Lean verifies both the
non-negativity of the diagonal and the matrix identity before using the
quadratic form.  There remains a separate fidelity boundary: Lean does not
prove that the generated declarations faithfully transcribe the source
certificate, so a translation error could produce a valid theorem different
from the intended one.  The lesson is to separate these questions.  External
tools can propose witnesses and proof structure without becoming authorities
for logical validity, while fidelity to their source data must be audited or
verified separately (\Cref{sec:flagmatic-trust}).

\paragraph{Lesson 5 (Moving a constraint can change intermediate semantics).}
Razborov's presentation fixes a universal theory of forbidden graphs before
constructing the constrained algebra.  Our development instead keeps an
ambient algebra and introduces the forbidden-graph assumption when an
inequality is stated.  This deferral makes unconstrained algebraic lemmas
reusable, and the transfer theorem recovers the final unlabeled Tur\'an bounds
used in our case studies.

The change is not merely organizational.  As \Cref{sec:forbidden} explains,
the two approaches need not give the same meaning to intermediate statements
about labeled flags.  The lesson is not that constraints should always be
deferred, but that moving a constraint across an abstraction boundary can
change the meaning of such statements.  A formalization should define the
resulting comparison explicitly and prove the relationship it needs rather
than treating deferral as a semantics-preserving refactoring.

\paragraph{Lesson 6 (Report the trust boundary for each proof obligation).}
Our formalization uses two evaluators with different trusted-computing-base
profiles:
\lean{native\_decide} compiles and runs a decidable proposition using Lean's
native compiler, whereas \lean{decide +kernel} performs the evaluation inside
the kernel.  This difference matters even though both commands close Lean
goals.

Five of the seven certificate proofs, the pentagon among them, discharge
every finite computation by \lean{decide +kernel}; for the pentagon this
means 15 batch equalities checked by the kernel, from which Lean derives the
2,832 individual pair-density theorems by projection.  In the $K_5$-free and
$C_5$-free edge-density proofs, kernel evaluation of the generated
computations is not practical at the current scale, so those files use
\lean{native\_decide}; their SDP matrix equalities are small enough for
kernel evaluation and remain discharged by \lean{decide +kernel}.

This asymmetry is a feasibility outcome, not a general recommendation about
where trust should lie.  Reporting it lets a reader distinguish the
kernel-checked PSD step from the bulk density computations that additionally
trust the native compiler.  More generally, a trust profile should identify
the mechanism used for each class of obligation, even when performance rather
than principle determined the placement.
}

\bibliographystyle{ACM-Reference-Format}

\appendix
\section{Compilation Times}
\label{sec:compile-times}

Table~\ref{tab:compile-times} reports end-to-end compilation time and peak
memory for all seven compiler examples of \Cref{sec:results}, compiled with
Lean~4 (v4.27.0) on an
Intel Core i7-14700K (20 cores) with 64\,GB of RAM under Windows~11.  Each
measurement covers the complete source file: flag enumeration, pair-density and
multiplication lemma generation, PSD certificate checking, the objective
expansion, and the main theorem.  The generated files differ only in how their
bridging lemmas are discharged.  Under \lean{decide +kernel} (the
\lean{flagGen.kernelDecide} option), the file introduces no compiled-evaluation
axiom; under \lean{native\_decide}, it is faster but additionally depends on the
\lean{Lean.ofReduceBool} and \lean{Lean.trustCompiler} axioms.  We report both
modes for the five examples where kernel evaluation is tractable; for the two
largest, the $K_5$- and $C_5$-free edge-density examples, only
\lean{native\_decide} was measured.

Each figure is the mean over five runs of a single-file compilation against the
prebuilt library, with the sample standard deviation; memory is the peak
resident set, taken as the maximum over the five runs.  Wall-clock times are
highly repeatable (standard deviation below $0.5\%$ of the mean in every
configuration except the Erd\H{o}s pentagon under kernel evaluation, at
$3.9\%$).

\begin{table}[h]
  \centering
  \caption{End-to-end Lean compilation time (mean of five runs, $\pm$ sample
    standard deviation) and peak resident memory (maximum over five runs) for
    all seven examples, under \lean{decide +kernel} and \lean{native\_decide}.
    ``Lemmas'' is the total count of generated \lean{flagDensity}$_2$
    lemmas.  A dash marks a configuration not measured under kernel evaluation.}
  \label{tab:compile-times}
  \setlength{\tabcolsep}{4.5pt}
  \begin{tabular}{lccr rr rr}
    \toprule
    & & & & \multicolumn{2}{c}{\lean{decide +kernel}}
        & \multicolumn{2}{c}{\lean{native\_decide}} \\
    \cmidrule(lr){5-6}\cmidrule(lr){7-8}
    Case & Forbid & $N$ & Lemmas
      & Time (s) & RAM (GB) & Time (s) & RAM (GB) \\
    \midrule
    Mantel (edge density)              & $K_3$ & 3 &    15 & $14.1\pm0.3$   &  2.4 & $11.5\pm0.1$  & 2.0 \\
    $P_3$ density                      & $K_3$ & 3 &    15 & $13.3\pm0.1$   &  2.4 & $11.1\pm0.0$  & 2.0 \\
    $C_4$ density                      & $K_3$ & 4 &   210 & $74.4\pm0.1$   &  8.2 & $19.8\pm0.2$  & 2.3 \\
    Edge density                       & $K_4$ & 4 &   390 & $133.3\pm0.3$  & 11.2 & $27.9\pm0.1$  & 2.5 \\
    Erd\H{o}s pentagon ($C_5$ density) & $K_3$ & 5 &  2832 & $2786\pm108$   & 42.3 & $123.4\pm0.5$ & 4.2 \\
    Edge density                       & $K_5$ & 5 & 10332 & ---            & ---  & $342.6\pm0.6$ & 8.1 \\
    Edge density                       & $C_5$ & 5 &  8172 & ---            & ---  & $539.6\pm0.5$ & 8.6 \\
    \bottomrule
  \end{tabular}
\end{table}

\end{document}